\title{Choreographies for Reactive Programming}
\author[1]{Marco Carbone} 
\author[2]{Fabrizio Montesi} 
\author[3]{Hugo {Torres Vieira}} 
\affil[1]{IT University of Copenhagen, Copenhagen, Denmark}
\affil[2]{University of Southern Denmark, Odense, Denmark}
\affil[3]{IMT School for Advanced Studies Lucca, Lucca, Italy}
\date{}
\newtheorem{example}{Example}[section]
\newtheorem{proposition}{Proposition}[section]
\newtheorem{theorem}{Theorem}[section]
\newtheorem{lemma}{Lemma}[section]
\newtheorem{corollary}{Corollary}[section]
\newcommand{\m}[1]{\mathsf{#1}}
\newcommand{\pp}{\,|\,}
\newcommand{\deff}{\triangleq}
\newcommand{\parop}{\,|\,}
\newcommand{\chin}{\&}
\newcommand{\buildqueues}[2]{\mathit{queues}(#1,#2)}
\newcommand{\queues}[3]{\mathit{fill}(#1,#2,#3)}
\newcommand{\lbox}[1]{[#1]}
\newcommand{\inl}{{\mathtt{inl}}}
\newcommand{\inr}{{\mathtt{inr}}}
\newcommand{\recx}[1]{\recvar{#1}}
\newcommand{\zero}{\mathbf{0}}
\newcommand{\msgout}[1]{{!}(#1)}
\newcommand{\msgin}[1]{{?}(#1)}
\newcommand{\projc}[4]{#1\! \downarrow_{#2,#3,#4}}
\newcommand{\projb}[1]{[\![ #1 ]\!]}
\newcommand{\nub}{{\bf \nu }}
\newcommand{\lto}[1]{\ \mathrel{\stackrel{{\;\;#1\;\;}}{\mbox{\rightarrowfill}}}\ }
\newcommand{\ltow}[1]{\mathrel{\stackrel{{\;\;#1\;\;}}{\Longrightarrow}}}
\newcommand{\lfrom}[1]{\mathrel{\stackrel{{\;\;#1\;\;}}{\mbox{\leftarrowfill}}}}
\newcommand{\subst}[2]{\{{}^{#1}\! / \!{}_{#2}\}}
\newcommand{\role}[1]{\mathsf{#1}}
\newcommand{\gcom}[3]{\role{#1}\!\!\!\lto{#2}\!\!\!\role{#3}}
\newcommand{\gchoice}[5]{\role{#1}\!\!\!\lto{#2}\!\!\!\role{#3}(#4,#5)}
\newcommand{\recvar}[1]{\mathbf{#1}}
\newcommand{\gend}{\mathbf{end}}
\newcommand{\lab}{\ell}
\newcommand{\roleport}[2]{\role{#1}.{#2}}
\newcommand{\dbinder}[4] {   \role{#2}.{#3}\lfrom{#1} {#4} }
\newcommand{\lbinder}[3] {   {#1} \,=\, {#3}\,\langle\,{#2} }
\newcommand{\lbinderp}[2] {   {#1} = {#2}}
\newcommand{\fbinder}[2]{ #1\leftarrow #2 }
\newcommand{\roleas}[2]{\role{#1}\!=\!#2}
\newcommand{\interface}[2]{[{#1}\,\rangle\,{#2}]}
\newcommand{\chorboxb}[3]{\interface{#1}{#2} \{ #3 \}}
\newcommand{\chorbox}[6]{\interface{#1}{#2} \{ #6; #5; #3; #4\}}
\newcommand{\labout}[3]{#1!#2\langle#3\rangle}
\newcommand{\labinp}[3]{#1?#2\langle#3\rangle}
\newcommand{\fbinderl}[2]{ #1 \ll #2 }
\newcommand{\fbinderr}[2]{ #1 \gg #2 }
\newcommand{\mapping}{\gamma}
\newcommand{\tout}[3]{#1!#2. #3}
\newcommand{\tinp}[3]{#1?#2. #3}
\newcommand{\tchoice}[3]{#1 \oplus (#2, #3)}
\newcommand{\tbranch}[3]{#1\, \& (#2, #3)}
\newcommand{\tend}{\gend}
\newcommand{\chot}{{\mathtt{ChoT}}}
\newcommand{\conforms}{\asymp}
\newcommand{\join}{\Join} 
\newcommand{\rename}[4]{#2_{#3,#4}(#1)}
\newcommand{\subt}{{\;<:\;}}
\newcommand{\recenv}{\Gamma}
\newcommand{\labenv}{\Delta}
\newcommand{\absL}{{\mathcal{L}}}
\newcommand{\abs}[1]{\mathit{abs}(#1)}
\newcommand{\absv}{\eta}
\newcommand{\til}{\tilde}
\newcommand{\eval}{\downarrow}
\newcommand{\rname}[1]{\ensuremath{\m{#1}}}
\newcommand{\dom}{\m{dom}}
\begin{document}

\maketitle
\begin{abstract}
  Modular programming is a cornerstone in software development,
  as it allows to build complex systems from the assembly of simpler
  components, and support reusability and substitution principles. In a
  distributed setting, component assembly is supported by
  communication that is often required to follow a prescribed protocol
  of interaction. In this paper, we present a language for the modular
  development of distributed systems, where the assembly of components
  is supported by a choreography that specifies the communication
  protocol. Our language allows to separate component behaviour, given
  in terms of reactive data ports, and choreographies, specified as
  first class entities. This allows us to consider reusability and
  substitution principles for both components and choreographies. We
  show how our model can be compiled into a more operational
  perspective in a provably-correct way, and we present a typing
  discipline that addresses communication safety and progress of
  systems, where a notion of substitutability naturally arises. 
\end{abstract}

\section{Introduction}
\label{sec:introduction}
In Component-Based Software Engineering (CBSE), software is built by
composing loosely-coupled components. The hallmark of CBSE is
reusability: the same component can be taken ``off the shelf'' and
reused in many different systems, as long as it is used accordingly to
its interface \cite{M68}.

Recently, CBSE is experiencing a renaissance. One reason is that it
adapts well to the complexity of modern computing paradigms, like
cloud computing, where building software whose components can be
deployed on separate computers is an advantage. The extreme of this
method is the emerging development paradigm of microservices, where
all components are autonomous and reusable services (the
microservices) that communicate through message passing
\cite{DGLMMMS16}.  Another reason is that programming techniques such
as reactive programming, where computation is performed in response to
newly available data, are becoming mainstream and make easier the
development of responsive components (e.g., as in services or
graphical user interfaces).

In settings where components exchange messages, e.g., microservices,
communications are expected to follow some predefined
protocols. Protocols are typically expressed in terms of some
choreography, an ``Alice and Bob'' description that prescribes which
communications should take place among participants, and in which
order. Choreography specifications can be found in the Web Services
Choreography Description Language by the W3C \cite{wscdl} and in the
Business Process Modelling Notation by the Object Management Group
\cite{bpmn}. Message Sequence Charts can also be seen as early
choreography models \cite{MSC}.

To implement its role in a protocol, a component must implement a
certain sequence of I/O actions.  A direct way of achieving this is to
use a language that natively supports sequencing I/O, like a workflow
language (e.g., the Business Process Execution Language \cite{BPEL})
or languages with actor- or process calculus-like constructs (e.g.,
Erlang \cite{AVW93} or Jolie \cite{MGZ14}).  Then, we can check that
the sequence of I/O actions for each protocol used by a program
follows the correct order \cite{HYC16}.

Unfortunately, this way of proceeding hinders reusability, in that a
component can then be used only in environments that accept exactly
the sequence of I/O performed by it. This makes the whole method
depend on how future protocols are designed.  For example, assume
that we design a protocol where a component receives a number and then
must output its square root exactly once. What if, tomorrow, we need
to use this component in an context that requires computing a
square root twice?

Reactive programming addresses this kind of problems by defining behaviour in response to external stimulus. For example, we can design a component that outputs the square root of a received number whenever requested. Thus reusability is better.
However, here we lose the clear connection to the notion of protocol, since reactive components can be ``too wild''. What if the designer actually cares about the fact that the component for computing the square root is invoked exactly once or twice (maybe for resource reasons)? Enforcing this kind of constraints usually requires obscure bookkeeping or side-effects, making programming error-prone and verification challenging.

In this paper, we attempt at having the best of both worlds. We propose a new 
language model, called Governed Components (GC), where reactive programming is married to 
choreographic specifications of communication protocols.
We believe that our results represent a first fundamental step towards merging the flexibility 
of reactive programming with the necessity of generating provably-correct I/O behaviour in 
communicating systems.

In GC, the computation performed by components is defined in the reactive style, using 
binders that dynamically produce results as soon as they get the input data that they need.
The key novelty is then in how components can be composed: a composition of components is 
always associated to a protocol, given as a choreography, that governs the flow of communications 
among the components. This means that, among all the possible reactions supported by the 
composed components, only those that are allowed by the protocol are actually executed.
The composition of some components is itself a component which can be used in further compositions.
We study the applicability of GC by formally defining a compiler from GC to a
model of concurrent processes with standard I/O primitives. The compiler translates the protocols 
used in compositions to a distributed process implementation, illustrating how GC can be used in 
practice. We prove that the compiler preserves the intended semantics of components, through an 
operational correspondence result.

Thanks to the marriage of reactive programming and protocol specifications, GC supports 
a new interesting substitution principle that is not supported by previous models based on 
processes (like \cite{HYC16}): a component may be used in compositions under different protocols, 
as long as the reactions supported by the component are enough to implement its part in the 
protocols.
This principle improves reusability in different directions.
One example is abstraction from message ordering: if a component needs two values to perform a 
computation, it can be used with any protocol that provides them without caring about the order in 
which the protocol will make them arrive.
Another example is abstraction from the number of reactions: if a component can perform a 
computation in reaction to some data inputs, then it can be used with different protocols that 
require such computation different numbers of times (e.g., zero, exactly two, or an unbounded 
number of times).
We first present our model and informally illustrate the semantics with a series of examples. Then, we define a typing discipline that is sound with respect to the substitution principle. In other words, typing 
guarantees that each component provides \emph{at least} enough reactive behaviour as needed by the 
protocol that it participates in, which we use to prove that well-typed component systems enjoy 
progress (never get stuck). From our operational correspondence result, it follows that also 
concurrent processes compiled from well-typed GC programs never get stuck.
We end our development by presenting a preliminary formal investigation of a sound subtyping relation for our typing, which captures part of the sound substitutions in our model.
\section{Language Preview}
\label{sec:preview}
We informally introduce our language by revisiting the Buyer-Seller-Shipper
example (BSS)~\cite{HYC08,HYC16}. This example consists of a 
$\role{Buyer}$ sending a purchase request for some item to a 
$\role{Seller}$. $\role{Seller}$ reacts by sending back the price of
the item. Based on the received price, $\role{Buyer}$ either accepts or
rejects the offer. In the first case, both $\role{Seller}$ and a
third-party delivery service $\role{Shipper}$ are notified by
$\role{Buyer}$ which will end the transaction by sending its credit
card details to $\role{Seller}$ and its delivery address to
$\role{Shipper}$. If $\role{Buyer}$ rejects the offer, the protocol
ends.

Choreographically, the behaviour of the BSS protocol can be written as
follows:
\vspace{-6pt}
\begin{displaymath}
  G_{\text{BSS}}=
  \begin{array}{llll}
    & \gcom{Buyer}{prod}{Seller};\\
    & \gcom{Seller}{price}{Buyer};\\
    & \gcom{Buyer} {decision}{Seller, Shipper}
      \left(
      \begin{array}{ll}
        \gcom {Buyer}{cc}{Seller};\gcom {Buyer}{dst}{Shipper},\;
        \tend
      \end{array}
      \right)
      \vspace{-6pt}
  \end{array}
\end{displaymath}
The protocol above describes the expected interactions between roles
$\role{Buyer}$, $\role{Seller}$ and $\role{Shipper}$. For example,
$\gcom{Buyer}{prod}{Seller}$ says that $\role{Buyer}$ must send a
message labeled as $prod$ to $\role{Seller}$. In the last line,
$\role{Buyer}$ can take the decision whether to accept or not the
purchase. Here, the protocol branches to two different subprotocols:
it either proceeds with the payment or terminates.

If we were to implement a system ruled by the protocol above, we may
use off-the-shelf modules implementing the various roles, provided
that they comply with the BSS protocol. For example, $\role{Buyer}$
could be implemented by the following component $C_{\role{Buyer}}$,
where the $\role{Buyer}$ role annotation is a mere graphical friendly annotation 
since our components are anonymous for the purpose of reuse:
\vspace{-6pt}
\begin{center}
  \includegraphics[scale=.35]{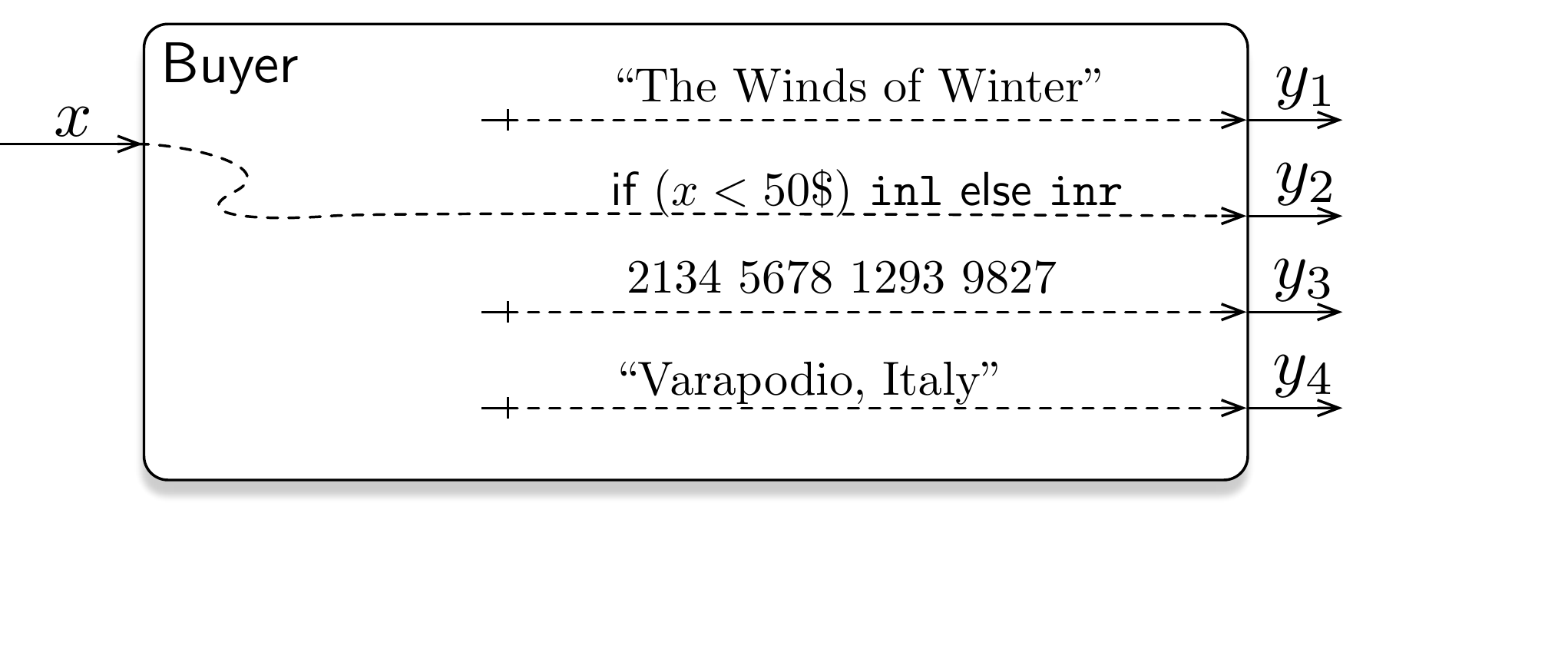}
\end{center}
\vspace{-32pt}
The one above is an informal but intuitive graphic representation of
the formal syntax which will be introduced in the next section. 
The component has an {\em interface}, denoted by
$\interface {x}{y_1, y_2,y_3,y_4}$, specifying that it can always
receive values on the {\em input port} $x$ and may output values on
{\em output ports} $y_1$, $y_2$, $y_3$, and $y_4$. {\em Local binders}
(dashed lines) define how such values are provided. For example, the local binder
$ \lbinderp{y_4} {\text{``Varapodio, Italy''}}$ says that output
variable $y_4$ is constantly able to output the value (address)
``Varapodio, Italy'' and $\lbinderp{y_2}{\textsf{if }(x<50)\  \inl \textsf{ else } \inr}$ is able to output a
choice decision based on the value received on input variable $x$,
where $\inl$ and $\inr$ indicate a left or a right choice. In the
case of $y_2$, the component is able to output a decision for each value received on $x$.

Similarly, we could implement $\role{Seller}$ and $\role{Shipper}$
with the components $C_{\role{Seller}}$ and $C_{\role{Shipper}}$:\label{example:zero}
\vspace{-6pt}
\begin{displaymath}
  \begin{array}{cc}
    \includegraphics[scale=.35]{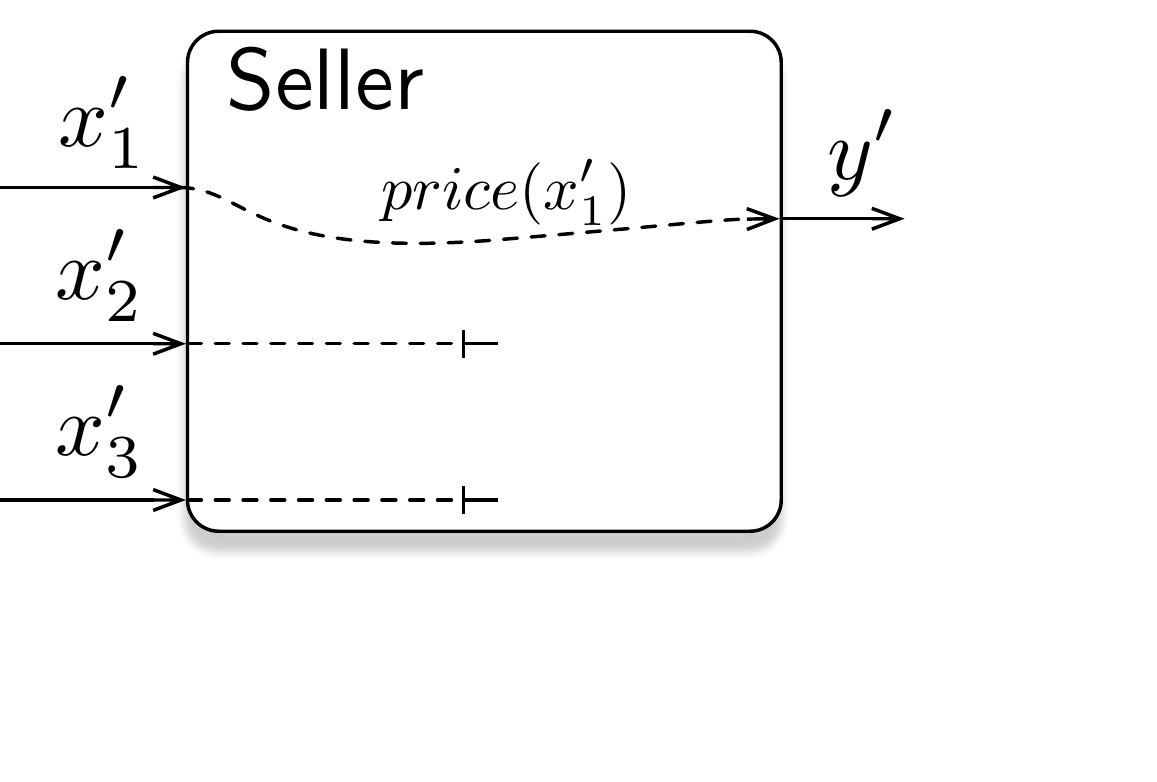}
    \qquad&\qquad
            \includegraphics[scale=.35]{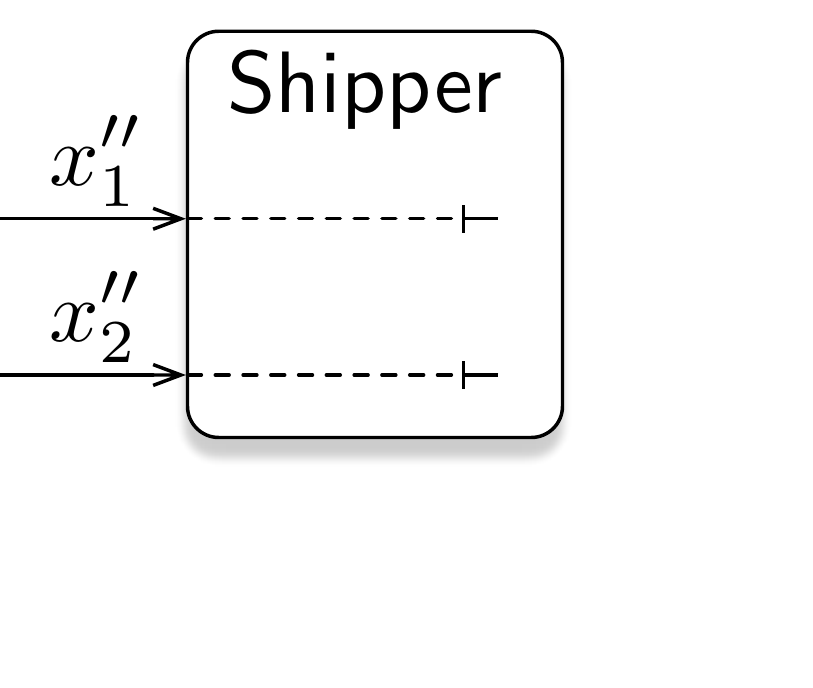}
  \end{array}
\end{displaymath}
\\[-36pt]
\noindent {}$\role{Seller}$ is a component that can always input on the
three ports $x_1'$, $x_2'$, and $x_3'$ and, for every message received
on $x_1'$ it can output on $y'$ the value $price(x_1')$, where $price$ is
a local function. $\role{Shipper}$, instead, only specifies the input receptiveness on $x_1''$
and $x_2''$.

The three components seen above are called {\em base
  components}. Using the BSS protocol, we can assembly them together,
obtaining the following {\em composite component}:
%

\label{example:one}
\includegraphics[scale=.35]{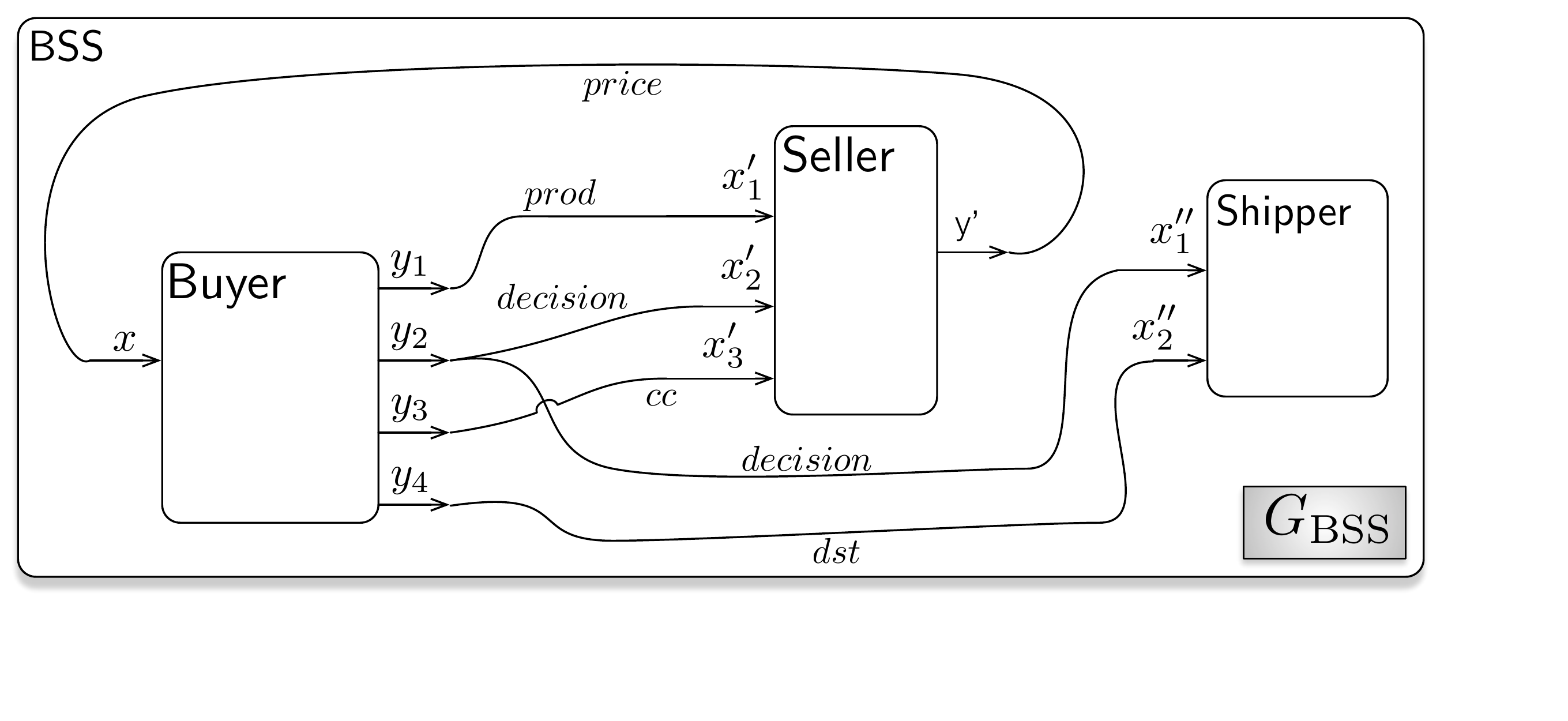}\vspace{-22pt}\\
The composite component above has three main ingredients: the protocol
$G_{\text{BSS}}$ that governs the internal communications, the
subcomponents $C_{\role{Buyer}}$, $C_{\role{Seller}}$, and
$C_{\role{Shipper}}$ that implement the roles in the protocol, and {\em
  connection binders} (full lines). The latter link outputs ports of a component to
input ports of other components. For example, the binder
$\dbinder{price}{Buyer}{x}{\roleport{Seller}{y'}}$ connects
$\role{Seller}$'s output port $y'$ to $\role{Buyer}$'s input port $x$:
here is how we ensure that the price given by $\role{Seller}$ reaches
$\role{Buyer}$, while decoupling the choreography specification from the actual
ports used in the implementation.


Now, suppose that we wish to replace the base component implementing
$\role{Seller}$ with a composite component that contains a 
 $\role{Sales}$ department and a $\role{Bank}$:
\\
{\vspace{-16pt}
  \includegraphics[scale=.37]{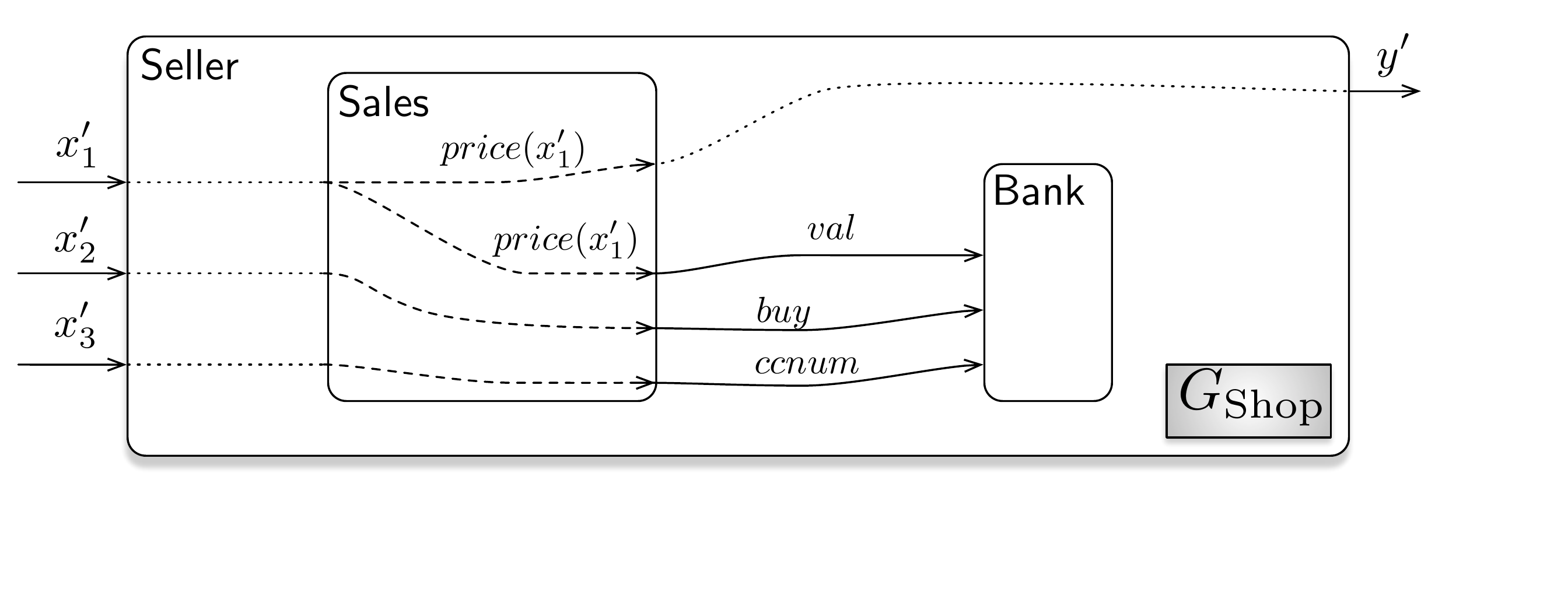}
}\vspace{-6pt}
 
\noindent The composite component above differs from the composite
component 
previously seen. First, it has a non-empty interface identical to that
of $C_{\role{Seller}}$. 
Second, it has a special role ($\role{Sales}$) that, apart from interacting
in the internal choreography, also
deals with messages originating/directed from/to the external interface. This is done through
{\em forwarders} (dotted lines) that link the interface of
$\role{Sales}$ with the external interface, yielding encapsulation. For example, one
of $\role{Sales}$'s input ports is linked to the interface variable
$x_1'$ meaning that any external request for a product is
immediately forwarded. Similarly, any output on $\role{Sales}$'s top
output port is forwarded to the output port $y'$. 
We remark that $\role{Sales}$ specifies an input port that is actually used
in two local binders, one directed to the output port $y'$ while the other 
is directed to message $\mathit{val}$.
The other
role $\role{Bank}$ is implemented by a component $C_{\role{Bank}}$
which we leave unspecified.
%
%
The component is governed by the protocol $G_{\role{Shop}}$, defined
as:
\vspace{-6pt}
\begin{displaymath}
\gchoice{Sales}{buy}{Bank}
{\gcom{Sales}{val}{Bank};\gcom{Sales}{ccnum}{Bank}}{ \tend}
\end{displaymath}
\vspace{-18pt}

\section{A Language for Modular Choreographies}
\label{sec:language}
%
We now move to the formal presentation of the syntax and semantics of Governed Components (GC).
%
%
The syntax of terms is displayed in \cref{fig:gc_syntax}.
\begin{figure}[t]
$$
\begin{array}{c}
  \begin{array}{l@{\qquad}lr@{\quad}ll@{\qquad}l}
    \text{Components}
    & C & ::= & \chorboxb {\tilde x}{\tilde y}{L} &  \text{base}\\
            && \pp & \chorbox {\tilde x}{\tilde y}{D}{\role{r}[F]}{R}{G}\ &
\text{composite}\\[1ex]
	\text{Local binders} & L & ::=  &
                \lbinderp{y} {\mathit{f}(\tilde{x})}\\
                && \pp & L,L
                & 
    \\[1ex]
    \text{Protocols} & G &::=  & \gcom p \lab{\tilde  q};G  & \text{communication}\\
            && \pp     & \gchoice p\lab{ \tilde q} {G}{G} & \text{choice}\\
            && \pp     & \mu\recvar X.G & \text{recursion}\\
            && \pp     & \recvar X & \text{recursion variable}\\
            && \pp     & \gend & \text{termination}
    \\[1ex]
    \text{Role assignments} & R & ::= & \roleas{p}{C} \\ && \pp & R,R &
    \\[1ex]
    \text{Connection binders} & D & ::= &
             \dbinder{\lab}{p}{x}
            {\roleport{q}{y}}  &
            \\
            && \pp & D,D
               \\[1ex]
    \text{Forwarders} & F & ::=  &
                \fbinder zw \\ && \pp  &  F,F
                & 
  \end{array}
\end{array}
$$
\caption{Governed Components, Syntax}
\label{fig:gc_syntax}
\end{figure}
Programs in GC are components, denoted $C$. Components communicate over the network by using
ports, ranged over by $x,y,u,z,w$. A component is either a base component or a composite
component. In both base and composite components, the definition of the component consists of two
parts: an interface $\interface{\til x}{\til y}$, which defines a series of input ports $\til
x$---used for receiving values from other components---and a series of output ports $\til y$---used
for sending values to other components; and an implementation (in curly brackets, $\{\cdots\}$),
which determines how ports are going to be used and how internal computation is performed.

A base component term $\chorboxb {\tilde x}{\tilde y}{L}$ denotes a component with interface
$\interface{\til x}{\til y}$ and an implementation that consists solely of some local binders $L$. A
local binder $\lbinderp{y}{\mathit{f}(\tilde{x})}$ is a reactive construct, in the sense that it
performs a computation in response to incoming messages on the ports $\til x$.
Intuitively, whenever all of its parameters can be instantiated by receiving
values on its input ports $\til x$, the value computed by function $f$ will be sent through the 
output port $y$.
We abstract from how functions are defined, and assume that they can be 
computed when supplied with the required parameters ($\til x$).

A composite component term
$\chorbox {\tilde x}{\tilde y}{D}{\role{r}[F]}{R}{G}$ denotes a
component 
obtained by composing other
components. We explain each subterm separately.
The protocol $G$ prescribes how the internal components are going to interact.
A protocol is a global description of communications among roles, ranged over by
$\role p,\role q,\role r$. In a communication term $\gcom p \lab{\tilde  q};G$, role $\role p$
communicates some value to the roles $\til{\role q}$ (we assume $\til{\role q}$ to be
nonempty) and then proceeds as protocol $G$. In term $\gchoice p\lab{ \tilde q} {G_1}{G_2}$, role
$\role p$ communicates to roles $\til{\role q}$ its choice of proceeding either according to
protocol $G_1$ or $G_2$. All communications are labelled with a label $\lab$, which uniquely
identifies the communication step in the protocol.
Terms $\mu\recvar X.G$ and $\recvar X$ are for, respectively, defining recursion variables and
their invocation. Term $\gend$ is the terminated protocol.

Given a protocol $G$, we need to define the implementation for each role in the protocol. This is
done in the subterm $R$. Specifically, $\roleas{p}{C}$ assigns the component $C$ as
the implementation for role $\role p$.
Once both protocol and implementation of each role are given, we 
define how each communication in the protocol is going to be
realised by the implementing components.  This is covered by the
connection binders $D$.  A connection binder
$\dbinder{\lab}{p}{x}{\roleport{q}{y}}$ connects the input port $x$ of
(the component implementing) role $\role p$ with the output port $y$
of (the component implementing) role $\role q$.  The label $\lab$ in
the connection binder declares which communication in the protocol of
the composite component will be realised by the message exchange
enacted by the binder.  This will be key to defining the dynamics of
composite components. The intuition is that when the protocol of a
composite component prescribes a communication with label $\lab$, the
composite component will enact the communication according to the
specification of the connection binder with the same label: by taking
a message available from the output port of the role on the right to
the input port of the role on the left of the binder.

The subterms that we described so far for composite components are used to define internal
behaviour. The last subterm---$\role{r}[F]$---instead, serves the purpose of defining
externally-observable behaviour. Namely, it allows the internal component $\role r$ to
interact with the external context of the composite component through the ports of the
latter. This is obtained by defining appropriate forwarders $F$, of the form $\fbinder zw$.
The syntax can be used to specify two kinds of forwarders that can either forward inputs or
outputs, depending on the port names of the composite component. An input forwarder binds an
input port of the internal component $\role r$, say $x$, to an external input port of the composite
component, say $x'$, written $\fbinder x{x'}$. This means that all incoming messages at $x'$ from
the outer context of the composite component will be made available to the sub-component
implementing role $\role r$ at its input port $x$.
Dually, an output forwarder binds an output port, say $y$, of the internal component to an output
port of the composite component, say $y'$, written $\fbinder {y'}{y}$. This makes all output
messages from the sub-component sent through $y$ available to the outer context of the composite
component through $y'$. Notice that the flow of messages in a forwarder is always from right to
left ($\fbinder{}{}$).
In the remainder, we abstract from ordering in $L$, $R$, $D$, and $F$ (implicit exchange).

\subsection{Semantics}

We give an operational semantics for GC in terms of a labelled transition system
(LTS). We write $C \lto\lambda C'$ when there is a transition from $C$ to $C'$
with label $\lambda$.
Labels can denote: (i) the output of a value $v$ through a port $y$, written ${y}!{v}$; (ii) the 
input of a value $v$ through a port $x$, written ${x}?{v}$; and (iii) an internal move, written 
$\tau$. The syntax of labels is thus: $\lambda ::= {y}!{v} \mid {x}?{v} \mid \tau$.
We present the rules that define $\lto{}$ in two parts,
respectively for base and composite components.
\subsubsection{Base components.}
\begin{figure}[t]
$$
    \begin{array}{c}
        \infer[\m{OutBase}]
        {
        \chorboxb {\tilde x}{\tilde y}{L}
        \lto{y!{v}{}}
        \chorboxb {\tilde x}{\tilde y}{L'}
        }
        {
        L
        \lto{y!{v}{}{}}
        L'
        & y \in \til y
        }
\qquad
        \infer[\m{InpBase}]
        {
        \chorboxb {\tilde x}{\tilde y}{L}
        \lto{x?v}
        \chorboxb {\tilde x}{\tilde y}{L'}
        }
        { L \lto{x?v} L' & x \in \til x }
\\[1.5ex]
           \infer[\m{LConst}]
            {
            \lbinder y{\cdot} {\mathit{f}()}
            \lto{{y}!{v}}
            \lbinder y{\cdot}{\mathit{f}()}}
        {f() \eval v}
\qquad
            \infer[\m{LOut}]
            {
            \lbinder y{\sigma,{\til\sigma}} {\mathit{f}(\til{x})}
            \lto{{y}!{v}}
            \lbinder y{\til \sigma}{\mathit{f}(\tilde{x})}
            }
        {
			\{\til x\} {=} \dom(\sigma) & f(\sigma(\til x)) \eval v
        }
\\[1.5ex]
\infer[\m{LInpNew}]
{
\lbinder y{\til\sigma}{f(\tilde{x})}
\lto{{x}?{v}}
\lbinder y{\til\sigma,\{x \mapsto v\}} {f(\tilde{x})}
}
{
x \in \bigcap_{\sigma_i \in \til \sigma}\dom(\sigma_i)
&
x \in \til x
}
\\[1.5ex]
	\infer[\m{LInpUpd}]
        {
        \lbinder y{\til\sigma_1,\sigma,\til\sigma_2} {f(\tilde{x})}
       \lto{{x}?{v}}
        \lbinder y{\til\sigma_1,\sigma[x \mapsto v],\til\sigma_2} {f(\tilde{x})}
        }
        {
		x \in \bigcap_{\sigma_i \in \til \sigma_1}\dom(\sigma_i)
		&
		x \not\in \dom(\sigma)
		&
		x \in \til x
        }
\\[1.5ex]
\infer[\m{LInpDisc}]
{
	\lbinder y{\til\sigma} {f(\tilde{x})}
       \lto{{x}?{v}}
    \lbinder y{\til\sigma} {f(\tilde{x})}
}
{
	x \not\in \til x
}
\\[1.5ex]
\infer[\m{LOutLift}]
       {
       L_1, L_2
       \lto{{y}!{v}}
       L_1', L_2
       }
       {
	L_1 \lto{{y}!{v}}L_1'}
\qquad
       \infer[\m{LInpList}]
       {
       L_1, L_2
       \lto{{x}?v}
       L_1', L_2'
       }
       {
	L_1 \lto{{x}?v}L_1'
\qquad
	L_2 \lto{{x}?v}L_2'}
    \end{array}
$$
  \caption{GC, semantics of base components.}
  \label{fig:gc_semantics_base}
\end{figure}
The rules defining the semantics of base components are displayed in \cref{fig:gc_semantics_base}.
Rule \rname{OutBase} states that if the local binders $L$ in a base component can produce a value 
$v$ for one of the output ports of the components $\til y$, then the component performs the 
corresponding output.
Rule \rname{InpBase} is similar, but models inputs instead. 

The other rules, whose names start with \rname{L}, define the
semantics of local binders, which are used by base components to
perform computation.  The key idea is that a local binder
$\lbinderp{y}{f(\til x)}$ can generate an output value for $y$ when a
value for each $x$ in $\til x$ has been received, as required by $f$.
Since we may need to receive on more than one input port before we can
compute $f$, we augment the syntax of local binders to runtime queues
of value stores: $L ::= \lbinder{y}{\til \sigma}{f(\til x)} \mid
L,L$. A store $\sigma$ is a partial mapping from ports to values,
which we use to store incoming messages until we have enough values to
compute $f$. In the remainder, we use $\lbinderp{y}{f(\til x)}$ in
programs as a shortcut for $\lbinder{y}{\cdot}{f(\til x)}$, where
$\cdot$ is the empty sequence.  We write $f(\til v) \eval v$ for
``computing $f$ by instantiating its parameters with the values
$\til v$ yields the value $v$'', abstracting from the concrete
definition of function evaluation ($\eval$).

Rule \rname{LConst} is the rule for binders that require no inputs (the function has no 
parameters), and thus can always output the value computed by the related function.
Rule \rname{LOut} is the output rule for local binders with functions that require parameters (we 
assume $\til x$ nonempty in this rule, to distinguish from rule \rname{LConst}). It 
states that if all parameters ($\til x$) of the function $f$ in the binder are instantiated by the 
first store available in the queue, then we can output the value computed by the function.

Rules \rname{LInpNew} and \rname{LInpUpd} receive a value for an input
port used by a local binder. Rule \rname{LInpNew} is used for creating
a new store (all current stores already have a value for the port on
which we receive). Note that $\til\sigma$ may be empty, so this could
be the first store that we are creating in the queue of the binder.
Specifically, we write $\{x \mapsto v\}$ for the store with a single
mapping, from $x$ to $v$.  Rule \rname{LInpUpd} deals with receiving a
value in a store that already exists. We check that all stores before
the one that we are going to update already have a value for the port
$x$ that we are receiving from
($x \in \bigcap_{\sigma_i \in \til \sigma_1} \dom(\sigma_i)$), and
that the store $\sigma$ that we are updating ($[x \mapsto v]$) does
not have a value for $x$ yet ($x \not\in \dom(\sigma)$)---this gives
us the first store in the queue without a value for $x$.  Rule
\rname{LInpDis} discards inputs on ports that are not needed by a
local binder.

Rule \rname{LOutLift} lifts outputs to groups of binders (if a binder can output a value, so can 
the group of binders). Rule \rname{LInpList} states that whenever we can receive a value for an 
input port, all binders are allowed to react to it.

\begin{example}
  Let us consider the base component $C_{\role{Seller}}$ from
  page~\pageref{example:zero}. Clearly, by applying rules
  $\m{LInpNew}$ and $\m{InpBase}$, a new message is added to the queue
  of $x_1'$. Then, we can use rules $\m{LOut}$ and $\m{OutBase}$ to
  output on $y'$.
\end{example}

\subsubsection{Composite Components.}
%
\begin{figure*}[t]
$$
    \begin{array}{c}
{
        \infer[\m{OutChor}]
        {\chorbox{\tilde x}{\tilde y}{\tilde D}{\role{r}[\tilde F]}{\roleas{p}{C},\tilde R}{G} 
\lto{\tau}
        \chorbox{\tilde x}{\tilde y}{\tilde D}{\role{r}[\tilde F]}{ \roleas{p}{C'},\tilde R}{G'}}
      {C \lto{{u}!{v}{}} C' \qquad 
      \tilde D = \dbinder{\lab}{q}{z} {\roleport{p}{u}}, \tilde D'
        \qquad
        G \lto{\labout{\role p}{\lab}{v}} G'
        }
        }
        \\\\
        {
        \infer[\m{InpChor}]
        {\chorbox{\tilde x}{\tilde y}{\tilde D}{\role{r}[\tilde F]}{ \roleas{q}{C}, \tilde R}{G} 
\lto{\tau}
        \chorbox{\tilde x}{\tilde y}{\tilde D}{\role{r}[\tilde F]}{\roleas{q}{C'}, \tilde R}{G'}}
        {
        C \lto{{z}?v} C' \qquad
        \tilde D = \dbinder{\lab}{q}{z} {\roleport{p}{u}}, \tilde D' \qquad
        G \lto{\labinp{\role q}{\lab}{v}} G'
}        }
\\\\
               \infer[\m{Internal}]
       {\chorbox{\tilde x}{\tilde y}{\tilde D}{\role{r}[\tilde F]}{\roleas{s}{C}, \tilde R}{ G}
       \lto{\tau}
       \chorbox{\tilde x}{\tilde y}{\tilde D}{\role{r}[\tilde F]}{ \roleas{s}{C'}, \tilde R}{G}}
       {C \lto{\tau} C'}
\\\\
       {
              \infer[\m{OutComp}]
       {
       \chorbox {\tilde x}{\tilde y}{\tilde D}{\role{r}[\tilde F]}{\roleas{r}{C}, \tilde R}{G}
       \lto{{y}!{v}{}}
       \chorbox {\tilde x}{\tilde y}{\tilde D}{\role{r}[\tilde F]}{\roleas{r}{C'}, \tilde R}{G}
       }
       {
       C\lto{{z}!{v}{}{}} C'
       \qquad
       \tilde F = \fbinder {y}z , \tilde F'
       \qquad
       y \in \til y
       }}
        \\\\
{
        \infer[\m{InpComp}]
        {\chorbox{\tilde x}{\tilde y}{\tilde D}{\role{r}[\tilde F]}{\roleas{r}{C},\tilde R}{G} 
\lto{{x}?v}
        \chorbox{\tilde x}{\tilde y}{\tilde D}{\role{r}[\tilde F]}{ \roleas{r}{C'},\tilde R}{G}}
        { C \lto{{z}?v} C'  \qquad \tilde F = \fbinder z{x}, \tilde F' \qquad x \in \til x
	}
        }
    \end{array}
$$
  \caption{GC, semantics of composite components.}
  \label{fig:gc_semantics_composite}
\end{figure*}
We now move to the semantics of composite components. The rules are displayed in 
\cref{fig:gc_semantics_composite}.

The key rules are \rname{OutChor} and \rname{InpChor}, which allow internal 
components (those inside of the composite) to interact.
Rule \rname{OutChor} allows an internal component to send a message to another internal component.
In the first premise, we require that the sender component can output some value, $v$, over some 
port, $u$ ($C \lto{{u}!{v}{}} C'$).
In the second premise, we check that port $u$ at the role of the sender component ($\role p$) is 
connected (through a connection binder) to a port of some other (role of a) component
($\tilde D = \dbinder{\lab}{q}{z} {\roleport{p}{u}}, \tilde D'$).
In the third premise, we check that the protocol of the composite allows for the message to be sent.
This is formalised by the protocol transition $G \lto{\labout{\role p}{\lab}{v}} G'$, which reads 
``the protocol prescribes a message send from role $\role p$ for interaction $\lab$ carrying 
value $v$''. We give the formal rules for protocol transitions below.
Rule \rname{InpChor} is similar to rule \rname{OutChor}, but models inputs instead of outputs. The 
premises are equivalent, only now they require labels for input actions instead of output actions, 
and the receiving component must be assigned to the receiver role of the connection binder instead 
of the sender role.

Rule \rname{Internal} is standard and allows for internal actions in
sub-components.

Rules \rname{OutComp} and \rname{InpComp} capture
externally-observable behaviour. \rname{OutComp} allows the
internal component that can interact with the environment (that with
the role declared in the last subterm, $\role r$ in the rule) to fire
an external output on one of the ports of the composite. The renaming
specified by the forwarder in $\til F$ is applied to convert the port
name of the internal component to one of the output ports in the
interface of the composite, since the environment expects outputs from
the composite on one of such ports.  Similarly, \rname{InpComp}
applies the same reasoning to externally-observable inputs.

Our semantics abstracts from the concrete rules that are used to
derive protocol transitions---we need protocol transitions in rules
\rname{OutChor} and \rname{InpChor}.  We only need that such
transitions have labels that are either of the form
$\labout{\role p}{\lab}{v}$ (read ``role $\role p$ sends value $v$ on
interaction $\lab$'') or of the form $\labinp{\role p}{\lab}{v}$ (read
``role $\role p$ receives value $v$ on interaction $\lab$''). We use
$\alpha$ to range over these labels in the remainder.  However, if we
want to reason about the behaviour and meta-theoretical properties of
components (as we are going to do in the remainder), we need to fix a
semantics for protocols. We adopt an unsurprising one, by adapting the
asynchronous semantics of choreographies
from~\cite{CM13,HYC16,CM17:ice} to our setting.  This method requires
augmenting the syntax of protocols with runtime terms (i.e., they are
not intended to be used by programmers using our model, only by our
semantics) to store intermediate communication states, as follows:

\vspace{4pt}
{\centerline{
$
    G ::= \ldots
     \pp  \gcom {}{\lab,v}{\tilde  q};G  
     \pp  \gchoice{}{\lab,v}{\tilde q}{G_1}{G_2}
$\vspace{4pt}}
 
The terms above denote intermediate states where a message has been
sent by the sender but still not received by the receiver (the first term is for a value 
communication, the second for a choice).
Given the augmented syntax for protocols, their semantics is given by the rules displayed in 
\cref{fig:g_semantics}.

\begin{figure}[t]
$$
  \begin{array}{c}
    \infer[\textsf{GSVal}]
    {
    \gcom p \lab{\tilde  q};\,G
    \ \lto{\labout{\role p}{\lab}{v}}\ 
    \gcom {} {\lab,v}{\tilde  q};\,G
    }
    {
    }
    \ \quad
           \infer[\textsf{GSChoice}]
           {
           \gchoice p \lab{\tilde  q}{G_1}{G_2}
           \ \lto{\labout{\role p}{\lab}{v}}\ 
           \gchoice {} {\lab,v}{\tilde  q}{G_1}{G_2}
           }
           {
           v\in\{\inl,\inr\}
           }
    \\\\
    \infer[\textsf{GRVal}]
    {
    \gcom {} {\lab,v}{\tilde  q,q};\,G
    \;\;
    \lto{\labinp{\role q}{\lab}{v}}\;\;
    \gcom {} {\lab,v}{\tilde  q};\,G
    }
    {
    \tilde{\role q} \mbox{ nonempty}
    }
    \ \quad
    \infer[\textsf{GRVal2}]
    {
    \gcom {} {\lab,v}{q};\,G
    \;\;\lto{\labinp{\role q}{\lab}{v}}\;\;
    G
    }
    {
    }
   \\\\
           \infer[\textsf{GRChoice}]
           {
           \gchoice {} {\lab,v}{\tilde  q,q}{G_1}{G_2}
           \;\;\lto{\labinp{\role q}{\lab}{v}}\;\;
           \gchoice {} {\lab,v}{\tilde  q}{G_1}{G_2}
           }
           {
           \tilde{\role q} \mbox{ nonempty}
           }
    \\\\
           \infer[\textsf{GRChoice2}]
           {
           \gchoice {} {\lab,v}{ q}{G_{\inl}}{G_{\inr}}
           \;\;\lto{\labinp{\role q}{\lab}{v}}\;\;
           G_v
           }
           {
           v \in\{\inl,\inr\}
           }
           \quad
            \infer[\textsf{GRec}]
           {
           \mu \recvar X. G \lto{\alpha} G'
           }
           {
	   G \subst{\mu \recvar X. G}{\recvar X} \lto{\alpha} G'
           }
    \\\\
    \infer[\textsf{GConc1}]
    {
    \gcom {p} \lab{\til{\role q}};\,G
    \;\;\lto{\alpha}\;\;
    \gcom {p} \lab{\til{\role q}};\,G'
    }
    {
    G \ \lto{\alpha}\ G'
    \quad
    \m{role}(\alpha) \not\in \role p,\til{\role q} 
    }
    \quad
           \infer[\textsf{GConc2}]
           {
           \gchoice {p} {\lab}{\til{\role q}}{G_1}{G_2}
           \;\;\lto{\alpha}\;\;
           \gchoice {p} {\lab}{\til{\role q}}{G_1'}{G_2'}
           }
           {
           G_1\lto\alpha G_1'
           \quad
           G_2 \lto\alpha G_2'
           \quad
           \m{role}(\alpha) \not\in \role p,\til{\role q} 
           }
    \\\\
    \infer[\textsf{GConc3}]
    {
    \gcom {} {\lab,v}{\tilde  q};\,G
    \;\;\lto{\alpha}\;\;
    \gcom {} {\lab,v}{\tilde  q};\,G'
    }
    {
    G\lto{\alpha} G'
    \quad
    \m{role}(\alpha) \not\in \til{\role q}
    }
    \quad
           \infer[\textsf{GConc4}]
           {
           \gchoice {} {\lab,\inl}{\tilde  q}{G_1}{G_2}
           \;\;\lto{\alpha}\;\;
           \gchoice {} {\lab,\inl}{\tilde  q}{G_1'}{G_2}
           }
           {
           G_1\lto\alpha G_1'
           \quad
           \m{role}(\alpha) \not\in \til{\role q}
           }
    \\\\
           \infer[\textsf{GConc5}]
           {
           \gchoice {} {\lab,\inr}{\tilde  q}{G_1}{G_2}
           \;\;\lto{\alpha}\;\;
           \gchoice {} {\lab,\inr}{\tilde  q}{G_1}{G_2'}
           }
           {
           G_2 \lto\alpha G_2'
           \quad
           \m{role}(\alpha) \not\in \til{\role q}
           }
  \end{array}
$$
\caption{GC, semantics of protocols.}
\label{fig:g_semantics}
\end{figure}

Rule \rname{GSVal} models outputs: an interaction labelled $\lab$ from a role $\role p$ to many 
roles $\til{\role q}$ can transition with label $\labout{\role p}{\lab}{v}$ to the runtime term
$\gcom {} {\lab,v}{\tilde  q}$. The runtime term registers that, on communication $\lab$, there is 
now an in-transit value $v$ that still has to be received by the roles $\til{\role q}$.
Rule \rname{GSChoice} is similar, but for choices. In particular, we require the transmitted 
value to be one of the two constants $\inl$ and $\inr$, since this value should inform the 
receivers of which continuation branch has been chosen by the sender ($G_1$ or $G_2$, 
respectively).

Rule \rname{GRVal} models input: if there is an in-transit message, this can be consumed by one 
of the intended receivers. Similarly for choices, in rule \rname{GRChoice}.
When there remains only one intended receiver, the protocol proceeds with the intended 
continuation, as specified by rules \rname{GRVal2} and \rname{GRChoice2}. Observe that, for 
choices, the value is used to determine which branch should be used to continue ($\inl$ goes
left, $\inr$ goes right).
Rule \rname{GRec} captures recursion in the standard way.

The rules prefixed with \rname{GConc} model concurrent and asynchronous execution of protocols, 
recalling full $\beta$-reduction in $\lambda$-calculus. In these rules, we use the function 
$\m{role}$ to extract the subject of a label. Formally: $\m{role}(\labout{\role p}{\lab}{v}) = 
\m{role}(\labinp{\role p}{\lab}{v}) = \role p$. Then, 
the \rname{GConc} rules cover all possible cases where the continuation of a 
protocol can execute an action that does not involve a role in the prefix of the protocol. Whenever 
this is the case, the continuation is allowed to make the transition without affecting the prefix. 
Intuitively, consider that 
two interactions that involve separate roles are 
non-interfering.

 \begin{example}
   Let us consider the BSS protocol implementation previously 
   described in pag.~\pageref{example:one}. Since
   $C_{\role{Buyer}}\lto{{y_1}!{\text{``The Winds of
         Winter''}}{}}C'_{\role{Buyer}}$ (for some $C'_{\role{Buyer}}$)
   and
   $G \lto{\labout{\role p}{prod}{\text{``The Winds of Winter''}}} G'$,
   we have that the whole component can reduce by rule $\m{OutChor}$
   because of the connection binder
   $\dbinder{prod}{Seller}{x_1'} {\roleport{Buyer}{y_1}}$. The new
   protocol $G'$ is obtained by replacing $G$'s first interaction with
   $\gcom {} {prod, \text{``The Winds of Winter''}}{Seller}$.
    \end{example}

For the purpose of the following sections and in order to simplify 
presentation we only consider well-formed components where
certain simple structural constraints are valid, namely: (1)
output ports and input ports are disjoint sets; (2) a port can only
be defined (lhs) once in local binders; (3) composite components
specify choreographies where sender is never included in receivers;
(4) forwarders given in composite components are defined for distinct
port identifiers; (5) composite components specify connection
binders where receiver ports are used at most once and
sender ports are always used together with the same message label.

%

\subsection{Examples}
\label{sec:examples}
We now describe a series of examples that inform on the semantics of
our model.

\smallskip

\noindent {\bf Example A.}
Let us consider a protocol where a role $\role p$ communicates a
decision to another role $\role q$ which then replies with different
messages according to such decision:
\[G=\gcom{p}{l_1}{q}\ (\gcom{q}{l_2}{p},\ \gcom{q}{l_3}{p})\]
We then consider the following component (dummy, for the sake of illustration): \\
\includegraphics[scale=.35]{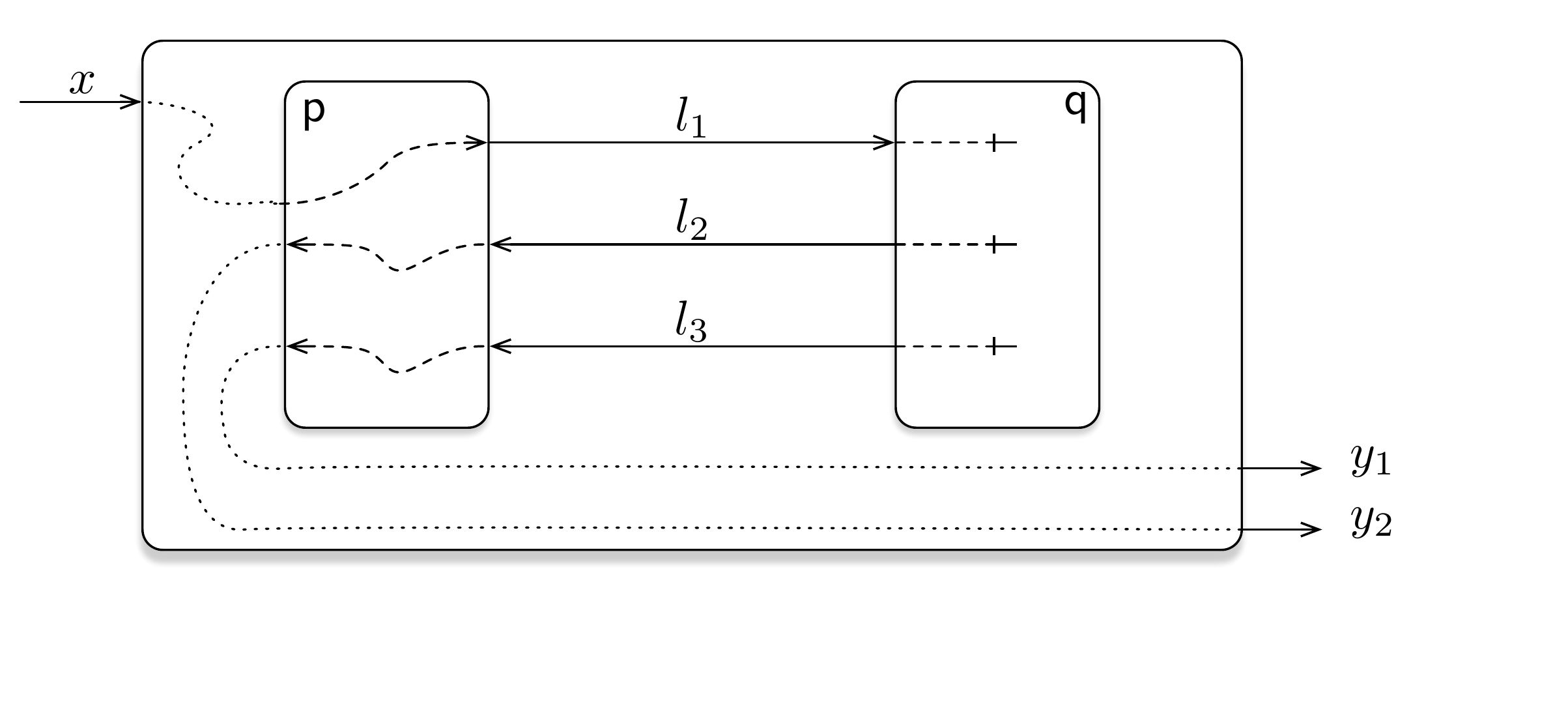}\\[-8mm]
Notice that the internal decision ($l_1$) actually originates from the external
interface (port $x$). 
In this case, we can see how the inner protocol $G$ refines the
internal behaviour, since an output will be observable from only one of the external ports 
$y_1$ and $y_2$, given the decision on $x$ and the fact that either message 
$l_2$ or $l_3$ is passed. This pattern is not possible to capture using 
a base component.

\smallskip

\noindent {\bf Example B.}
We now briefly change the protocol from the previous example,
where $\role p$, based on the decision communicated to $\role q$,
sends different messages to $\role q$: 
\[G=\gcom{p}{l_1}{q}\ (\gcom{p}{l_2}{q},\ \gcom{p}{l_3}{q})\] We may design 
a component where $\role p$ sends constants to $\role q$ which are 
forwarded to $y_1$,
$y_2$, and $y_3$:\\
\includegraphics[scale=.35]{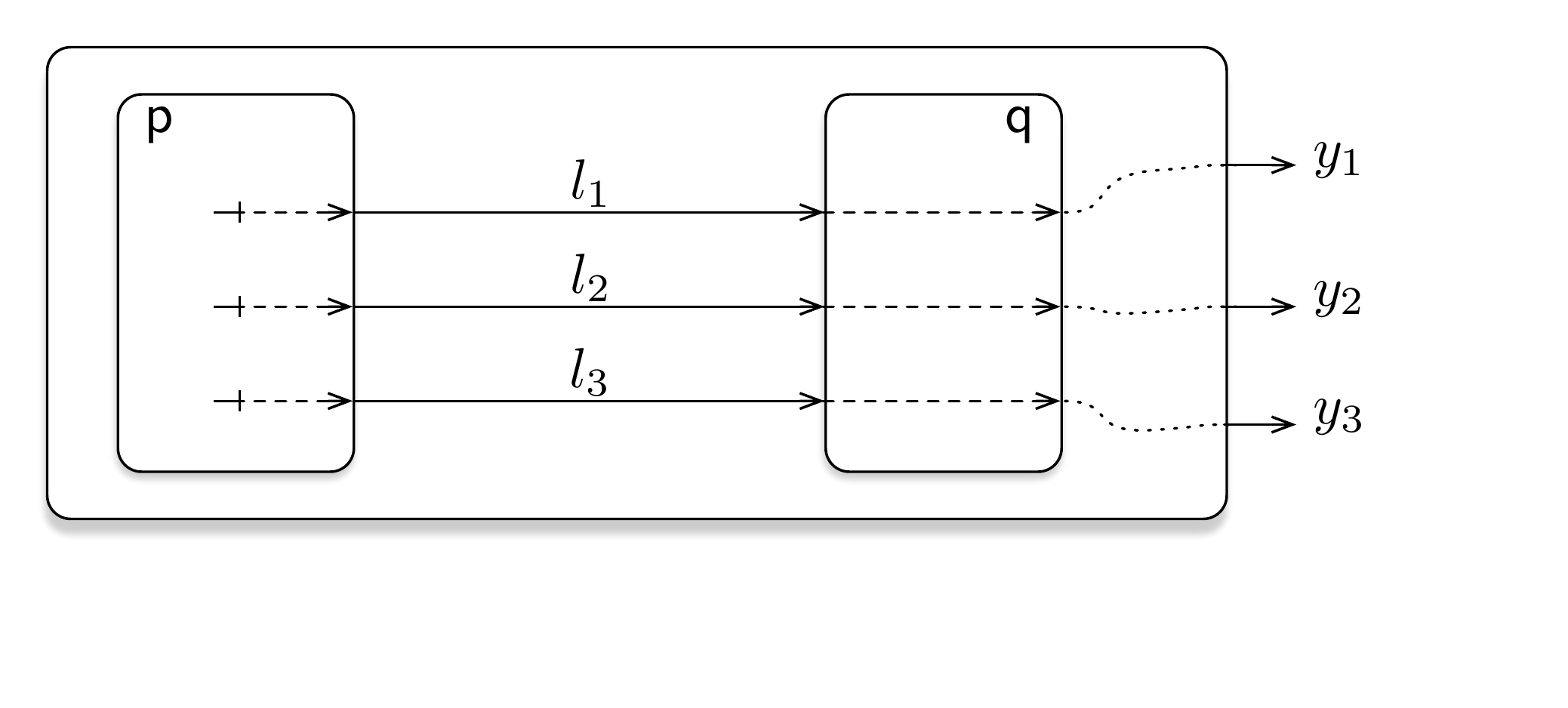}\\[-8mm]
Note that, depending on value emitted on $y_1$ there may
be a value available to be sent on $y_2$ or $y_3$, exclusively, similarly to Example A. The difference
is that here the decision is internal to the component and exposed via $y_1$.

\smallskip

\noindent {\bf Example C.}
Let us consider the two protocols
$G_0=\gcom{p}{l_1}{q};\ \gcom{q}{l_2}{p}$ and
$G_1=\mu\recvar X.\gcom{p}{l_1}{q};\ \gcom{q}{l_2}{p};\recx X$. Roles
$\role p$ and $\role q$ send a message to each other. In $G_1$ the
behaviour is repeated continuously, while in $G_0$ it happens only once. 
And let us consider the
component:\\
\includegraphics[scale=.35]{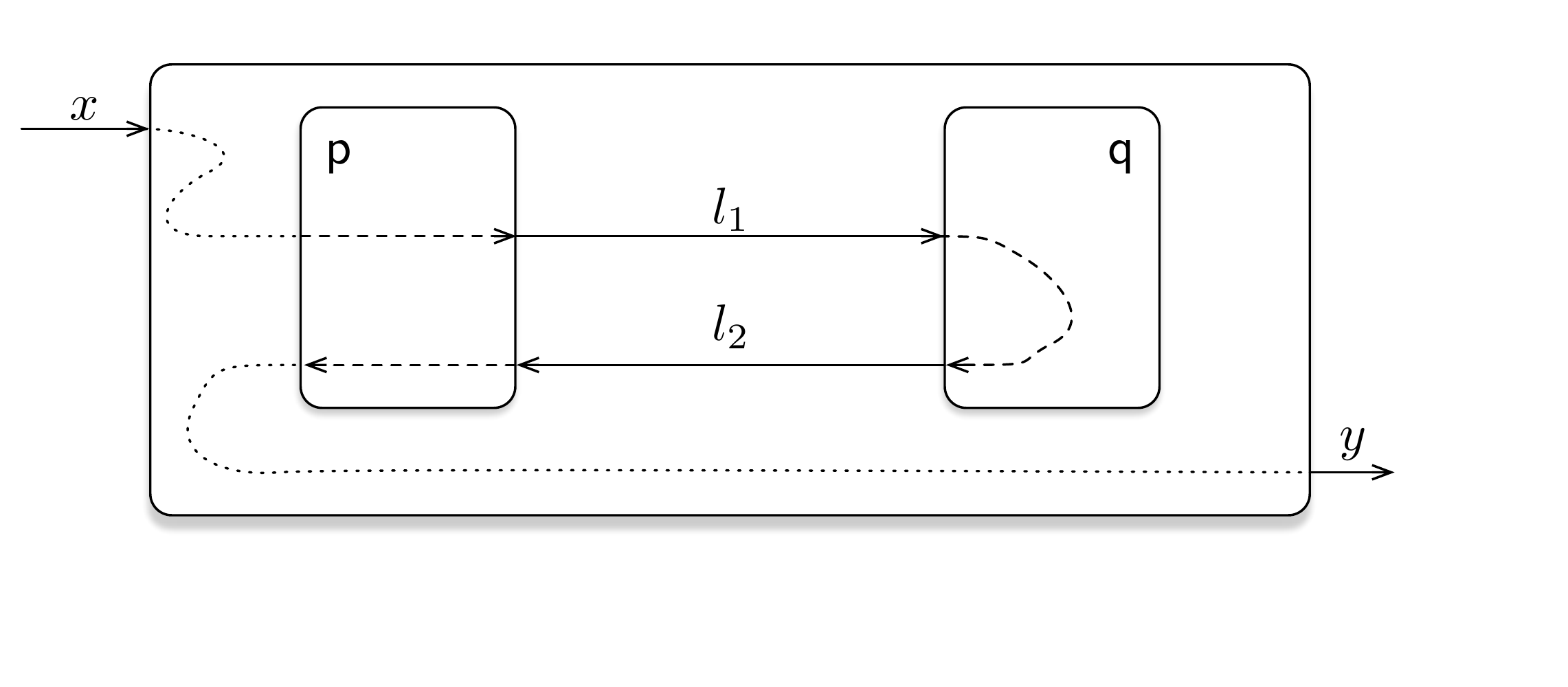}\\[-8mm]
If the component is governed by $G_0$, then we have a ``one shot''
component (one $x$ received, one $y$ emitted). However, if we use
$G_1$, the component becomes ``reactive'', i.e., every time a value is
received on $x$, a value on $y$ is emitted. So the component governed
by $G_1$ is as reactive as a base component, while the ``one shot''
usage is not representable via a base component.

\smallskip

\noindent {\bf Example D.}
%
Let us consider the following component together with the protocols from
example $C$:\\
\includegraphics[scale=.35]{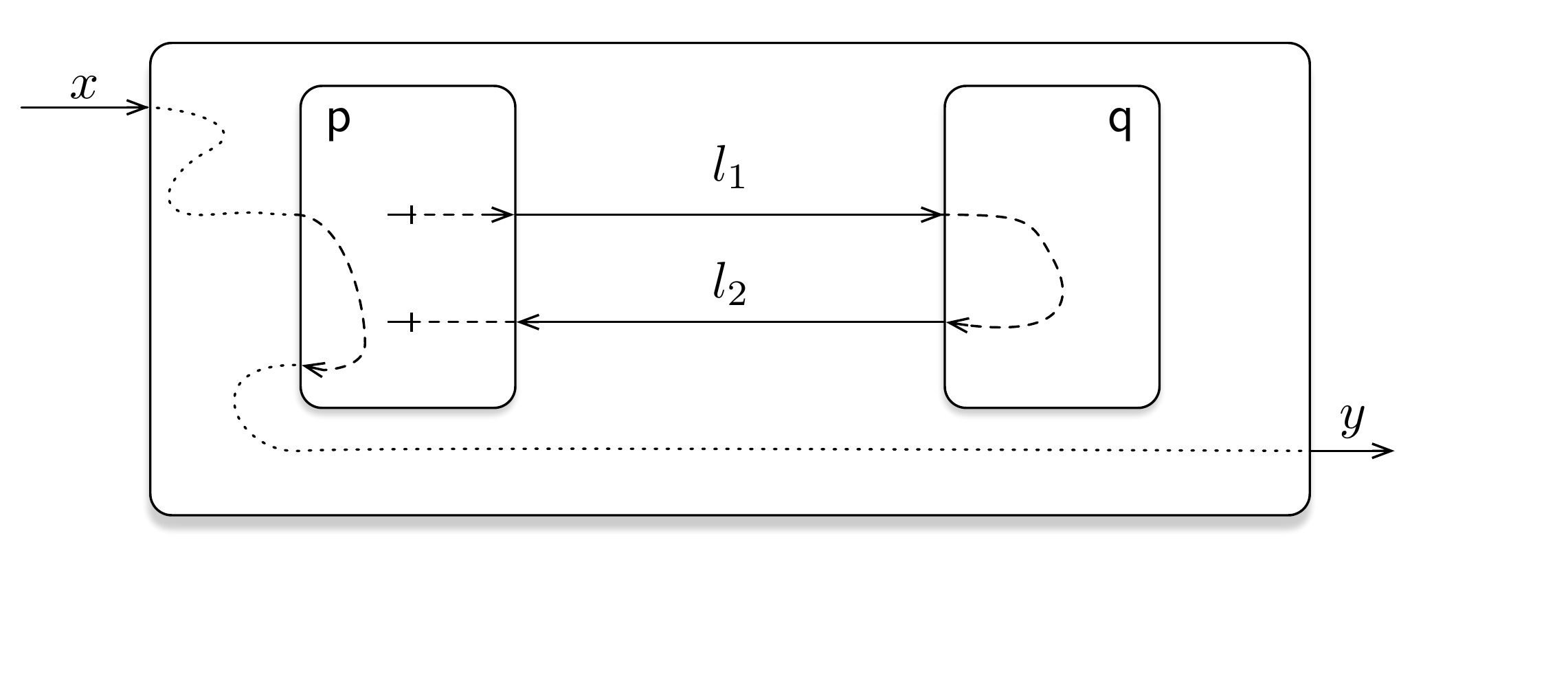}\\[-8mm]
In this example, we can see that the internal protocol does not interfere
with external reactive behaviour (no data passes from the outside to
the inner protocol). In such case reactions to the external environment
(for each $x$ received a $y$ may be emitted)
are completely independent from the internal communications (possibly several 
if the component is governed by $G_1$).

\smallskip

\noindent {\bf Example E.} In the protocol
$G=\gcom{p}{l_1}{q}; \mu\recvar X.\gcom{q}{l_2}{p};\recx X$, $\role p$
sends an initial message to $\role q$ who will then repeatedly
send a message to $\role q$. If used to govern the component:\\
\includegraphics[scale=.35]{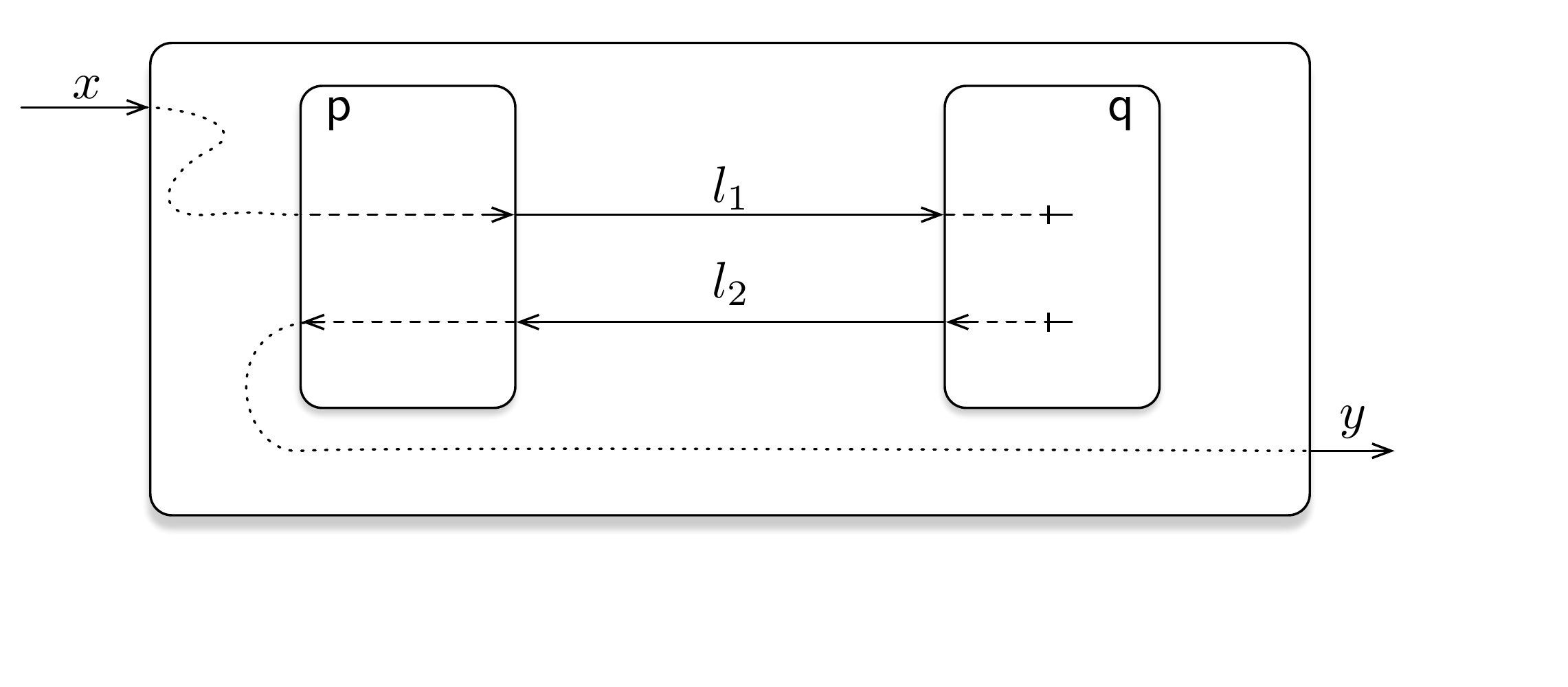}\\[-8mm]
we have the effect that receiving one $x$ kickstarts the internal interaction and
then an unbounded number of $y$'s can be emitted.

\smallskip

\noindent {\bf Example F.} We now consider a protocol that specifies
a choice similar to the one given in Example A, but wrapped in a recursion
so the protocol is repeating (in both branches):
\[G=\mu\recvar X.\gcom{p}{l_1}{q}\ (\gcom{q}{l_2}{p};\recx X,\
  \gcom{q}{l_3}{p}; 
  \recx X)\] We then consider that the protocol governs the following component:\\
  \includegraphics[scale=.35]{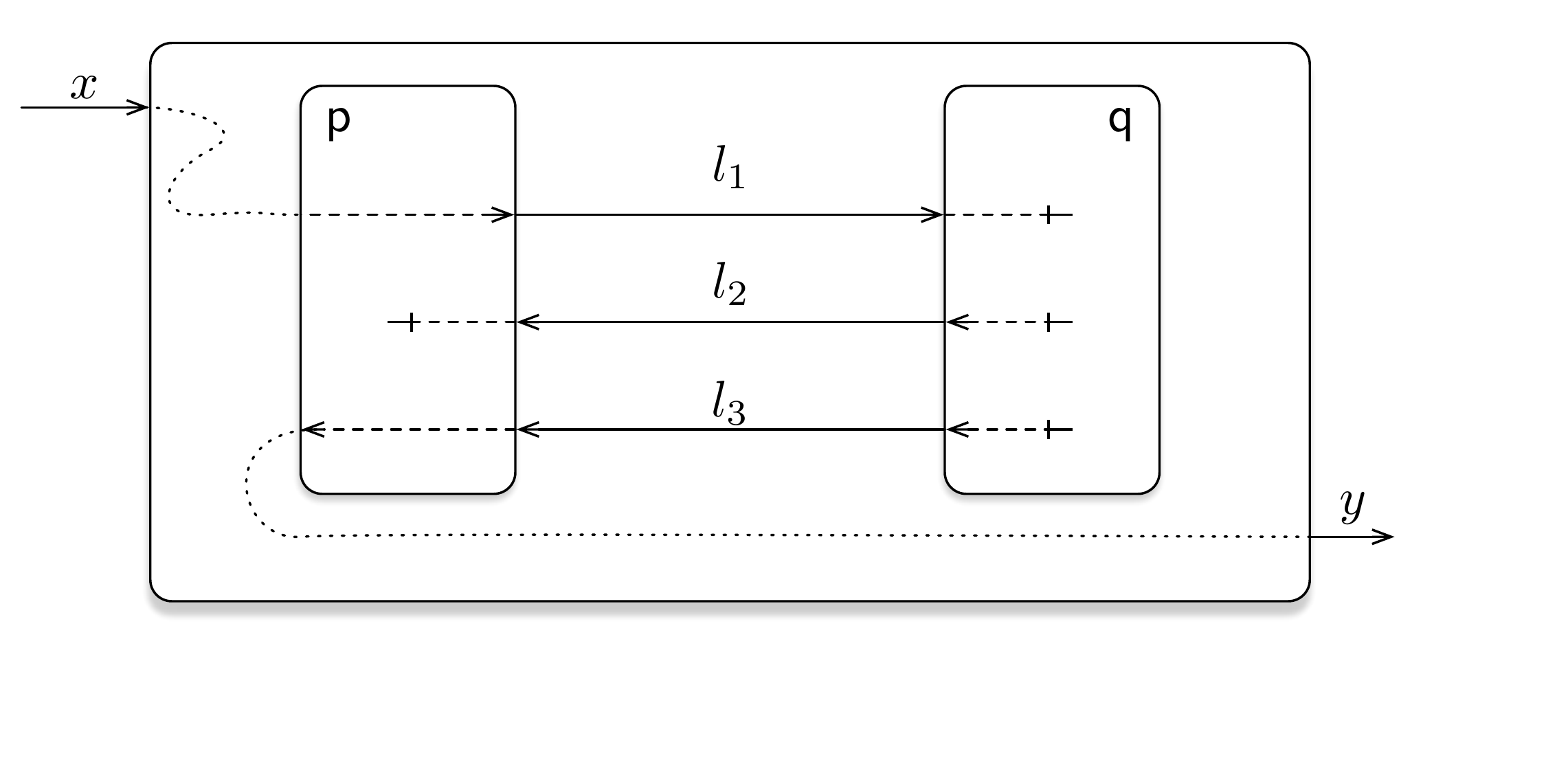}\\[-8mm]
  In this case, for each decision received on $x$ either there is no (external) output
  or a value is available to be sent on $y$.

%

\section{Target Process Model}
\label{sec:compilation}
In this section, we present an operational realisation of the GC
model, by showing how specifications can be compiled down to a more
foundational process model. The crucial point is to show that the centralised
control provided by protocols in GC can be realised in a distributed manner.
We consider a language that includes
communication primitives that are faithful with respect to the
original communication model, leaving out specialised constructs to
represent the structure of components. Our choice of target model
allows us, on the one hand, to tighten the gap towards a concrete
implementation, and, on the other hand, to show an operational
correspondence between source specifications and respective
compilations.

We adopt a variant of the $\pi$-calculus~\cite{SW01} where
communication is mediated by both queues and forwarders, and
(disciplined) non-determinism is supported via a specialised branching
primitive.  The syntax of processes ($P,Q,\ldots$) is given in
Fig.~\ref{fig:targetsyn}, where we (re)use port identifiers and
recursion variables.  In addition, we consider variables, ranged over
by $a, b, c, \ldots$, and objects ($o$) to represent both variables
and values.
\begin{figure}[t]
$$
  \begin{array}{rll}
    P &::=& 
		\zero \ \pp \ 
	 P \parop P \ \pp \ (\nub z) P \ \pp \ \mu \recx{X}. P \ \pp \ \recx{X}
	\ \pp \  z\msgout {o}. P \ \pp \ z\msgin {a}. P 
	  \\&  \pp &
	  z\chin (P,P) \ \pp \
	  \lbox{L} \ \pp \ (\fbinderl z z) P \ \pp \ (\fbinderr z z) P
  \end{array}
$$
\caption{Process Model Syntax}
\label{fig:targetsyn}
\end{figure}

The static fragment of the language is standard $\pi$-calculus. The
inactive process is represented by $\zero$, the parallel composition
$P \parop Q$ specifies that processes $P$ and $Q$ are simultaneously
active, restriction $(\nub z)P$ scopes port $z$ to process $P$, and
recursion is captured by the combination of the $\mu \recx{X}.P$ and
$\recx{X}$ constructs.
Communication primitives are specified using three language constructs: 
(1) The output $z\msgout{o}.P$ represents a process ready to send object 
$o$ (either a value or a variable that will be instantiated) on (port) ${z}$,
after which activating the continuation process $P$; 
(2) The input $z\msgin{a}.P$ captures a process ready to receive 
on $z$ after which activating the continuation process $P$, replacing 
the variable $a$ by the received value; 
(3) The branching $z\chin (P,Q)$ represents a process that receives
a selection value on ${z}$ and activates either $P$ or $Q$, depending
if the received value is $\inl$ or $\inr$, respectively.
The input ${z}\msgin{a}.P$ binds free occurrences of variable 
$a$ in $P$, $(\nub z)P$ binds occurrences of port $z$ in $P$ and 
$\mu \recx{X}.P$ binds occurrences of variable $\recx{X}$ in $P$. 
We use the usual abbreviation $(\nub \tilde z)P$.

The main difference with respect to the (synchronous) $\pi$-calculus
is that in our model communication is mediated by message queues,
hence output and input (and branching) primitives synchronise with
queues and not directly between them, similarly to the multiparty
session calculus~\cite{HYC16}. We reuse local binders for the purpose
of representing queues, since, on the one hand, they have a
straightforward operational interpretation and, on the other hand, it
allows us to compile GC base components in a direct way. We find
capturing the behaviour of local binders in different ways falls
outside the purpose of this paper. We represent by $\lbox{L}$ a
(delimited) set of local binders $L$. We also consider the constructs
$(\fbinderl z w) P$ and $(\fbinderr z w)P$ that forward the behaviours
that $P$ exhibits on $w$ as behaviours that $(\fbinderl z w) P$ and
$(\fbinderr z w)P$ exhibit on $z$, and exhibit the same behaviours for
other subjects. The difference is that $(\fbinderl z w) P$ forwards
exclusively outputs, while $(\fbinderr z w) P$ forwards solely inputs.

The semantics of the language is given by a labeled transition system 
defined by the rules shown in Fig.~\ref{fig:procsem} and the ones for
local binders shown in Fig.~\ref{fig:gc_semantics_base}, hence 
excluding rules $\m{OutBase}$ and $\m{InpBase}$ which refer to 
base components.
By 
$P\; \raisebox{-1.2pt}{\makebox{$\stackrel{\lambda}{\longrightarrow}$}} \; Q$
we denote that process $P$ evolves to process $Q$
exhibiting label $\lambda$. The set of labels, ranged over by 
(overloaded) $\lambda$, extends the set of labels of GC semantics 
by considering $\overline{\lambda}$ transitions, where $\lambda \neq \tau$,
hence 
$\lambda 
::= {y}!{v} \mid {x}?{v} \mid \tau \mid \overline{{y}!{v}} \mid \overline{{x}?{v}}$. 
This allows us to distinguish actions originating from communication
primitives ($\overline{\lambda}$) and queues ($\lambda$), so as to ensure 
primitives synchronise only with queues (and inversely). 

\begin{figure*}[t]
$$
\begin{array}{c@{\qquad}c@{\qquad}c}
\infer[\m{PAct}]
{ \lbox{L} \lto{\lambda} \lbox{L'}}
{L \lto \lambda L'}
&
\infer[\m{PRes}]
{(\nub z) P \lto{\lambda_{w}} (\nub z) Q}
{P \lto{\lambda_{w}} Q \qquad z \neq w}
\\[1.5ex]
\multicolumn{2}{c}{
\infer[\m{PPar}]
{P \parop R \lto{\lambda} Q \parop R}
{P \lto{\lambda} Q }
\qquad
\infer[\m{PCom}]
{P \parop Q \lto{\tau} P' \parop Q'}
{P \lto{\lambda} P' \qquad Q \lto{\overline{\lambda}} Q'}
\qquad
\infer[\m{PRec}]
{\mu \recx{X}. P \lto{\lambda} Q}
{P \subst{\mu \recx{X}. P}{\recx{X}}\lto{\lambda} Q }
}
\\[1.5ex]
\infer[\m{PFwd1}]
{(\fbinderr {z} {w}) P \lto{\lambda_{u}} (\fbinderr {z} {w})Q}
{P \lto{\lambda_u} Q \qquad z \neq u \neq w}
&
\infer[\m{PFwd2}]
{(\fbinderl {z} {w}) P \lto{\lambda_{u}} (\fbinderl {z} {w})Q}
{P \lto{\lambda_u} Q \qquad z \neq u \neq w}
\\[1.5ex]
\infer[\m{PFwdInp}]
{(\fbinderr {z} {w}) P \lto{z?v} (\fbinderr {z} {w})Q}
{P \lto{w?v} Q}
&
\infer[\m{PFwdOut}]
{(\fbinderl {z} {w}) P \lto{z!v} (\fbinderl {z} {w})Q}
{P \lto{w!v} Q}
\\[1.5ex]
\infer[\m{PFwdInp2}]
{(\fbinderr {z} {w}) P \lto{\overline{z?v}} (\fbinderr {z} {w})Q}
{P \lto{\overline{w?v}} Q}
&
\infer[\m{PFwdOut2}]
{(\fbinderl {z} {w}) P \lto{\overline{z!v}} (\fbinderl {z} {w})Q}
{P \lto{\overline{w!v}} Q}
\\[1.5ex]
\infer[\m{POut}]
{z\msgout{v}.P  \lto{\overline{z!v}} P }
{}
&
\infer[\m{PInp}]
{z\msgin{a}.P  \lto{\overline{z?v}} P \subst{v}{a}}
{}
\\[1.5ex]
\infer[\m{PChoL}]
{{z} \chin(P,Q)  \lto{\overline{z?\inl}} P}
{}
&
\infer[\m{PChoR}]
{{z} \chin(P,Q)  \lto{\overline{z?\inr}} Q}
{}
\end{array}
$$
\caption{Process Semantics.}
 \label{fig:procsem}
\end{figure*}

We briefly comment on the rules shown in Fig.~\ref{fig:procsem}, where
we use $\lambda_z$ to
identify a transition that specifies port $z$, be it
an input, an output or the respective $\overline{\lambda}$ variants. 
Rule $\m{PAct}$ specifies that a delimited set of local binders exhibits
the same behaviours as the local binders. Rule $\m{PRes}$ says
$(\nub z)P$ exhibits transitions of $P$ that specify subject $w$ different 
from the restricted $z$ (notice only values can be communicated).
Rule $\m{PPar}$ specifies the parallel composition exhibits transitions 
of one of the subprocesses. Rule $\m{PCom}$ captures the synchronisation 
between two processes, yielding a $\tau$ move of the parallel composition,
where one of the labels originates from a communication primitive 
$\overline{\lambda}$ and the other from a queue $\lambda$. 
We omit  the symmetric versions of rules $\m{PCom}$ and $\m{PPar}$. 
Rule $\m{PRec}$ says the recursive processes exhibits the transitions
of its unfolding.

The rules that start with $\m{PFwd}$ capture the 
behaviour of the forwarders. Rules $\m{PFwd1}$ and $\m{PFwd2}$
say that the forwarder constructs exhibit the same transitions as the scoped
process as long
as the port given in the transition is different from the ones involved in the 
forwarding. Rules $\m{PFwdInp}$ and $\m{PFwdInp2}$ capture input 
forwarding, while $\m{PFwdOut}$ and $\m{PFwdOut2}$ capture output
forwarding, where the transition on the (external) port $z$ is derived from 
a transition on the (internal) port $w$. We distinguish the cases of $\lambda$
($\m{PFwdInp}$ and $\m{PFwdOut}$)
and $\overline{\lambda}$ ($\m{PFwdInp2}$ and $\m{PFwdOut2}$)
to avoid introducing dedicated notation.
Rule $\m{POut}$ describes the behaviour of the output construct that exhibits
the corresponding output transition $\overline{z\msgout{v}}$, identifying that the
transition originates in a communication primitive. Rule $\m{PInp}$ shows the 
symmetric case. 
Rules $\m{PChoL}$ and $\m{PChoR}$
capture the behaviour of the branching construct, which is ready to receive
on $z$ a selection value, either $\inl$ or $\inr$, evolving to $P$ or to $Q$,
respectively.

Having presented the target language, we may now turn to the encoding
of GC, starting by the encoding of choreographies which are 
central in the operational model.
We denote by $\projb{G}_{\role r, D, \mapping}$ the encoding of 
choreography $G$ for role $\role r$ considering connection binders $D$
and mapping $\mapping$. The connection binders provide the association 
between message labels and communication ports. 
$\mapping$ is an injective mapping from role port pairs $(\role{r},z)$
and role label pairs $(\role{r}, \lab)$
into ports used in the target model, so as to omit 
role identifiers and map message labels, while ensuring a unique 
relation. 

\begin{figure}[t]
$$
\begin{array}{lclll}
\projb{\gcom p\lab{ \tilde q};G}_{\role p, D, \mapping} & \deff & 
z'\msgin{a}.
u\msgout{a}.
\projb{G}_{\role p, D, \mapping} 
\\ & & \hfill
(D = \dbinder {\lab}{ q}{ w}{\roleport {p}{z}}, D' 
\;\wedge\; 
 \mapping(\role{p},{z}) = z'
 \;\wedge\; 
 \mapping(\role{p},{\lab}) = u)
\\
\projb{\gcom p\lab{ \tilde q, q};G}_{{\role q},  D, \mapping}  & \deff & 
u\msgin{a}.
w'\msgout{a}.
\projb{G}_{\role q, D, \mapping} 
\\ && \hspace{1cm}\hfill
(
D = \dbinder {\lab}{q}{w}{\roleport {p}{z}}, D'
\;\wedge\;
\mapping(\role{q},{w}) = w'
\;\wedge\; 
\mapping(\role{q},{\lab}) = u)
\\
\projb{\gcom p\lab{ \tilde q};G}_{\role r, D, \mapping}  & \deff & 
\projb{G}_{\role r, D, \mapping}  
\hfill
(\role r \not \in \role p,\tilde{\role q}) 
\\
\projb{\gcom {}{\lab,v}{ \tilde q, q};G}_{{\role q}, D, \mapping}  & \deff & 
u\msgin{a}.
w'\msgout{a}.
\projb{G}_{{\role q}, D, \mapping} 
\\ & & \hfill
(
D = \dbinder {\lab}{q}{w}{\roleport {p}{z}}, D'
\;\wedge\;\mapping(\role{q},{w}) = w'
\;\wedge\; 
\mapping(\role{q},{\lab}) = u
)
\\
\projb{\gcom {}{\lab,v}{ \tilde q};G}_{\role r, D, \mapping}  & \deff & 
\projb{G}_{\role r,  D, \mapping} 
\hfill
(\role r \not \in \tilde{\role q})
\\
\projb{\gchoice p\lab{\tilde q}{G_1}{G_2}}_{\role p, D, \mapping}  & \deff & 
z'\chin(u\msgout{\inl}.\projb{G_1}_{\role p,  D, \mapping},
u\msgout{\inr}.\projb{G_2}_{\role p, D, \mapping})
\\ & &\hfill
(D = \dbinder {\lab}{ q}{w}{\roleport {p}{z}}, D'
\;\wedge \; \mapping(\role{p},{z}) = z'
\;\wedge \; \mapping(\role{p},{\lab}) = u)
\\
\projb{\gchoice p\lab{\tilde q, q}{G_1}{G_2}}_{{\role q}, D, \mapping} & \deff & 
u\chin ( w'\msgout{\inl}. \projb{G_1}_{{\role q}, D, \mapping},
w'\msgout{\inr}. \projb{G_2}_{{\role q}, D, \mapping}) 
&\\ & &\hfill
(
D = \dbinder {\lab}{q}{w}{\roleport {p}{z}}, \tilde D'
\;\wedge\; \mapping(\role{q},{w}) = w'
\;\wedge\; 
\mapping(\role{q},{\lab}) = u)
\\
\projb{\gchoice p\lab{\tilde q}{G_1}{G_2}}_{\role r, D, \mapping} & \deff & 
\projb{G_1}_{\role r, D, \mapping} 
\hfill
(\role r \not \in \role p,\tilde{\role q} 
\;\wedge\; \projb{G_1}_{\role r, D, \mapping} = \projb{G_2}_{\role r, D, \mapping}) 
\\
\projb{\gchoice {}{\lab,v}{\tilde q, q}{G_1}{G_2}}_{{\role q},  D, \mapping} & \deff & 
u\chin ( w'\msgout{\inl}. \projb{G_1}_{{\role q}, D, \mapping},
w'\msgout{\inr}. \projb{G_2}_{{\role q},  D, \mapping}) 
\\ & &\hfill
(
 D = \dbinder {\lab}{q}{w}{\roleport {p}{z}}, D'
 \;\wedge\; \mapping(\role{q},{w}) = w'
 \;\wedge\; 
\mapping(\role{q},{\lab}) = u)
\\
\projb{\gchoice {}{\lab,\inl}{\tilde q}{G_1}{G_2}}_{\role r, D, \mapping} & \deff & 
\projb{G_1}_{\role r, D, \mapping} 
\hfill (\role r \not \in \tilde{\role q})
\\
\projb{\gchoice {}{\lab,\inr}{\tilde q}{G_1}{G_2}}_{\role r, D, \mapping} & \deff & 
\projb{G_2}_{\role r, D, \mapping} 
\hfill (\role r \not \in \tilde{\role q})
\\
\projb{ \mu \recvar X.G}_{\role r, D, \mapping} & \deff & 
\mu \recvar {X}. \projb{G}_{\role r, D, \mapping} 
\hfill ( \role r \in \mathit{roles}(G))\\
\projb{ \mu \recvar X.G}_{\role r, D, \mapping} & \deff & 
\zero
\hfill (\role r \not \in \mathit{roles}(G))
\\
\projb{ \recvar X}_{\role r,  D, \mapping} & \deff & \recvar{X}  \\
\projb{ \gend}_{\role r,  D, \mapping} & \deff & \zero  
\end{array}
$$
\caption{Choreography Encoding.}
\label{fig:chorenc}
\end{figure}

In the GC model components evolve via actions prescribed by the
choreography. This is captured in the encoding via interaction between
the choreography encoding and the component encoding, presented
afterwards. Also, as usual, the encoding of choreographies is carried out for 
each participant, so choreographic monitoring is actually carried out in a 
distributed way. We therefore specify interaction points between the 
distributed monitors so as to ensure they evolve in a coordinated way.
Following these principles we may now comment 
on the definition of choreography encoding, defined inductively in the
structure of choreographies, for a given role, as shown 
in Fig.~\ref{fig:chorenc}.

The first case considers a communication projected for the
sender role, encoded as an input on $z'$ followed by an output on $u$
that forwards the received value,
and continuation defined as the respective projection of the continuation $G$.
We use the connection binders to identify the communication port $z$
associated to message $\lab$ (sender) and
 the mapping $\mapping$ to obtain port $z'$ (uniquely)
associated to the role $\role{p}$ and port $z$.
Intuitively, the process yielded by the encoding starts by receiving, on 
$z'$, the value from the component that implements the sender role.
The interaction between component and choreography encoding  
realises the choreographic monitoring. 
 The received value is then forwarded via an output on $u$, 
 obtained via the mapping $\mapping(\role{p},\lab)$, directed to the monitors
resulting from the projection for the (multiple) receivers, thus 
ensuring coordination between the distributed monitors.  
The one to many interaction is supported by local binders.

The second case considers a communication projected for the input role, 
following similar lines to the one for the sender. The difference is that
the resulting process first receives on $u$ the value originating from the 
monitor for the sender role, where $u$ is
given by the mapping $\mapping$
for role $\role{q}$ and label $\lab$. 
The received value is then
forwarded using the port $w'$ associated to the respective role-port pair
of the component that implements the receiver role. The (third) case for a 
role not involved in the communication, is defined as the
respective projection of the continuation of the choreography. The (fourth and
fifth) cases for when the value is already in transit (i.e., when the emitter has 
sent the value), for roles waiting to receive the message and for any other role, 
are the same. 

When the choreography specifies a choice the communication pattern is
identical, i.e., components interact with choreography monitors and the 
latter interact between them. The difference is that the possible values are 
known and the implied alternative behaviours are implemented, via 
the branching construct (of the form $z\chin(P, Q)$). 
For the case of the sender role, if the component, via $z'$, sends 
$\inl$ then the left branch
is taken and $\inl$ is forwarded to receiver monitors on $u$, and 
likewise for $\inr$. For the case of the receiver role, the monitor
starts by receiving the selection value from the dedicated queue, given
by the mapping $\mapping(\role{q},\lab)$. If the value 
is $\inl$ then the left branch is taken and $\inl$ is sent to the receiver
component, via $w'$, and
likewise for $\inr$. For roles not involved in the choice we adopt 
the (simplest) standard condition (cf.~\cite{HYC16})
that says the encodings of the two alternative branches must be
identical. The case for choices when the value is in transit for roles waiting to 
receive the selection value is the same as for the receiver role mentioned 
above. For other roles, which include the sender that has sent the value and 
roles that have already received the selection, the encoding directly considers 
the relevant branch since the value is registered in the choreography.

The last cases show the encoding for recursion and termination, which are
represented directly by the respective constructs of the target language. 
In the case of recursion, we ensure that the projection for a given role 
of the recursion body is actually meaningful (when the role participates 
in the choreography), otherwise it is directly discarded so as to avoid
undesired configurations. 
We remark
that the encoding is a partial function, given
that the specified conditions must hold.

As mentioned previously, we use local binders to support the coordination
between the monitor associated to the sender role and the monitors
associated to the receiver roles. Intuitively, there is a queue for 
each receiver and each queue is able to input the values provided by the 
monitor for the sender. The distribution to several receiver queues is ensured 
by local binder semantics. Given connection binders $D$ and
mapping $\mapping$ we use $\buildqueues{D}{\mapping}$ to denote
the set of local binders (queues) defined as:
%
$\buildqueues{\dbinder {\lab}{q}{w}{\roleport {p}{z}}, D'}{\mapping}
\deff
(\lbinder {\mapping(\role{q},\lab)}{\cdot} {\mathit{id}(\mapping(\role{p},\lab))}),
\buildqueues{D'}{\mapping}$}.

Hence, there is a local binder for each connection binder, defined using 
the identity function ($\mathit{id}$) for the purpose of value forwarding. 
Each local binder 
inputs messages originating from the sender monitor,
using the port given by $\mapping(\role{p},\lab)$, and 
outputs messages towards the receiver monitor,
using the port $\mapping(\role{q},\lab)$.
Since choreographies themselves may carry (in transit) values,
we introduce an operation that places the values in the respective
queues, denoted by $\queues{G}{L}{\mapping}$ for a given choreography 
$G$, local binders $L$ and mapping $\gamma$ (defined in expected lines,
see Appendix~\ref{app:defs}). We remark that such operation targets only
choreography runtime terms and is a homomorphism for other cases,
up to a side condition for choices that says that the result of
applying the operation to both branches is equal.

We now present the encoding of components which is shown in 
Fig.~\ref{fig:compenc}. In order to match GC component behaviour
and ensure a modular specification,
the encoding yields processes that have as set of free ports ($\mathit{fp}$)
only identifiers in the interface $\tilde x, \tilde y$ and can
only exhibit inputs on $\tilde x$ ports and exhibit outputs 
on $\tilde y$ ports. The former is ensured by a (total) renaming
given by a mapping ($\mapping$) combined with the appropriate name 
restrictions, while the latter is enforced via the forwarding constructs.
So, firstly, the encoding of a base component $\chorboxb {\tilde x }{\tilde y}{L}$ 
considers $\mapping$ using a (dummy) role $\role r$ to replace all
identifiers given in the source GC specification, and specifies a 
name (list) restriction  for all ports yielded by the mapping (the 
distinguished set $\tilde z$). 
The original local 
binders are also renamed using the mapping
(denoted $\mapping(\role r, L)$). Secondly, the encoding
specifies input forwarders for $\tilde x$ and output forwarders for $\tilde y$
targeting their mappings. 

\begin{figure}[t]
$$
\begin{array}{lcll}
\projb{\chorboxb {\tilde x }{\tilde y}{L}} & \deff & 
(\nub \tilde z) \\ & & 
(\fbinderr  { x_1} {\mapping(\role r, x_1)}) 
\ldots
(\fbinderr  { x_k} {\mapping(\role r, x_k)})\\ & & 
(\fbinderl {y_1} {\mapping(\role r, y_1)})
\ldots
(\fbinderl {y_n} {\mapping(\role r, y_n)})\\ & & 
\lbox{{\mapping(\role r},L) }
\\ & & \hfill
( 
\tilde x = x_1, \ldots, x_k 
\,\wedge\, \tilde y = y_1,\ldots,y_n
\, \wedge\, 
\tilde z \cap (\tilde x, \tilde y) = \emptyset 
\\ & & \hfill 
\, \wedge \, \tilde z = \mathit{co}(\mapping) 
\,\wedge \,
\mathit{dom}(\mapping) = \role r \times (\mathit{fp}(L)\cup \tilde x \cup \tilde y)
)
\\
\projb{\chorbox {\tilde x}{\tilde y}{ D}{\role r[F]}{R}{G}} & \deff & 
(\nub \tilde z)\\ & & 
(\fbinderr { x_1} {\mapping(\role r, w_1)})
\ldots
(\fbinderr { x_k} {\mapping(\role r, w_k)}) \\ & & 
(\fbinderl {y_1} {\mapping(\role r, z_1)})
\ldots
(\fbinderl {y_n} {\mapping(\role r, z_n)})
\\ & & 
( \projb {R}_\mapping
\parop 
  \projb G_{{ {\role p}_1}; D; \mapping} 
\parop \ldots \parop
 \projb G_{{ {\role p}_j}; D; \mapping}
\parop \lbox{\queues{G}{ (\buildqueues{D}{\mapping}) }{\mapping}} )
\\ & & \hfill
(
x_1, \ldots, x_k   \subseteq \tilde x \,\wedge\,
y_1, \ldots, y_n  \subseteq \tilde y \, \wedge\, 
\tilde z \cap (\tilde x, \tilde y) = \emptyset 
\\ & & \hfill
\,\wedge\, 
F =  \fbinder {w_1}{x_1} , \ldots, \fbinder {w_k}{x_k},
\fbinder {y_1}{z_1}, \ldots, \fbinder {y_n}{z_n} 
\\ & & \hfill 
\,\wedge\,
\tilde z = \mathit{co}(\mapping) 
\,\wedge\, \mathit{dom}(\mapping) = (\mathit{rplp}(D) \cup 
\role{r} \times \tilde w \cup \role{r} \times \tilde z)
\\ & & \hfill
\wedge
\, {\role p}_1,\ldots,{\role p}_j = \mathit{roles}(G) )
\\
\\
\projb{\roleas{r}{C}}_\mapping & \deff &
(\nub \tilde x,\tilde y) \\ & &
(\fbinderr {\mapping(\role r, x_1)}{x_1} )
\ldots
(\fbinderr {\mapping(\role r, x_k)}{x_k} )\\ & &
(\fbinderl {\mapping(\role r, y_1)}{y_1} )
\ldots
(\fbinderl {\mapping(\role r, y_n)}{y_n} )\\ & &
\projb{C}
\\ & &  \hfill
(C = \chorboxb {\tilde x }{\tilde y}{\ldots}
\, \wedge \,
\tilde x = x_1, \ldots, x_k 
\,\wedge\, \tilde y = y_1,\ldots,y_n)
\\
\\
\projb{R_1, R_2}_\mapping & \deff & 
\projb{R_1}_\mapping \parop \projb{R_2}_\mapping
\end{array}
$$
\caption{Component Encoding.}
\label{fig:compenc}
\end{figure}

The encoding of the composite component follows the same principles,
specifying a complete identifier replacement and restricting all ports not
part of the interface. Also forwarders are specified, following precisely
the forwarders given in the GC specification, up to the respective mapping.
The elements of the component are then encoded as the parallel composition
of the monitor queues with the encodings of role assignments and 
the choreography. The monitor queues are specified by the 
$\buildqueues{\_}{\_}$, considering the original connection binders and the
mapping, up to the placement of in transit values given by $\queues{\_}{\_}{\_}$.
The encoding of the choreography
is carried out for all roles specified in the choreography, considering
the original connection binders and the mapping. 
The encodings of role assignments consider the respective mapping, 
where the ports used in the mapping in 
the context of the composite component are forwarded to the ports 
specified in the interface of the (sub)component. The conditions specified
for the encoding ensure that the renaming is total, that the yielded
processes operate on a distinguished set of ports, and only the ports
in the interface are free ports of the process. We use $\mathit{rplp}(D)$
to denote the set of role ports and role labels used in $D$.

We may now present our results that show there is a precise operational
correspondence between GC specifications and their encodings, starting
by ensuring each (internal) evolution in the source specification is matched 
by precisely two (internal) evolutions in the target model. For the purpose
of our results we use a standard notion of structural congruence 
(denoted $\equiv$), extended with forwarder swapping (provided the
names involved in the forwarding are distinct).

\begin{proposition}[Encoding Soundness]
\label{pro:compsound}
If $C \lto{\tau} C'$ then $\projb{C} \lto{\tau}P\lto{\tau} \equiv \projb{C'}$.
\end{proposition}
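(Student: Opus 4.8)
The plan is to proceed by structural induction on the derivation of the source transition $C \lto{\tau} C'$, examining which of the composite-component rules (\rname{OutChor}, \rname{InpChor}, \rname{Internal}, and their liftings through the base-component rules) was applied last. The key insight to exploit is that the encoding decomposes a single choreography-governed interaction into a fixed two-step pattern in the target: the sender's monitor (obtained from $\projb{\cdot}$) receives from the sender component on $z'$ and then forwards on $u$ into a monitor queue $\buildqueues{D}{\mapping}$, and dually the receiver's monitor reads from that queue on $u$ and forwards to the receiver component on $w'$. So the ``two $\tau$ steps'' in the statement will correspond to (i) the component-to-monitor synchronisation and (ii) the monitor-to-queue synchronisation (for outputs), or symmetrically the queue-to-monitor and monitor-to-component synchronisations (for inputs); the choreography runtime term evolving via $G \lto{\alpha} G'$ is exactly mirrored by the monitor process consuming its prefix and leaving $\projb{G'}$.

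First I would treat the base case given by rule \rname{Internal} lifted through a base component: here $C \lto{\tau} C'$ must itself come from a composite subcomponent (base components have no $\tau$ moves of their own, since all \rname{L}-rules are labelled), so this folds into the inductive hypothesis applied to the inner component, with the surrounding forwarders (encoded by $(\fbinderr{}{})$ / $(\fbinderl{}{})$ wrappers) preserved by rules \rname{PFwd1}/\rname{PFwd2}. The genuinely new cases are \rname{OutChor} and \rname{InpChor}. For \rname{OutChor}, from the premises $C_{\role p} \lto{u!v} C'_{\role p}$, $\tilde D = \dbinder{\lab}{q}{z}{\roleport{p}{u}}, \tilde D'$, and $G \lto{\labout{\role p}{\lab}{v}} G'$, I would unfold the component encoding $\projb{\chorbox{\tilde x}{\tilde y}{D}{\role r[F]}{R}{G}}$, locate the parallel summand $\projb{G}_{\role p; D; \mapping}$ (which by \cref{fig:chorenc} begins with $z'\msgin{a}.u'\msgout{a}.\projb{G'}_{\role p;D;\mapping}$ where $\mapping(\role p, u) = z'$, $\mapping(\role p, \lab) = u'$), and show: step one, the encoding of the subcomponent $\roleas{p}{C_{\role p}}$ — which exhibits the output of $v$ renamed appropriately — synchronises via \rname{PCom} with $z'\msgin{a}.\ldots$, yielding a $\tau$; step two, the residual $u'\msgout{v}.\projb{G'}_{\role p;D;\mapping}$ synchronises via \rname{PCom} with the corresponding monitor queue in $\buildqueues{D}{\mapping}$ (which can input on $\mapping(\role p,\lab)$), yielding the second $\tau$. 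The resulting process must then be shown $\equiv \projb{C'}$: the subcomponent has become $\projb{\roleas{p}{C'_{\role p}}}$, the monitor has become $\projb{G'}_{\role p;D;\mapping}$, and the queue now holds $v$, which matches $\projb{C'}$ precisely because $\queues{G'}{\buildqueues{D}{\mapping}}{\mapping}$ is $\buildqueues{D}{\mapping}$ with $v$ inserted (by definition of $\mathit{fill}$ on runtime terms). The case \rname{InpChor} is symmetric, with the two $\tau$ steps being queue-to-receiver-monitor then receiver-monitor-to-component.

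The main obstacle I anticipate is bookkeeping around the mapping $\mapping$, the name restrictions $(\nub \tilde z)$, and structural congruence: I must argue that the two internal $\tau$ steps do not touch any port that the surrounding forwarders or restrictions would block (this is where rules \rname{PRes} and \rname{PFwd1}/\rname{PFwd2} come in, using the side conditions $z \neq w$, $z \neq u \neq w$), and that after the two steps the parallel components can be rearranged by $\equiv$ (including the forwarder-swapping extension mentioned just before the proposition) into exactly the syntactic form produced by $\projb{C'}$. A secondary subtlety is the multi-receiver case: when $\tilde{\role q}$ has more than one element, the sender monitor still emits once on $u'$ but there are several receiver queues; I need the local-binder semantics (rule \rname{LInpList}, which lets all binders in a group react to one input) to justify that the single monitor output is received into all the relevant queues — or, more carefully, to confirm that the encoding in \cref{fig:chorenc} and $\buildqueues{\cdot}{\cdot}$ is arranged so that exactly one queue per connection binder consumes it, matching the choreography's own one-receiver-at-a-time semantics (rules \rname{GRVal}/\rname{GRVal2}). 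Pinning down this correspondence so that ``precisely two'' $\tau$ steps holds — no more, no fewer — is where the argument needs the most care, and I would isolate it as a separate lemma about the shape of $\projb{G}_{\role r;D;\mapping}$ and $\buildqueues{D}{\mapping}$ if the inline argument becomes unwieldy.
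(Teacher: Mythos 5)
Your proposal is correct and follows essentially the same route as the paper: an induction on the derivation of $C \lto{\tau} C'$, with the two $\tau$ steps arising from the component--monitor synchronisation followed by the monitor--queue synchronisation (and symmetrically for inputs), which is exactly the content of the paper's auxiliary Lemmas~\ref{lem:compout} and~\ref{lem:chorout} in Appendix~\ref{app:encres} --- the ``separate lemma about the shape of $\projb{G}_{\role r, D, \mapping}$'' you propose to isolate is precisely Lemma~\ref{lem:chorout}. On your multi-receiver worry, the first of your two readings is the right one: the single monitor output on $\mapping(\role p,\lab)$ is consumed by \emph{all} receiver queues simultaneously via \rname{LInpList}, matching the fill operation inserting $v$ into every receiver queue, so the step count stays at two.
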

\begin{proof}
By induction on the derivation of $C \lto{\tau} C'$.
\end{proof}

Proposition~\ref{pro:compsound} builds on the auxiliary
results, which show that component (non-internal) transitions are
matched precisely by their encodings, and that component evolutions 
are matched by two actions of their encodings (see Appendix~\ref{app:encres}).
The encoding requires an extra step with respect to the original 
specification for the purpose of the coordination of the distributed monitors
(while in GC semantics the monitoring is carried
out in a centralised way). 
We now state the completeness result for our encoding, where 
we use $\ltow{\tau}$ to represent zero or more $\tau$ transitions. 


\begin{proposition}[Encoding Completeness]
\label{cor:compcomp}
If $\projb{C} \ltow{\tau} P$ then there is $C'$ such that $C \ltow{\tau} C'$ and 
$P \ltow{\tau}\equiv \projb{C'}$.
\end{proposition}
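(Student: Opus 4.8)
The plan is to prove this completeness result by induction on the number of $\tau$-transitions in the derivation $\projb{C} \ltow{\tau} P$, using Proposition~\ref{pro:compsound} (Encoding Soundness) together with a suitable auxiliary ``diamond''-style lemma that characterises the shape of intermediate states of $\projb{C}$. The key structural observation, already emphasised in the discussion of Proposition~\ref{pro:compsound}, is that one source move $C \lto{\tau} C'$ is realised by exactly two target moves $\projb{C} \lto{\tau} P' \lto{\tau} P''$ with $P'' \equiv \projb{C'}$; the first of these target moves corresponds to a component handing a value to a distributed monitor queue (or a monitor handing a value to a receiver component), and the second completes the coordination. So the intermediate process $P'$ sits ``half-way'' through simulating a single source step.

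First I would establish the base case: if $\projb{C} \ltow{\tau} P$ uses zero transitions, then $P = \projb{C}$ and we take $C' = C$, with $P \ltow{\tau} \equiv \projb{C}$ trivially (zero steps). For the inductive step, suppose $\projb{C} \ltow{\tau} P \lto{\tau} P_{\mathrm{new}}$ with the first part using $n$ steps. By induction there is $C'$ with $C \ltow{\tau} C'$ and $P \ltow{\tau} \equiv \projb{C'}$, say $P \ltow{\tau} P^\star$ with $P^\star \equiv \projb{C'}$. Now the difficulty is that the extra transition $P \lto{\tau} P_{\mathrm{new}}$ need not commute cleanly with the $\tau$-path $P \ltow{\tau} P^\star$; the honest approach is to reason about $P$ directly. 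The main obstacle, and the technical heart of the argument, is a classification lemma stating: whenever $\projb{C} \ltow{\tau} P$, then $P$ is (up to $\equiv$) either of the form $\projb{C''}$ for some $C''$ with $C \ltow{\tau} C''$, or it is an ``intermediate'' configuration from which a further $\tau$-step reaches such a $\projb{C''}$. This is essentially a reachability invariant for the image of the encoding; I would prove it by induction on $n$, analysing which target rule fires at each step — the forwarder rules ($\m{PFwd}*$), the communication rule $\m{PCom}$ synchronising a monitor-queue output with a monitor input (or a monitor output with a component input via a forwarder), and the local-binder rules — and checking that each either completes a pending simulation or opens at most one new one, so that $P$ is never more than one $\tau$-step away from the encoding of a reachable source term.

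Given this invariant, the inductive step concludes as follows. Apply the invariant to $\projb{C} \ltow{\tau} P$: either $P \equiv \projb{C''}$ with $C \ltow{\tau} C''$, or $P \lto{\tau} P' \equiv \projb{C''}$ with $C \ltow{\tau} C''$. In the first case, the extra move $P \lto{\tau} P_{\mathrm{new}}$ is, by Proposition~\ref{pro:compsound} read contrapositively together with the invariant applied once more to $\projb{C''} \lto{\tau} P_{\mathrm{new}}$, either the first half of a simulation of some $C'' \lto{\tau} C'''$ (so $P_{\mathrm{new}} \lto{\tau} \equiv \projb{C'''}$ and we take $C' = C'''$, noting $C \ltow{\tau} C'''$) or — in the degenerate cases involving pure forwarder/structural rearrangement — already $\equiv \projb{C''}$, in which case $C' = C''$ works. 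In the second case, $P_{\mathrm{new}}$ is reached from $P$ which is one step short of $\projb{C''}$; here I would argue by confluence of the single pending $\tau$-step with $P \lto{\tau} P_{\mathrm{new}}$ (they involve disjoint redexes, being respectively a coordination step of one simulated interaction and either the completion of the same interaction or an independent one), so that $P_{\mathrm{new}} \ltow{\tau} \equiv \projb{C'''}$ with $C \ltow{\tau} C'''$. I expect this confluence-of-intermediate-states reasoning to be the most delicate bookkeeping, since it requires the precise ``exactly two steps, with a well-understood intermediate'' structure from the proof of Proposition~\ref{pro:compsound}; modulo that, everything else is a routine induction driven by the operational rules of Figures~\ref{fig:procsem} and~\ref{fig:chorenc}.
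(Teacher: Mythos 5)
Your overall strategy — induction on the derivation of $\projb{C} \ltow{\tau} P$, leaning on the ``one source step $=$ two target steps'' structure behind Proposition~\ref{pro:compsound} — is the same as the paper's (the paper records only this one-line strategy). However, your key classification lemma is too strong, and the final bookkeeping depends on it. You claim that every reachable $P$ is, up to $\equiv$, either an encoding $\projb{C''}$ of a reachable source term or \emph{exactly one} $\tau$-step away from one, and your concluding confluence argument speaks of ``the single pending $\tau$-step''. This fails as soon as the choreography enables two non-interfering interactions concurrently, which the $\m{GConc}$ rules permit: for $G = \gcom{p}{l_1}{q};\gcom{r}{l_2}{s};\gend$ with four distinct roles, the target can fire the first half of the simulation of the $l_1$-send (component output synchronising with $\role p$'s monitor via $\m{PCom}$) and then the first half of the $l_2$-send ($\role r$'s monitor), leaving \emph{two} pending monitor outputs that must each still reach their queues before the state matches any $\projb{C''}$ (recall that in $\projb{C''}$ the in-transit value already sits in the queue via the fill operation, not in a suspended monitor output). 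So $P$ can be arbitrarily many steps away from an encoding, bounded only by the number of concurrently enabled interactions. Your own observation that ``each step opens at most one new pending simulation'' does not yield ``at most one pending simulation'', since pending halves accumulate.

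The repair is to weaken the invariant to exactly the statement being proved — $P$ reaches some $\projb{C'}$ with $C \ltow{\tau} C'$ in \emph{zero or more} steps — and to carry, as the inductive data, the (finite) multiset of pending second-halves, showing that the new transition $P \lto{\tau} P_{\mathrm{new}}$ either discharges one of them, opens a new one (in which case the corresponding source step $C'' \lto{\tau} C'''$ exists and extends $C \ltow{\tau} C'''$), or is an internal step of a nested composite handled by the induction hypothesis. The confluence argument then concerns a set of pairwise-independent redexes (distinct monitors, queues, and component processes), not a single one; that independence is what makes it go through. A minor further point: your ``degenerate cases involving pure forwarder/structural rearrangement'' do not arise — the forwarder rules only relabel visible actions and never produce $\tau$-transitions on their own, so every $\tau$-step in the target is a $\m{PCom}$ synchronisation and falls into one of the cases above.
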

\begin{proof}
By induction on the derivation of $\projb{C} \ltow{\tau} P$.
\end{proof}

Proposition~\ref{cor:compcomp} says that 
any configuration reachable by the encoding 
may always evolve to (the encoding of) a configuration reachable by the component.
%
We thus have that behaviours of
the encoding are matched by the source component and also that
the encoding does not diverge.

The operational results attest the correctness of our encoding, 
ensuring that component and encoding behaviour match.
However correctness of component themselves is not addressed, 
so for instance if the component has some undesired behaviour,
also will the encoding. In order to single out components that
enjoy desired properties we introduce a typing discipline, described 
next, which results can be carried down to the 
compilation thanks to the operational correspondence.

%


 \section{Types for Components}

In this section we present our type system that addresses communication 
safety and progress, and also a preliminary investigation on the notion of
substitutability supported by our model. We start by mentioning some 
notions auxiliary to the typing.
First of all we introduce the type language that captures component 
behaviour, equipped with an operational semantics for the purpose of showing 
the correspondence with GC semantics. We refer to our types as \emph{local 
types} since they can be obtained by \emph{projecting} the protocol 
considering a specific role (cf.~\cite{HYC16}). In order to analyse composite 
components we also have to \emph{merge} two local types, in particular the 
result of a protocol projection and an (external) interface type. Finally, to 
capture base component behaviour we define an \emph{abstract} 
version of local binders, equipped with an operational semantics, used to 
check the \emph{conformance} of a base component w.r.t. a local type.

The syntax of local types, given
in Fig.~\ref{fig:localtypes}, builds on communication actions carried out on ports.
We assume a set of base types, ranged over by $B$ (e.g., $\mathtt{INT}$).
Output type $\tout z B T$ describes a component that sends a value of type $B$ on port ${z}$ and that 
afterwards follows specification $T$; likewise for the symmetric input type $\tinp z B T$. 
Choice type $\tchoice{z}{T_1}{T_2}$ describes a component that sends a selection value, either $\inl$ 
or $\inr$, on port ${z}$ and that afterwards follows specification $T_1$ or $T_2$, respectively. 
The symmetric branch type $\tbranch{z}{T_1}{T_2}$ describes a component that receives a selection value, either $\inl$ 
or $\inr$, on port ${z}$ and that afterwards follows specification $T_1$ or $T_2$ respectively. 
The type of the communicated values in the case of choice and branch types is implicit, but since it will be useful later on,
we introduce $\chot$ as the type for $\inl$ and $\inr$.
Local types include standard recursion constructs and termination that is specified by $\tend$. 
By convention we identify alpha-equivalent recursive types.

\begin{figure}[t]
$$
  \begin{array}{rllllll}
\quad T & ::= & \tout z B T & \text{(output)} 
& \pp & \tinp z B T & \text{(input)} \\
& \pp & \tchoice{z}{T}{T} & \text{(choice)}
& \pp & \tbranch{z}{T}{T} & \text{(branch)}\\
& \pp & \mu \recvar X. T & \text{(recursion)} 
& \pp & \recvar X & \text{(variable)}\\
& \pp & \tend & \text{(termination)} \quad \\
\end{array}
$$
\caption{Local Types}
\label{fig:localtypes}
\end{figure}

The labelled transition system given in Fig.~\ref{fig:localsem} defines the semantics of local types, intuitively explained above.
Since we are interested in showing the correspondence of local types and component behaviour,
the set of labels considered is the same considered for GC (except for $\tau$). 
Rule $\m{TOut}$ says an output type exhibits an output on the respective port of any value of the respective type,
after which activating continuation $T$; likewise for $\m{TInp}$. Rules for choice and branch types specify the
continuation to be activated according to the sent/received value. The
rule for recursion is standard.

\begin{figure}[t]
$$
  \begin{array}{cc}
  \multicolumn{2}{l}{
\infer[\m{TOut}]
{\tout y B T \lto{y!v} T}
{
v: B
}
\qquad
\infer[\m{TInp}]
{\tinp x B T \lto{x?v} T}
{
v: B
}
\qquad
\infer[\m{TRec}]
{ \mu \recvar X. T \lto{\lambda} T'}
{T \subst{\recvar X}{ \mu \recvar X. T} \lto{\lambda} T'}
}
\\[4mm]
\qquad
\infer[\m{TChoL}]
{\tchoice y {T_1} {T_2} \lto{y!\inl} T_1}
{
}
&
\infer[\m{TChoR}]
{\tchoice y {T_1} {T_2} \lto{y!\inr} T_2}
{
}
\\[4mm]
\qquad
\infer[\m{TBraL}]
{\tbranch x {T_1} {T_2} \lto{x?\inl} T_1}
{
}
&
\infer[\m{TBraR}]
{\tbranch x {T_1} {T_2} \lto{x?\inr} T_2}
{
}
\end{array}
$$
\caption{Local Type Semantics}
\label{fig:localsem}
\end{figure}

In GC the interaction in a composite component is controlled by a protocol, hence inner components
must support the behaviour expected by the protocol for the corresponding role. We thus require an
operation that projects the expected behaviour (i.e., local type) of a component given the protocol and 
the respective role. Also, the operation requires the relations between the protocol message labels and their 
respective communication ports together with the types of the communicated values. We denote by 
$\projc{G}{\role p}{D}{\labenv}$ the projection of protocol $G$ for role $\role p$ considering connection binders $D$ 
and mapping $\labenv$. Connection binders provide the association between message labels and communication ports (like in the encoding),
while $\labenv$ maps message labels to the (base) types of the communicated
values, allowing to ensure that both sender 
and receivers agree on the type of the value.
We remark that projection is a partial function since some conditions are necessary for the projection to exist.


We briefly present the projection operation, defined inductively on the structure of the protocol as shown in 
Fig.~\ref{fig:chorproj}. Protocol 
$\gcom p \lab{\tilde q};G$ specifies that 
$\role p$ sends message $\lab$ to $\tilde {\role q}$, so
the projection for the emitter role $\role p$ is an output type $\tout z B {(\ldots)}$, 
considering a respective connection binder $ \dbinder \lab { q}{w}{\roleport{p}{z}}$ to identify (output) port $z$, and mapping 
$\Delta$ to identify the type ($B$) of the value. The continuation of the output type is the projection of the continuation $G$. The 
projection for any of the receiver roles follows similar lines, differing in the resulting (input) type and the (input) port considered 
in the connection binders. The projection for a role $\role r$, different from $\role p$ and $\tilde {\role q}$, considers directly the projection of the continuation 
since $\role r$ is not involved in the message exchange. 

\begin{figure}[t]
$$
\begin{array}{lclll}
\projc{(\gcom p \lab{\tilde q};G)}{\role p}{D}{\labenv} & \triangleq & 
\tout z B \projc{G}{\role p}{D}{\labenv}
\\ & & \hspace{2cm}\hfill
( D =  \dbinder \lab { q}{w}{\roleport{p}{z}},  D' \; \wedge \; 
\labenv(\lab) = B) 
\\
\projc{(\gcom p \lab{\tilde q, q};G)}{{\role q}}{ D}{\labenv} & \triangleq & 
\tinp {w} B \projc{G}{{\role q}}{ D}{\labenv}
\\ & & \hfill
(
 D =  \dbinder \lab { q}{w}{\roleport{p}{z}},  D' \; \wedge \; 
\labenv(\lab) = B) 
\\
\projc{(\gcom p \lab{\tilde q};G)}{{\role r}}{ D}{\labenv} & \triangleq 
& \projc{G}{{\role r}}{ D}{\labenv}
\hfill ({\role r} \not \in {\role p}, \tilde{{\role q}})\\
\projc{(\gcom {} {\lab,v}{\tilde q,q};G)}{{\role q}}{ D}{\labenv} & \triangleq 
& \tinp {w} B \projc{G}{{\role q}}{ D}{\labenv} 
\\ & & \hfill
(
D = \dbinder \lab {q}{w}{\roleport{p}{z}},  D' 
\;\wedge\; v: B )
\\
\projc{(\gcom {} {\lab,v}{\tilde q};G)}{{\role r}}{ D}{\labenv} & \triangleq & 
\projc{G}{{\role r}}{ D}{\labenv}
\hfill ({\role r} \not \in  \tilde{{\role q}})
\\
\projc{(\gchoice p\lab{\tilde q} {G_1}{G_2})}{{\role p}}{ D}{\labenv} 
& \triangleq & \tchoice{z}{\projc{G_1}{{\role p}}{ D}{\labenv}}
{\projc{G_2}{{\role p}}{ D}{\labenv}}
\\ & & \hfill
( D = \dbinder \lab {q}{w}{\roleport{p}{z}},  D' ) 
\\
\projc{(\gchoice p\lab{\tilde q, q} {G_1}{G_2})}{{\role q}}{ D}{\labenv}
& \triangleq & \tbranch{w}{\projc{G_1}{{\role q}}{ D}{\labenv}}
{\projc{G_2}{{\role q}}{ D}{\labenv}} 
\\ & & \hfill
(
D = \dbinder \lab {q}{w}{\roleport{p}{z}},  D')
\\
\projc{(\gchoice p\lab{\tilde q} {G_1}{G_2})}{{\role r}}{ D}{\labenv} 
& \triangleq & \projc{G_1}{{\role r}}{ D}{\labenv}
\\ & & \hfill
 ({\role r} \not \in {\role p}, \tilde{{\role q}}\;\wedge\;
\projc{G_1}{{\role r}}{ D}{\labenv} = \projc{G_2}{{\role r}}{ D}{\labenv})
\\
\projc{(\gchoice {}{\lab,v}{\tilde q, q} {G_1}{G_2})}{{\role q}}{ D}{\labenv} 
& \; \triangleq \; & 
\tbranch{w}{\projc{G_1}{{\role q}}{ D}{\labenv}}{\projc{G_2}{{\role q}}{ D}{\labenv}} 
\\ & & \hfill 
(
D = \dbinder \lab {q}{w}{\roleport{p}{z}},  D'
 \;\wedge\; v: \chot)
\\
\projc{(\gchoice {}{\lab,\inl}{\tilde q} {G_1}{G_2})}{{\role r}}{ D}{\labenv} 
& \triangleq & 
\projc{G_1}{{\role r}}{ D}{\labenv} 
\hfill ({\role r} \not \in  \tilde{{\role q}})\\
\projc{(\gchoice {}{\lab,\inr}{\tilde q} {G_1}{G_2})}{{\role r}}{ D}{\labenv}  
& \triangleq & \projc{G_2}{{\role r}}{ D}{\labenv} 
\hfill ({\role r} \not \in  \tilde{{\role q}})\\
\projc{(\mu \recvar X.G)}{{\role r}}{ D}{\labenv} 
 & \triangleq & \mu \recvar X.(\projc{G}{{\role r}}{ D}{\labenv} )
\hfill ({\role r} \in \mathit{roles}(G))\\
 \projc{(\mu \recvar X.G)}{{\role r}}{ D}{\labenv} 
 & \triangleq & \tend
\hfill ({\role r} \not \in \mathit{roles}(G))\\
\projc{\recvar X}{{\role r}}{ D}{\labenv}  & \triangleq & \recvar X
\\
\projc{\gend}{{\role r}}{ D}{\labenv}  & \triangleq & \tend
\end{array}
$$
\caption{Protocol Projection.}
\label{fig:chorproj}
\end{figure}

The projection of protocol  $\gcom {} {\lab,v}{\tilde q};G$, where value $v$ is in 
transit, has two cases: for any of the roles waiting to receive the message, which follows similar lines to the input but where the mapping 
$\Delta$ is not used since (the type of) the value is known; for any other role, which is defined as the projection of the continuation.
In the latter case, notice that any other role may refer for instance to the emitting role, who has already sent the message. The
respective type at this stage does not specify the output since the protocol no longer expects the output from the respective 
component (the value is already in transit), and analogously for roles that have already received the message.

The projection of protocol $\gchoice p\lab{\tilde q} {G_1}{G_2}$ also considers the respective connection binders to identify
the ports used by the emitter and by the receivers, yielding choice or branch types respectively. In both cases the continuations 
are given by the projections of the respective continuations of the protocol $G_1$ and $G_2$. For a role $\role r$ not involved in the 
selection, the projection is defined 
only if the projection of the continuations $G_1$ and $G_2$ for role $\role r$ is the same. When the selection value is in transit in protocol  
$\gchoice {}{\lab,v}{\tilde q} {G_1}{G_2}$, the projection for roles waiting to receive the selection follows the same lines as before, where
we check that the value in transit is of the appropriate type ($\chot$). For other roles, since the selection is known, the projection
considers the implied continuation, which is crucial for the sake of typing preservation given that, for instance,
the sender has already committed on the selected branch.
%
The remaining cases for the projection are direct, where we ensure
(like in the encoding) that the projection for the recursive protocol is meaningful.

Protocol projection is a crucial operation when typing the composite component,
allowing to specify the behaviour of all inner components. However, in the case 
of the
distinguished component that interacts also with the external context, the expected behaviour is given by a 
combination of both the protocol projection for the respective role and the 
(external) behaviour of the composite component itself. We introduce an auxiliary notion 
that allows us to build such combination of two local types, a ternary relation which we 
refer to as merge, defined by the rules given in Fig.~\ref{fig:merge}. By $ T_1 \join T_2 = T_3$ 
we denote that $T_1$ can be merged to $T_2$ yielding type $T_3$. Intuitively, the
rules describe (all possible ways on) how to shuffle the communication actions of the
two source types so as to yield the combined one.

\begin{figure*}[t]
$$
\begin{array}{cc}
\infer[\m{MOutL}]
{\tout{z}{B}{T_1} \join T_2 = \tout{z}{B}{T_3}}
{T_1 \join T_2 = T_3}
& 
\infer[\m{MOutR}]
{T_1 \join \tout{z}{B}{T_2} = \tout{z}{B}{T_3}}
{T_1 \join T_2 = T_3}
\\\\
\infer[\m{MInpL}]
{\tinp{z}{B}{T_1} \join T_2 = \tinp{z}{B}{T_3}}
{T_1 \join T_2 = T_3}
& 
\infer[\m{MInpR}]
{T_1 \join \tinp{z}{B}{T_2} = \tinp{z}{B}{T_3}}
{T_1 \join T_2 = T_3}
\\\\
\infer[\m{MChoiceL}]
{\tchoice{z}{T_1}{T_2} \join T_3 = \tchoice{z}{T_1'}{T_2'}}
{T_1 \join T_3 = T_1' \qquad T_2 \join T_3 = T_2'}
\quad & 
\infer[\m{MChoiceR}]
{T_1 \join \tchoice{z}{T_2}{T_3} = \tchoice{z}{T_2'}{T_3'}}
{T_1 \join T_2 = T_2' \qquad T_1 \join T_3 = T_3'}
\\\\
\infer[\m{MBranchL}]
{\tbranch{z}{T_1}{T_2} \join T_3 = \tbranch{z}{T_1'}{T_2'}}
{T_1 \join T_3 = T_1' \qquad T_2 \join T_3 = T_2'}
\quad & 
\infer[\m{MBranchR}]
{T_1 \join \tbranch{z}{T_2}{T_3} = \tbranch{z}{T_2'}{T_3'}}
{T_1 \join T_2 = T_2' \qquad T_1 \join T_3 = T_3'}
\\\\
\infer[\m{MEndL}]
{\tend \join T = T}
{\mathit{fv}(T) = \emptyset}
&
\infer[\m{MEndR}]
{T \join \tend = T}
{\mathit{fv}(T) = \emptyset}
\\\\
\infer[\m{MRec}]
{\mu \recvar X. T_1 \join \mu \recvar X.T_2 = \mu \recvar X. T_3}
{T_1 \join T_2 = T_3}
&
\infer[\m{MVar}]
{\recvar X \join \recvar X = \recvar X}
{}
\end{array}
$$
\caption{Type Merge}
\label{fig:merge}
\end{figure*}

We briefly comment on the rules of Fig.~\ref{fig:merge}. $\m{MOutL}$ and $\m{MOutR}$
yield an output type, originating from left and right source types respectively, where the
yielded continuation is obtained by merging the continuation of the original output with 
the other source type. Rules $\m{MInpL}$ and $\m{MInpR}$ follow similar lines for the input type.
We notice that, in rules $\m{MChoiceL}$, $\m{MChoiceR}$, $\m{MBranchL}$ and $\m{MBranchR}$,
to merge choice and branch types we consider the other source type ($T_3$ in left rules, $T_1$ in
right rules) is merged to both continuations so as to yield the resulting continuations. 
Rules $\m{MEndL}$ and $\m{MEndR}$ say that merging $\tend$ with some other type $T$ yields type $T$,
provided $T$ has no free recursion variables (we use $\mathit{fv}$ to refer to this set) so as to exclude
merging non recursive types with recursive types. Rules $\m{MRec}$ and 
$\m{MVar}$ specify the merging of recursive behaviour, which must be matched on both source sides 
both at the beginning of the recursion and at the end, yielding corresponding types. We assume the 
merge operation is defined only when the sets of ports used by the source types are disjoint.

Protocol projection and merge are used in the typing 
of composite components, identifying the expected behaviour for each subcomponent,
while accounting for the external interface. For base components instead we must check the 
compatibility of the local type prescribed for the component (the external interface) with the 
local binders. To this end we define an abstract version of the local binders so as to capture 
their operational behaviour at static compilation time. On the one hand, we abstract away 
from values known only at runtime but, on the other hand, account for selections that can 
be determined statically. In particular, we account for choices that are defined exclusively on 
branching by keeping track of the dependency, allowing us to address choice propagation
between different components. 

Abstract local binders are denoted with $\absL$, defined following the lines
of local binders 
$\absL ::= \lbinder{y} {\tilde{\sigma} } {\mathit{f}(\tilde{x})} \; \pp\;  \absL,\absL$,
where we extend the stores $\tilde \sigma$ to map variables not only to values
but also to base types.
The semantics for abstract local binders is given in Fig.~\ref{fig:abssem},
following closely the one given in Fig.~\ref{fig:gc_semantics_base}. 
We use $\absv$
to denote either a base type $B$, choice type $\chot$, or $\inl$ - $\inr$ values,
used in the transitions to represent both a carried type or a value.
The main difference w.r.t. local binder semantics is given by rules 
$\m{AOutVal}$ and $\m{AOutType}$ where in the former we are
able to statically evaluate the function, in which case necessarily the store 
maps the relevant variables exclusively to values, and the transition exhibits the 
respective value (hence exactly the same as before). The latter
instead considers the case when it is not possible to evaluate the function,
hence the store contains mapping of relevant values to base types,
in which case the transition considers the return type of the function. Also
$\m{AConst}$ considers the return type of the function, leaving
the handling of constants open to runtime functions. The 
remaining rules abstract away from values or base types ($\absv$).

\begin{figure*}[t]
$$
\begin{array}{c}
           \infer[\m{AConst}]
            {
            \lbinder y{\cdot} {\mathit{f}()}
            \lto{{y}!{B}}
            \lbinder y{\cdot}{\mathit{f}()}}
        {f : \_ \rightarrow B}
\qquad
            \infer[\m{AOutVal}]
            {
            \lbinder y{\sigma,{\til\sigma}} {\mathit{f}(\til{x})}
            \lto{{y}!{v}}
            \lbinder y{\til \sigma}{\mathit{f}(\tilde{x})}
            }
        {
			\{\til x\} \subseteq \dom(\sigma) & f(\sigma(\til x)) \eval v
        }
        \\[1.5ex]
\qquad
            \infer[\m{AOutType}]
            {
            \lbinder y{\sigma,{\til\sigma}} {\mathit{f}(\til{x})}
            \lto{{y}!{B}}
            \lbinder y{\til \sigma}{\mathit{f}(\tilde{x})}
            }
        {
			\{\til x\} \subseteq \dom(\sigma) & f(\sigma(\til x)) \!\not\,\eval v
			& f : \_ \rightarrow B
        }        
\\[1.5ex]
\infer[\m{LInpNew}]
{
\lbinder y{\til\sigma}{f(\tilde{x})}
\lto{{x}?{\absv}}
\lbinder y{\til\sigma,\{x \mapsto \absv\}} {f(\tilde{x})}
}
{
x \in \bigcap_{\sigma_i \in \til \sigma}\dom(\sigma_i)
&
x \in \til x
}
\\[1.5ex]
	\infer[\m{LInpUpd}]
        {
        \lbinder y{\til\sigma_1,\sigma,\til\sigma_2} {f(\tilde{x})}
       \lto{{x}?{\absv}}
        \lbinder y{\til\sigma_1,\sigma[x \mapsto \absv],\til\sigma_2} {f(\tilde{x})}
        }
        {
		x \in \bigcap_{\sigma_i \in \til \sigma_1}\dom(\sigma_i)
		&
		x \not\in \dom(\sigma)
		&
		x \in \til x
        }
\\[1.5ex]
\infer[\m{LInpDisc}]
{
	\lbinder y{\til\sigma} {f(\tilde{x})}
       \lto{{x}?{\absv}}
    \lbinder y{\til\sigma} {f(\tilde{x})}
}
{
	x \not\in \til x
}
\\[1.5ex]
\infer[\m{LOutLift}]
       {
       L_1, L_2
       \lto{{y}!{\absv}}
       L_1', L_2
       }
       {
	L_1 \lto{{y}!{\absv}}L_1'}
\qquad
       \infer[\m{LInpList}]
       {
       L_1, L_2
       \lto{{x}?\absv}
       L_1', L_2'
       }
       {
	L_1 \lto{{x}?\absv}L_1'
\qquad
	L_2 \lto{{x}?\absv}L_2'}
	\\[1.5ex]
\end{array}
$$
\caption{Abstract Local Binder Semantics.}
\label{fig:abssem}
\end{figure*}

We may now introduce conformance between a set of abstract local binders and a local type,
defined by the rules shown in Fig.~\ref{fig:conformance}, described next. Essentially conformance simulates
the (abstract) behaviour of the local binders so as to assert its compatibility to the behaviour 
prescribed by the local type. Rule $\m{CInp}$ says
that $\absL$ is compatible with the input, if $\absL$ exhibits the (abstract) input, leading to 
a configuration that is conformant with the continuation of the input. Rule $\m{COut}$ follows
similar lines. Rule $\m{CBra}$ considers the two possible continuations, taking into account
the actual value that corresponds to each branch ($\inl$ or $\inr$). In such a way we may 
(statically) inform on the selection value of choices that depend on the branching. This notion is captured
in rules $\m{CChoL}$ and $\m{CChoR}$ where the selection value is available, hence the 
conformance may focus on the relevant branch. However, it is not always possible to statically
determine the selection value, hence the binders exhibit an output of $\chot$ type, for which
rule $\m{CCho}$ ensures that both branches are conformant with the type. 

Rules $\m{CRec}$
and $\m{CVar}$ ensure that the recursion body leaves the state unaltered, using a dedicated environment $\recenv$
(an association between recursion variables and abstract local binders). Intuitively, this means
that local binder queues are of the same size at the start of each
iteration of the recursion, hence that inputs are matched by respective outputs and vice-versa. Rule
$\m{CEnd}$ says that local binders are always conformant with inaction $\tend$, which reveals our
interpretation for conformance:
 binders can actually have more behaviour than the one prescribed by the local type, hence 
the focus is on ensuring local binders can carry out the specified behaviour. 
We remark that the simulation carried out for abstract local binders 
 strictly follows the type structure.

\begin{figure*}[t]
$$
\begin{array}{cc}
\infer[\m{CInp}]
{\recenv \vdash \absL \conforms \tinp{x}{B}{T} }
{
\absL \lto{{x}?{B}} \absL' \qquad \recenv \vdash \absL' \conforms T
}
&
\infer[\m{COut}]
{\recenv \vdash \absL \conforms  \tout{y}{B}{T} }
{
\absL \lto{{y}!{B}} \absL' \qquad \recenv \vdash \absL' \conforms T
}
\\[4mm]
\multicolumn{2}{c}{
\infer[\m{CBra}]
{\recenv \vdash \absL \conforms \tbranch{x}{T_1}{T_2}}
{
\absL \lto{{x}?{\inl}} \absL' \qquad \recenv \vdash \absL' \conforms T_1
\qquad
\absL \lto{{x}?{\inr}} \absL'' \qquad \recenv \vdash \absL'' \conforms T_2
}}
\\[4mm]
\multicolumn{2}{c}{
\infer[\m{CCho}]
{\recenv \vdash \absL \conforms \tchoice{y}{T_1}{T_2}}
{
\absL \lto{y!\chot} \absL' 
\qquad \recenv \vdash \absL' \conforms T_1
\qquad \recenv \vdash \absL' \conforms T_2
}
}
\\[4mm]
\infer[\m{CChoL}]
{\recenv \vdash \absL \conforms \tchoice{y}{T_1}{T_2}}
{
\absL \lto{y!\inl} \absL' 
\qquad \recenv \vdash \absL' \conforms T_1
}
\qquad&
\infer[\m{CChoR}]
{\recenv \vdash \absL \conforms \tchoice{y}{T_1}{T_2}}
{
\absL \lto{y!\inr} \absL' 
\qquad \recenv \vdash \absL' \conforms T_2
}
\\[4mm]
\multicolumn{2}{c}
{
\infer[\m{CRec}]
{\recenv \vdash \absL \conforms \mu \recvar X. T}
{\recenv, \recvar X : \absL \vdash \absL \conforms T }
\qquad
\infer[\m{CVar}]
{\recenv, \recvar X: \absL \vdash \absL \conforms \recvar X}
{}
\qquad
\infer[\m{CEnd}]
{\recenv \vdash \absL \conforms  \tend}
{}
}
\end{array}
$$
\caption{Conformance.}
\label{fig:conformance}
\end{figure*}

Having presented the main principles of our type system, we now introduce 
auxiliary syntactic notions used in the typing rules.
Namely, we require an operation that given local binders $L$ yields the abstract 
counterpart, denoted $\abs{L}$. The operation replaces all values different from $\inl$ and
$\inr$ by their respective types in the queues, leaving $\inl$ and $\inr$ untouched.
In order to realise the forwarder specification, we introduce an operation that 
renames a given type $T$ considering forwarders $F$, up to the sets of input and output ports, 
denoted $\rename{T}{F}{\tilde x}{\tilde y}$, defined in expected lines
(see Appendix~\ref{app:defs}).
We consider the set of ports so as to check that inputs
(and branches) specified in the type use input ports and that outputs (and choices) use output ports.
This operation thus ensures composite components respect the specified interfaces (otherwise it
is undefined), while the same principle is ensured for base components using a dedicated
predicate, denoted $\tilde x;\tilde y \vdash {T}$, defined in non-surprising lines.
%

We may now present our typing rules, shown in Fig.~\ref{fig:typing}. Rule $\m{TBaseC}$ addresses the 
base component, asserting it is well-typed provided the abstract version of the local binders is conformant
with the type of the component (considering an empty recursion environment). We also check that inputs
(and branches) are specified in the type considering input ports and symmetrically for output ports ($\tilde x; \tilde y \vdash T$). 
Rule $\m{TCompC}$ shows
the typing for the composite component. To type the subcomponent that interacts both with other subcomponents and with the external 
environment we 
consider a type merge: on the one hand the component participates in the protocol so the 
the projection of the protocol for the respective role is considered as part of the behaviour of 
the component. On the other hand the behaviour expected externally, up to a forwarder renaming ($\rename{T}{F}{\tilde x}{\tilde y}$), is also
carried out by the component. So the type merge combines the two separate behaviours
and yields the type expected for the component. For the remaining components we consider directly the projection of 
the protocol in the respective roles, where the association between message labels and base types $\labenv$
is the same so as to ensure communication safety. The set of roles considered is the set of roles
assigned in $R$ (noted $\mathit{roles}(R)$) so as to ensure all components are typed, potentially w.r.t. an $\tend$ 
projection since the set of roles used in the protocol may be smaller. This is the case for roles/components that 
have finished their contribution in the protocol, relevant for typing preservation.
Rules $\m{TRole}$ and $\m{TRoleL}$ distribute the typing assignments to the respective role assignments. 

\begin{figure*}[t]
$$
\begin{array}{c}
\infer[\m{TBaseC}]
{\chorboxb{\tilde x}{\tilde y}{ L} : T}
{
\emptyset \vdash \abs{ L} \conforms T 
\qquad
\tilde x; \tilde y \vdash T
}
\\\\
\infer[\m{TCompC}]
{
\chorbox{\tilde x}{\tilde y}{ D}{\role{r}[ F]}{ R}{G} : T
}
{
\begin{array}{c}
\projc{G}{\role r}{ D}{\labenv} \join 
\rename{T}{ F}{\tilde x}{\tilde y} = T_{\role r}
\qquad
\mathit{roles}(G) \subseteq \mathit{roles}(R) = \role r, \tilde {\role p} 
\\[1mm]
\role r : T_{\role r}, {\role p}_1: \projc{G}{{\role p}_1}{ D}{\labenv},\ldots,
{\role p}_k: \projc{G}{{\role p}_k}{ D}{\labenv} \vdash  R 
\end{array}
}
\\\\
\infer[\m{TRole}]
{\role r: T \vdash \roleas{r}{C}}
{C :T}
\qquad
\infer[\m{TRoleL}]
{\role r: T, {\role p}_1: T_1,\dots, {\role p}_k: T_k \vdash \roleas{r}{C},  R}
{ C: T \qquad {\role p}_1: T_1,\dots, {\role p}_k: T_k \vdash   R}
\end{array}
$$
\caption{Type System.}
\label{fig:typing}
\end{figure*}

Although an actual implementation of the type-checking procedure is out of 
the scope of this paper, we nevertheless remark that type-checking is decidable 
given all the required derivations (namely merge and conformance) are strictly 
bound by the structure of types.
Our type system ensure systems enjoy communication safety and progress
and informs on reuse and substitution principles.


We now present our main results and briefly discuss the proof structure.
We first capture soundness, by showing typing is preserved
under system evolution. 

\begin{theorem}[Typing Preservation]\label{the:subred}
Let $C:T$ and $C \lto{\lambda} C'$. Then:
\begin{itemize}
\item{if $\lambda = \tau$ then $C':T$.}
\item{if $\lambda \neq \tau$ and $T \lto{\lambda} T'$ then $C':T'$.}
\end{itemize}
\end{theorem}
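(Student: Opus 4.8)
The plan is to prove the statement by induction on the derivation of $C \lto{\lambda} C'$, with a case analysis on the last rule, reducing each composite case to the typing (and transition) of the subcomponent that moves. Before that I would establish a few commutation lemmas. \emph{Projection preservation}: if $G \lto{\alpha} G'$ with $\m{role}(\alpha) = \role t$ and the communicated value is consistent with $\labenv$, then $\projc{G}{\role t}{D}{\labenv}$ performs the corresponding port-level transition to $\projc{G'}{\role t}{D}{\labenv}$, while $\projc{G}{\role s}{D}{\labenv} = \projc{G'}{\role s}{D}{\labenv}$ for every $\role s \neq \role t$; this goes by induction on the protocol transition, using that projection commutes with recursion unfolding and with substitution. \emph{Renaming preservation}: if $T \lto{\lambda} T'$ then $\rename{T}{F}{\tilde x}{\tilde y} \lto{\lambda'} \rename{T'}{F}{\tilde x}{\tilde y}$, where $\lambda'$ applies the forwarder renaming to the subject of $\lambda$. \emph{Abstract/concrete correspondence}: $L \lto{\lambda} L'$ implies $\abs{L} \lto{\widehat\lambda} \abs{L'}$, where $\widehat\lambda$ replaces the carried value by its base type (leaving $\inl,\inr$ untouched); moreover, by the well-formedness constraints, (abstract) local-binder transitions on a fixed output port are deterministic, and transitions on distinct ports commute on the nose (a diamond property, which also has to be checked against the $\m{LInpNew}$/$\m{LInpUpd}$ queue bookkeeping). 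Finally, \emph{conformance is closed under unfolding} of recursive types, and — from the correspondence lemma plus the assumption that function evaluation respects return types — a \emph{value-typing} fact: a component typed by a type whose (unfolded) head is $\tout{y}{B}{T'}$ only ever outputs values of type $B$ on $y$.

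For a \textbf{base component} $\chorboxb{\tilde x}{\tilde y}{L}$ the $\tau$ clause is vacuous. In the non-$\tau$ clause we have $L \lto{\lambda} L'$ from $\m{OutBase}$/$\m{InpBase}$, $\emptyset \vdash \abs{L} \conforms T$ and $\tilde x;\tilde y \vdash T$ from $\m{TBaseC}$, and $T \lto{\lambda} T'$ pins down the unfolded shape of $T$. That shape forces the conformance derivation to use the matching rule (one of $\m{CInp}$, $\m{COut}$, $\m{CBra}$, $\m{CCho}$, $\m{CChoL}$, $\m{CChoR}$, using closure under unfolding for a leading $\mu$), giving $\abs{L} \lto{\widehat\lambda} \absL'$ with $\emptyset \vdash \absL' \conforms T'$; by the correspondence lemma and determinism $\absL' = \abs{L'}$, and $\tilde x;\tilde y \vdash T'$ holds since $T'$ uses a subset of the ports of $T$, so $\chorboxb{\tilde x}{\tilde y}{L'} : T'$. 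Among the \textbf{composite} cases, $\m{Internal}$ is immediate (apply the IH to the subcomponent's $\tau$ move, which preserves its type, and re-apply $\m{TCompC}$ unchanged). In the sub-case of $\m{OutChor}$/$\m{InpChor}$ where the moving role is one of $\tilde{\role p}$ (not $\role r$), that subcomponent is typed directly by a protocol projection: projection preservation turns $G \lto{\alpha} G'$ into the matching transition of that projection, the IH gives the new subcomponent the projection of $G'$, projection preservation also shows the projections of all other roles are unchanged, and $\m{TCompC}$ re-applies with the same external type $T$ (these are $\tau$ moves).

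The delicate cases are those touching the externally-facing role $\role r$: $\m{OutComp}$, $\m{InpComp}$, and the $\m{OutChor}$/$\m{InpChor}$ sub-cases where the moving role is $\role r$. Here the subcomponent $C_{\role r}$ is typed by a \emph{merge}, $T_{\role r} = \projc{G}{\role r}{D}{\labenv} \join \rename{T}{F}{\tilde x}{\tilde y}$, and the transition it makes need not be the head action of that particular merge, since merge shuffles the two sides non-deterministically. To bridge this I would prove a \emph{type-reshuffling lemma}: if $C_0 : T_1 \join T_2$ and the (unfolded) head of $T_2$ is an action $a$ on a port disjoint from those of $T_1$, then $C_0 : a.(T_1 \join T_2')$ with $T_2'$ the residual of $T_2$ after $a$ (and symmetrically on $T_1$). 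For base $C_0$ this is an induction on $T_1$ using the diamond property of local binders and determinism of outputs; for composite $C_0$ it is an induction on component structure, pushing the reshuffle through the $\m{TCompC}$ premises (where $T_1 \join T_2$ itself sits inside a further renaming and merge). With this in hand, the $\m{OutComp}$ case (with $T \lto{y!v} T'$, forwarder $\fbinder{y}{z}$) runs as follows: renaming preservation gives $\rename{T}{F}{\tilde x}{\tilde y}$ the head $\tout{z}{B}{\rename{T'}{F}{\tilde x}{\tilde y}}$; reshuffling yields $C_{\role r} : \tout{z}{B}{(\projc{G}{\role r}{D}{\labenv} \join \rename{T'}{F}{\tilde x}{\tilde y})}$; the IH on $C_{\role r} \lto{z!v} C'_{\role r}$, now matched by the type, gives $C'_{\role r} : \projc{G}{\role r}{D}{\labenv} \join \rename{T'}{F}{\tilde x}{\tilde y}$; and $\m{TCompC}$ re-applies with external type $T'$, all other subcomponents, their projections, and $G$ being untouched. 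The $\m{InpComp}$ case is symmetric. The $\m{OutChor}$/$\m{InpChor}$ sub-case with $\role r$ combines reshuffling with projection preservation: we reshuffle to expose the protocol-side head action, use value-typing to see the emitted value has the type $\labenv$ assigns to the label, apply the IH, and recover the merge with $\projc{G'}{\role r}{D}{\labenv}$ and the unchanged external type $T$. I expect the type-reshuffling lemma — and in particular making precise and carrying through its composite-component case — to be the main obstacle, with the diamond property of abstract local binders (and its sensitivity to the queue bookkeeping in $\m{LInpNew}$/$\m{LInpUpd}$) a close second.
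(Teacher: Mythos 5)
Your proposal is correct and follows essentially the same route as the paper: induction on the transition derivation, with the merge handled by a swap/reshuffling argument (the paper's Swap Composite and Swap Base lemmas), an abstract/concrete correspondence for local binders (the paper's Harmony lemma), and a diamond property for abstract local binders (the paper's Swap Abstract lemma), together with associativity/commutativity of merge. Your ``type-reshuffling lemma'' is just a repackaging of the paper's swap lemmas, and the auxiliary facts you flag as the main obstacles are precisely the ones the paper isolates in its appendix.
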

\begin{proof}
By induction on the derivation of $C \lto{\lambda} C'$, building
on auxiliary results.
\end{proof}

Theorem~\ref{the:subred} identifies two separate cases: if the component 
performs an internal action, then the interface is untouched;
if the components exhibits an input or output and the local type matches
the communication action, then typing is preserved by the continuations.
Since GC semantics considers interaction is controlled by protocols, 
only behaviours prescribed by the protocol can actually occur.
We may then focus on actions available in the type as the protocol
projections capture the behaviour realisable by the respective
component. However, when typing the component that interfaces
with external and internal contexts
we must lift this notion up to type merge. 
Intuitively the component may exhibit actions that have been shuffled in the type structure 
(not immediately observable in the type), in which case we show that such 
actions can be shuffled out or \emph{swapped} and preserve typing. We remark 
that this property does not entail that any action can be shuffled out (otherwise 
merge would not be meaningful), as the proof regards actions that the component 
immediately exhibits (see Appendix~\ref{app:typeres}). 

Theorem~\ref{the:progress} states our our progress property,
attesting well-typed components can, after a number of internal steps, carry out the actions prescribed by the types.

\begin{theorem}[Progress]
\label{the:progress}
If $C:T$ and $T \lto{\lambda} T'$ then $C \ltow{\lambda} C'$.
\end{theorem}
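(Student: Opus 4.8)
The plan is to prove the statement by induction on the structure of the component $C$ (components are finite, since the only form of recursion in the syntax occurs inside protocols), using the typing-preservation result \cref{the:subred} to maintain the typing invariant along internal steps. The induction hypothesis is that the statement holds for every strictly smaller subcomponent. I would then split on whether $C$ is base or composite.

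\emph{Base case.} If $C = \chorboxb{\tilde x}{\tilde y}{L}$ and $C : T$, rule $\m{TBaseC}$ gives $\emptyset \vdash \abs{L} \conforms T$ and $\tilde x;\tilde y \vdash T$. A base component has no $\tau$-transitions, so here $\ltow{\lambda}$ must be a single $\lto{\lambda}$. Inspecting $T \lto{\lambda} T'$ (unfolding recursion via $\m{TRec}$/$\m{CRec}$ as needed): if $\lambda$ is an input — of a value or of $\inl$/$\inr$ — its port lies in $\tilde x$, and base components are input-receptive, since for each binder one of $\m{LInpNew}$, $\m{LInpUpd}$, $\m{LInpDisc}$ applies; hence $L \lto{\lambda} L'$ by $\m{LInpList}$ and $C \lto{\lambda} C'$ by $\m{InpBase}$. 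If $\lambda$ is an output, then the conformance derivation for $\abs{L} \conforms T$ ends with one of $\m{COut}$, $\m{CCho}$, $\m{CChoL}$, $\m{CChoR}$, yielding an abstract output of $\abs{L}$ on the relevant port; since $\abs{\cdot}$ preserves store domains and the $\inl$/$\inr$ markers and functions evaluate whenever supplied their parameters, the corresponding concrete binder fires by $\m{LConst}$ or $\m{LOut}$ (composed with $\m{LOutLift}$), so $C$ performs a matching output by $\m{OutBase}$.

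\emph{Composite case.} Let $C = \chorbox{\tilde x}{\tilde y}{D}{\role r[F]}{R}{G}$ with $C : T$ via $\m{TCompC}$, so $\projc{G}{\role r}{D}{\labenv} \join \rename{T}{F}{\tilde x}{\tilde y} = T_{\role r}$, $\mathit{roles}(G) \subseteq \mathit{roles}(R)$, and the assignments in $R$ are typed: $\role r$ by $T_{\role r}$ and every other role $\role p$ by $\projc{G}{\role p}{D}{\labenv}$. Every externally observable transition of $C$ arises, by $\m{OutComp}$ or $\m{InpComp}$, from a forwarded action of the interface subcomponent $C_{\role r}$ along $F$. From $T \lto{\lambda} T'$ and the definition of $\rename{\cdot}{F}{\tilde x}{\tilde y}$ I would extract the $\role r$-internal counterpart $\lambda^{\circ}$ of $\lambda$ (on $\role r$'s own ports), with $\rename{T}{F}{\tilde x}{\tilde y} \lto{\lambda^{\circ}} \rename{T'}{F}{\tilde x}{\tilde y}$. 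Since merge is defined only on arguments with disjoint ports, $\lambda^{\circ}$ is not a protocol action of $\projc{G}{\role r}{D}{\labenv}$; hence $\lambda^{\circ}$ does occur in $T_{\role r}$, but — because merge shuffles — possibly only behind a finite prefix of protocol actions of $T_{\role r}$ (finite after at most one unfolding of the outermost recursion, by $\m{MRec}$). The core step is an internal-progress sub-lemma, proved by induction on the length of this prefix: as long as $T_{\role r}$ still has a protocol action at its head, some $\tau$-step of $C$ advances the protocol. A protocol action of a projection is always backed by a transition of $G$ (by inspection of the projection clauses against the protocol semantics of \cref{fig:g_semantics} — a protocol other than $\gend$ always has a transition, with the asynchronous runtime terms and the one-to-many routing through $\buildqueues{D}{\mapping}$ handled as in the encoding); by the induction hypothesis the subcomponent playing the active role can weakly perform the corresponding local action, and via the connection binder in $D$ with the matching label, $\m{OutChor}$/$\m{InpChor}$ (or $\m{Internal}$ for a subcomponent's own $\tau$) lets $C$ take the $\tau$-step, after which \cref{the:subred} restores the invariant "every subcomponent is typed by the projection of the advanced protocol". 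Iterating, $C$ reaches a state in which $C_{\role r}$ — typed by the merge of $\rename{T}{F}{\tilde x}{\tilde y}$ with the reduced projection, and now with $\lambda^{\circ}$ at the head — exhibits $\lambda^{\circ}$ by the induction hypothesis (it is a strictly smaller component); one application of $\m{OutComp}$/$\m{InpComp}$ along $F$ turns $\lambda^{\circ}$ into $\lambda$, giving $C \ltow{\lambda} C'$.

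\emph{Main obstacle.} The delicate part is the internal-progress sub-lemma: establishing that the ensemble of the protocol $G$, the connection binders $D$, and the typed role components cannot deadlock before the needed point, so that a protocol-advancing $\tau$-step is always available. This is exactly where conformance-based typing is used — it guarantees each subcomponent can \emph{supply} whatever its projected type demands, never too little — but it must be combined with a careful bookkeeping of the asynchronous protocol runtime terms and of the local-binder queues that realise one-to-many communication, mirroring the machinery already needed for \cref{the:subred}; in particular the "shuffle-out preserves typing" reasoning from that proof is reused to show that the external action $\lambda^{\circ}$ is indeed reachable past the protocol prefix of $T_{\role r}$.
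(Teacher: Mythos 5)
Your proposal is correct and follows essentially the same route as the paper: the paper's proof is by induction on the derivation of $C:T$ (which your structural induction on $C$ matches, since typing is syntax-directed) together with an induction on the merge derivation in the composite case, and it relies on exactly the auxiliary ingredients you identify — the swap lemmas for shuffling an action out past the merged protocol prefix, and an underlying protocol-progress principle (stated in the paper as Corollary~\ref{cor:chorpro}) corresponding to your internal-progress sub-lemma.
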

\begin{proof}
By induction on the derivation of $C:T$ and on the derivation of the merge in the composite case, building on auxiliary results.
\end{proof}

The proof of Theorem~\ref{the:progress} crucially relies on an underlying progress principle for 
interaction controlled by protocols in the composite case. Actions expected by the 
external interface can only be ensured if all actions specified for the interfacing 
component can be carried out, namely actions which depend on the protocol which
in turn depend on other components. 
Corollary~\ref{cor:chorpro} captures protocol progress.

\setcounter{corollary}{2}

\begin{corollary}[Protocol Progress]
\label{cor:chorpro}
If $\chorbox{\tilde x}{\tilde y}{ D}{\role{r}[ F]}{ R}{G} : \tend$ and there is $G'$ such that 
$G \lto{\alpha} G'$ then
 $\chorbox{\tilde x}{\tilde y}{ D}{\role{r}[ F]}{ R}{G} \lto{\tau} 
 \chorbox{\tilde x}{\tilde y}{ D}{\role{r}[ F]}{ R'}{G'}$.
\end{corollary}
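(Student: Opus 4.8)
The plan is to derive the statement as a direct consequence of the Progress theorem (Theorem~\ref{the:progress}) applied to the composite component, combined with an analysis of how a protocol transition $G \lto{\alpha} G'$ induces a component transition. The starting point is the typing derivation of $\chorbox{\tilde x}{\tilde y}{D}{\role{r}[F]}{R}{G} : \tend$, which must end with an instance of rule $\m{TCompC}$; from its premises we obtain that $\projc{G}{\role r}{D}{\labenv} \join \rename{\tend}{F}{\tilde x}{\tilde y} = T_{\role r}$ and that each role in $\mathit{roles}(R)$ is assigned the corresponding protocol projection under a common label-type environment $\labenv$, with all these assignments validating $R$. The hypothesis that the component is typed by $\tend$ means no external interaction is pending, so the only thing that can fire is an internal step.

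First I would show that the protocol transition $G \lto{\alpha} G'$, with $\alpha$ either of the form $\labout{\role p}{\lab}{v}$ or $\labinp{\role p}{\lab}{v}$, is matched by a transition of the component assigned to the relevant role. Concretely, if $\alpha = \labout{\role p}{\lab}{v}$, the projection $\projc{G}{\role p}{D}{\labenv}$ starts with an output type $\tout z B {(\ldots)}$ on the port $z$ determined by the connection binder $\dbinder{\lab}{q}{w}{\roleport{p}{z}}$; by the typing of the subcomponent implementing $\role p$ and by Progress (Theorem~\ref{the:progress}) applied to that subcomponent's type, the subcomponent can reach a state in which it performs $z!v'$ for some value $v'$ of type $B$. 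Dually for the input case, using rule \rname{InpChor}. Then rule \rname{OutChor} (resp. \rname{InpChor}) of the composite semantics, together with the protocol rule witnessing $G \lto{\alpha} G'$, composes these into the required $\tau$-transition of the whole composite, with the role assignments updated to $R'$ (reflecting the subcomponent's move, preceded by finitely many \rname{Internal} steps which are themselves $\tau$). For the interfacing role $\role r$, the extra subtlety is that its type $T_{\role r}$ is a \emph{merge}; here I would invoke the same reasoning that underlies the composite case of Theorem~\ref{the:progress}: since $\rename{\tend}{F}{\tilde x}{\tilde y} = \tend$ (forwarder renaming of $\tend$ is $\tend$), the merge $\projc{G}{\role r}{D}{\labenv} \join \tend$ equals $\projc{G}{\role r}{D}{\labenv}$ up to the reasoning about merged actions, so no shuffled external actions interfere and the projection's first action is directly available.

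The main obstacle I anticipate is the bookkeeping needed to show that the finitely many internal reductions required to bring the relevant subcomponent into a state exhibiting the projected communication action do not disrupt the typing of the \emph{other} subcomponents or the protocol, so that rules \rname{OutChor}/\rname{InpChor} remain applicable afterwards. This is exactly the kind of invariant that Theorem~\ref{the:subred} (Typing Preservation) secures: each internal step preserves the component's type, hence the hypotheses of the composite typing rule, hence the continued applicability of the protocol semantics. I would therefore structure the argument as an interleaving of Progress (to make the subcomponent willing to communicate) and Preservation (to keep everything else well-typed along the way), and close by noting that the resulting transition is the single $\tau$-step $\chorbox{\tilde x}{\tilde y}{D}{\role{r}[F]}{R}{G} \lto{\tau} \chorbox{\tilde x}{\tilde y}{D}{\role{r}[F]}{R'}{G'}$ claimed in the statement, with $R'$ obtained from $R$ by replacing the assignment of the acting role with its post-transition subcomponent.
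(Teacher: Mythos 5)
Your overall strategy --- unpack \rname{TCompC}, use the protocol transition to identify the acting role, invoke Progress on the corresponding subcomponent, and close with \rname{OutChor}/\rname{InpChor} --- is the natural route, and since the paper states this result without an explicit proof it is plausibly the intended one. However, two concrete gaps remain. First, the bridge between the hypothesis $G \lto{\alpha} G'$ and the premise of Theorem~\ref{the:progress} for the subcomponent is missing: to apply Progress to the component implementing $\role p$ you need $\projc{G}{\role p}{D}{\labenv} \lto{z!v} \projc{G'}{\role p}{D}{\labenv}$ (and dually for inputs), i.e.\ a lemma stating that protocol transitions commute with projection. Your justification (``the projection starts with an output type $\tout z B {(\ldots)}$'') only covers the case where the communication labelled $\lab$ is the head of $G$; when the transition is derived by one of the \rname{GConc} rules the acted-upon communication sits under a prefix, and you must argue that projection for $\role p$ erases exactly the prefixes not involving $\role p$, so that the action still surfaces at the head of the projected type. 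This is the analogue for projections of Lemma~\ref{lem:chorout} for the encoding, and it needs to be stated and proved by induction on the derivation of $G \lto{\alpha} G'$; without it the appeal to Progress does not connect to the given $\alpha$.

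Second, your own argument does not deliver the single-step conclusion you claim at the end. Theorem~\ref{the:progress} only yields a weak transition $C \ltow{z!v} C'$ for the subcomponent, and you explicitly insert ``finitely many \rname{Internal} steps'' before the \rname{OutChor} step --- but then assert that the result is ``the single $\tau$-step'' of the statement. These are incompatible: either you prove the stronger fact that a well-typed subcomponent whose projection begins with the required action can fire it \emph{immediately} (true for base subcomponents, whose conformance guarantees the head store is already full and which have no $\tau$-transitions at all, but requiring a separate argument for composite subcomponents --- cf.\ Example~E, where emitting on $y$ needs prior internal steps), or you must weaken the conclusion to $\ltow{\tau}$. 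Relatedly, the value $v$ fixed by $\alpha$ and $G'$ need not be the value $v'$ that the subcomponent actually computes; you acknowledge this (``for some value $v'$ of type $B$''), but then the protocol lands in the runtime term recording $v'$, which is not necessarily the given $G'$. You should either let $G'$ be determined by the component's actual output or explain why the mismatch is harmless.
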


Corollary~\ref{cor:chorpro} thus attests that actions prescribed by the protocol are
carried out in well-typed components, focusing on components with a 
\emph{closed} interface ($\tend$). We thus have that protocols
not only control interaction among (sub)components but also provide a specification that is
guaranteed to be carried out. Thanks to the operational correspondence
result we may also ensure that well-typed components are compiled
to systems that enjoy communication safety and progress.

\subsection{Examples}

We return to the examples shown in Section~\ref{sec:examples} now considering a typing perspective.
Regarding Example A consider type 
$\tchoice{y_{l_1}}{\tinp {x_{l_2}} B \tend}{\tinp {x_{l_3}} B \tend}$ is
obtained by projecting the protocol in role $\role{p}$, where
we use the message labels as subscripts of port identifiers so as to yield the respective association and
assuming some base type $B$. Now consider the type
prescribed by the external environment is $\tbranch{x}{\tout{y_2}{B}{\tend}}{\tout{y_1}{B}{\tend}}$
(notice the association of $y_1$ and $y_2$ with the decision is ``inverted'' in the illustration given
previously). A possible merge is:
$$\tbranch{x}{\tchoice{y_{l_1}}{\tinp {x_{l_2}} B {\tout{y_2}{B}\tend}}{\tinp {x_{l_3}} B {\tout{y_2}{B}\tend}}}
{\tchoice{y_{l_1}}{\tinp {x_{l_2}} B {\tout{y_1}{B}\tend}}{\tinp {x_{l_3}} B {\tout{y_1}{B}\tend}}}
$$
which the component implementing role $p$ conforms to, thanks to the fact that the abstract local
binder semantics is able to capture the association between the branching on $x$ and the selection
on $y_{l_1}$ after which the dependencies between $x_{l_2}$ and $y_2$ and between $x_{l_3}$ 
and $y_1$ are met (notice this is so only in the relevant branches). 
We remark that if such 
external branching on $x$ is not present then the example would be untypable, since the component
must somehow inform on the alternative behaviours to the outer context. Such reasoning is also 
present in Example B given type $\tchoice{y_1}{\tout{y_2}{B}{\tend}}{\tout{y_1}{B}{\tend}}$
where the outer context is informed of the alternative behaviours via the selection on $y_1$.

Regarding Example C we remark that the choice of protocol (either $G_0$ or $G_1$) is constrained
by the type expected externally. If the external type is $\mu \recvar X. \tinp x B {\tout y B {\recvar X}}$
then necessarily (recursive) protocol $G_1$ must be used as otherwise the infinite behaviour
expected externally will not be realisable by the component. If the external type 
is $\tinp x B {\tout y B \tend}$ then necessarily protocol $G_0$ must be used since the internal
infinite dependencies are not met. The use of protocol $G_1$ in such circumstances would be
safe in a behavioural perspective but our current development does not capture this. Instead
in Example D the use of both protocols ($G_0$ and $G_1$) can be typed as there are no 
dependencies from external behaviour and internal communications.

In Example E the component is typable when considering the type prescribed by the environment
is $\tinp x B {\mu \recvar X. \tout y B {\recvar X}}$, given the ``one shot'' dependency between
receiving on $x$ and the internal communication $l_1$ is met, after which the (infinite) 
dependencies between the internal communication $l_2$ and the output on $y$ are also met.
Example F is typable considering external type $\mu \recvar X. \tbranch{x}{\recvar X}{\tout y B {\recvar X}}$,
but is untypable if we consider that the component is governed by protocol
$\mu\recvar X.\gcom{p}{l_1}{q}\ (\gcom{q}{l_2}{p};\tend,\
  \gcom{q}{l_3}{p}; 
  \recx X)$, a challenging pattern given the mixture of infinite
  and finite dependencies together with the choice propagation between external 
 and internal communications.


Lastly
we return to the example given in the language preview so as to illustrate further the notion
of possible (type safe) substitutions of the protocols specified in components. 
Namely,
consider $G_{\role{Shop}}$ defined as
$\gchoice{Sales}{buy}{Bank}
{\gcom{Sales}{val}{Bank};\gcom{Sales}{ccnum}{Bank}} {\tend}$. In the context
of the composite component implementing the $\role{Seller}$ we may
replace $G_{\role{Shop}}$ with $G_{\role{Shop}'}$ defined as  
$\gcom{Sales}{val}{Bank};\gchoice{Sales}{buy}{Bank}
{\gcom{Sales}{ccnum}{Bank}}{\tend}$ or 
$G_{\role{Shop}''}$ defined as
$\gchoice{Sales}{buy}{Bank}
{\gcom{Sales}{ccnum}{Bank}; \gcom{Sales}{val}{Bank}}{\tend}$,
where the $\mathit{val}$ message is swapped around, 
since the value of the product is determined (in $C_{\role{Sales}}$) 
as soon as the product name is available (originating from $\role{Buyer}$,
corresponding to the first message of the external protocol 
$G_{\text{BSS}}$). 

Also, considering the first composite component that uses $G_{\text{BSS}}$
and base components to implement the three roles, we may consider 
a recursive version of $G_{\text{BSS}}$ since the base components
are able to continuously react. However, such recursive $G_{\text{BSS}}$
cannot be used when considering the $\role{Seller}$ role is implemented
by the composite component, since its internal protocol $G_{\role{Shop}}$ 
does not specify a recursion, and hence the type merge is 
undefined (but a recursive version of $G_{\role{Shop}}$ would work).

\subsection{On Substitutability}

\begin{figure*}[t]
$$
\begin{array}{c}
\infer[\m{SIShuffle}]
{\tinp {x_1} B {\tinp {x_2} B T} \subt \tinp {x_2} B {\tinp {x_1} B T}}
{}
\qquad
\infer[\m{SOShuffle}]
{\tout {y_1} B {\tout {y_2} B T} \subt \tout {y_2} B {\tout {y_1} B T}}
{}
\\\\
%
\infer[\m{SODiscard}]
{\tout y B T \subt T}
{}
\qquad 
\infer[\m{SIBefore}]
{\tout {y} B {\tinp {x} B T} \subt \tinp {x} B {\tout {y} B T}}
{}
\\\\
\infer[\m{SO}]
{\tout y B {T_1} \subt \tout y B {T_2}}
{T_1 \subt T_2}
\qquad
\infer[\m{SI}]
{\tinp x B {T_1} \subt \tinp x B {T_2}}
{T_1 \subt T_2}
\qquad
\infer[\m{SRec}]
{\mu \recvar X. T_1 \subt \mu \recvar X. T_2}
{T_1 \subt T_2}
\\\\
\infer[\m{SCho}]
{\tchoice{y}{T_1}{T_2} \subt \tchoice{y}{T_1'}{T_2'}}
{T_1 \subt T_1' \quad T_2 \subt T_2'}
\qquad
\infer[\m{SBra}]
{\tbranch{x}{T_1}{T_2} \subt \tbranch{x}{T_1'}{T_2'}}
{T_1 \subt T_1' \quad T_2 \subt T_2'}
\\\\
\infer[\m{STra}]
{T_1 \subt T_2}{T_1 \subt T_3 \quad T_3 \subt T_2}
\qquad
\infer[\m{SRef}]
{T \subt T}{}
\qquad
\infer[\m{TSubsC}]
{C: T_2}
{C: T_1 \quad T_1 \subt T_2}
\end{array}
$$
\caption{Subtyping relation.}
\label{fig:subtyping}
\end{figure*}

We now present preliminary results and some insights regarding the substitution principle 
supported by our typed model. In particular we show a notion of subtyping that naturally 
arises in our setting which, combined with the standard subsumption rule, technically realises
the usual substitution principle (cf.~\cite{liskov94}): a component of type $T_1$ can replace a component of type $T_2$
when $T_1$ is a subtype of $T_2$ (denoted $T_1 \subt T_2$). We also discuss further extensions
of the subtyping relation so as to encompass further generality and hint on the underlying semantics.

Subtyping is given by the rules in Fig.~\ref{fig:subtyping}, including type language closure, transitivity, and reflexivity and four other rules described next under the light of the substitution principle. 
Rules $\m{SIShuffle}$ and
$\m{SOShuffle}$ say that two inputs and two outputs can be safely shuffled between them, and
rule $\m{SODiscard}$ says a component may be safely used in a context where one of the 
outputs provided by the component is not expected. Notice the queues in local binders can store
(unused) data, and receive two inputs or exhibit two outputs in any order. Instead,
rule $\m{SIBefore}$ addresses shuffling of one output and one input: when the component
is able to first output and then input then it is safe to use it in a context that prescribes that the input
happens first. We remark that the symmetric does not hold, in particular when the output 
\emph{depends} on the input. We may show type safety considering the extension with the
subsumption rule for components in Fig.~\ref{fig:subtyping} ($\m{TSubsC}$)
and the presented subtyping relation, 
and we can already identify extensions to the subtyping relation.

As hinted above, the shuffling of outputs and inputs does not hold in general, in particular 
when the component actually requires a received value in order to be able to exhibit some
output. So considering the rule symmetric to $\m{SIBefore}$ (dubbed $\m{SOBefore}$)
is unsound in our setting, however an extension of our typing system so as capture input/output 
value dependencies would allow the precise application of $\m{SOBefore}$. The extension
would collect direct dependencies in local binders of base components and inductively 
extend them by adding the dependencies introduced by protocols in composite components.
Such information would also support a rule for discarding inputs (analogous to $\m{SODiscard}$)
when no (relevant) outputs depend on the (discarded) input. Also, extending the typing information so
as to include information on the ports in which the component is actually able to interact, regardless
of being mentioned or not initially in the type, would allow to reason on subtyping principles such as
${T \subt \tinp x B T}$ and ${T \subt \tout y B T}$ which are sound up to such ``unused'' capabilites 
of the component (and when the dependencies are met in the case of the output). Resorting to the 
used/unused information on ports and the mentioned dependencies we may also reason on 
principles such as $\mu \recvar X. T$ is a subtype of $T$ and $T$ is a subtype of $\tend$, which are not
sound in our current setting since the underlying dependencies may be impossible to satisfy. 

Finally, we remark that we may synthesise components that may be used in a context when a 
given type is expected, for any given type. In particular given type $T$ we may devise
a base component that defines input ports for all inputs and branches specified in $T$
(not used in internal local binders), and output ports for all outputs
and choices (defined with functions of the appropriate type and without parameters).
We expect to find such components at the bottom of the hierarchy when considering
a semantic notion for substitutability, going up to components of closed interfaces 
at the top for which adaptation to surrounding contexts is limited.



%

\section{Concluding Remarks}
The protocol language used in this
article sits in between the two related approaches of Multiparty
Session Types (MPST)~\cite{HYC16} and Choreographic
Programming~\cite{M13:phd}. MPST are types for specifying distributed
protocols that can be used for checking whether a given distributed
implementation respects the given protocol. The methodology approach
for MPST is different from ours: protocols are also given as
choreographies, but they are only used statically. In our setting,
choreographies are used at runtime for governing the way components,
our data handlers, communicate with each other. MPST have also been
used as monitors during runtime execution of concurrent
programs~\cite{CBDHY11,BCDHY17}. However, rather than monitoring
reactive components like in our work, they monitor $\pi$-calculus-like
processes, hindering modularity.

Choreographic Programming~\cite{CM13} allows to program distributed
systems directly as choreographies which define both the
communications among components and the data that they carry.  Then,
the choreography is used to synthesise a correct process
implementation.  Differently, our language decouples the component
behaviour where choreographies are used to synthesise
correct-by-construction process implementations.  Similarly, the I/O
actions performed by components in this work are synthesised from the
choreographies that they participate in.  Also in this case, the
absence of decoupling component interaction from component behaviour
limits modularity.
 
Reo~\cite{A04} is a coordination language which separates protocols
and implementation of components. However, their model for
implementing processes still requires to provide sequences of
communication actions, making reusability (hence modularity)
harder. Instead, in our language, thanks to the usage of reactive
programming, we can be more flexible with the type of components that
can be used. 
MECo~\cite{MS10} is a calculus where components interact through their
input and output ports, similar to our interfaces.  MECo supports
notions of passivation and component mobility, which could be an
interesting future extensions for our language.

Orc~\cite{MC07} is an orchestration language that uses connectives
between components similar to our binders. However, the language has
no choreographic coordination on the message flow. Similarly,
BIP~\cite{bippaper} is a coordination language for modularly
assembling components. Also in BIP, there is no usage of choreographic
specification for governing the interaction between components.

We are yet to fully explore the relation with existing literature addressing 
reactive programming and related models (e.g.,~\cite{alur99,alfaro01,ostrowski08,talpin97}),
which we believe may be used so as to enrich our setting. We have already identified some 
connection points, in particular at the level of the subtyping notion~\cite{ostrowski08}.
It would also be interesting to enrich our setting by exploring the relation between linear
logic and session types~\cite{CP10} and also integrating with dependent types so as
to, for instance, capture choice propagation in a more refined way (cf.~\cite{toninhofossacs18}).

We have presented GC, a language designed for the
modular development of distributed systems, where reactive components
are assembled in a choreographic governed way. We show a
provably-correct operational interpretation of our model, which
shows a distributed implementation of the governing carried out by protocols, 
and a type discipline that ensures
communication safety and progress, already supporting a notion of substitutability.  
Future directions for this work
include an actual implementation to serve as an experimental proof of
concept. Also, at the level of typing, it would be interesting to
consider an interpretation of component types that capture their
behaviour in the most general way. In particular this is crucial to obtain a more precise notion 
of subtyping so as to 
provide further support for reuse and substitution, thus a
fundamental research direction for this work.

%

\newpage



%
%
\bibliography{biblio}
%
\newpage

\appendix

\section{Auxiliary Definitions}
\label{app:defs}

Given choreography $G$, local binders $L$ and mapping $\gamma$ we define 
$\queues{G}{L}{\mapping}$ for a given by induction 
on the structure of the choreography, as shown in Fig.~\ref{fig:chorqs}.
For the case of 
(in transit) communication, the queues for receivers waiting to receive the 
message, 
identified via the mapping ($\mapping(\role{q}_i, \lab)$), are augmented 
considering
the value registered in the choreography and the input port $u'$. 
The operation is defined only for local binders defined by $\buildqueues{}{}$,
hence by construction the port $u'$ is the same (originally given 
by $\mapping(\role p, \lab)$ for sender role $\role{p}$). The 
operation proceeds to the continuation considering the augmented queues,
thus ensuring message order is preserved. The difference in the 
cases for (in transit) choice is that the the focus is on the branch
implied by the registered value.

\begin{figure}[t]
\begin{displaymath}
\begin{array}{lclll}
\queues {\gcom p\lab{ \tilde q};G}{L}{\mapping}
 \deff 
\queues {G}{L}{\mapping}
\\
\queues {{\gcom {}{\lab,v}{ \tilde q};G}}
{L, \lbinder {u_1} {\tilde \sigma_1} {\mathit{id}(u')},
\ldots, \lbinder {u_k} {\tilde \sigma_k} {\mathit{id}(u')}}
{\mapping}
\\ \hfill\deff 
\queues {G}{
{L, \lbinder {u_1} {\tilde \sigma_1, \{u' \mapsto v\}} {\mathit{id}(u')},
\ldots, \lbinder {u_k} {\tilde \sigma_k, \{u' \mapsto v\}} {\mathit{id}(u')}}
}{\mapping}
\\
\hfill
(u_i = \mapping(\role q_i, \lab) \wedge \tilde q =  q_1, \ldots, q_k)
\\
\queues{\gchoice p\lab{\tilde q}{G_1}{G_2})}{L}{\mapping}  \deff 
\queues {G_1}{L}{\mapping}  
\hfill  (\queues {G_1}{L}{\mapping} = \queues {G_2}{L}{\mapping})
\\
\queues{\gchoice {}{\lab,\inl}{\tilde q}{G_1}{G_2}}
{L, \lbinder {u_1} {\tilde \sigma_1} {\mathit{id}(u')},
\ldots, \lbinder {u_k} {\tilde \sigma_k} {\mathit{id}(u')}}
{\mapping}
\\ \hspace{1cm}\deff 
\queues{G_1}
{L, \lbinder {u_1} {\tilde \sigma_1, \{u' \mapsto \inl\}} {\mathit{id}(u')},
\ldots, \lbinder {u_k} {\tilde \sigma_k, \{u' \mapsto \inl\}} {\mathit{id}(u')}}
{\mapping}
\\
\hfill
(u_i = \mapping(\role q_i, \lab) \wedge \tilde q =  q_1, \ldots, q_k)
\\
\queues{\gchoice {}{\lab,\inr}{\tilde q}{G_1}{G_2}}
{L, \lbinder {u_1} {\tilde \sigma_1} {\mathit{id}(u')},
\ldots, \lbinder {u_k} {\tilde \sigma_k} {\mathit{id}(u')}}
{\mapping}
\\ \hfill \deff
\queues{G_2}
{L, \lbinder {u_1} {\tilde \sigma_1, \{u' \mapsto \inr\}} {\mathit{id}(u')},
\ldots, \lbinder {u_k} {\tilde \sigma_k, \{u' \mapsto \inr\}} {\mathit{id}(u')}}
{\mapping}
\\
\hfill
(u_i = \mapping(\role q_i, \lab) \wedge \tilde q =  q_1, \ldots, q_k)
\\[1mm]
\queues{ \mu \recvar X.G}{L}{\mapping} \deff  \queues{G}{L}{\mapping}
\qquad
\queues{ \recvar X}{L}{\mapping} \deff L
\qquad
\queues{ \tend}{L}{\mapping} \deff L
\end{array}
\end{displaymath}
\caption{Fill Operation.}
\label{fig:chorqs}
\end{figure}

Given type $T$ and ports $\tilde x$ and $\tilde y$, forwarder renaming $\rename{T}{F}{\tilde x}{\tilde y}$ 
is given by the rules shown in Fig.~\ref{fig:frename} and interface check $\tilde x;\tilde y \vdash {T}$ 
is defined by the rules shown in Fig.~\ref{fig:interface}. 

\begin{figure*}
\begin{displaymath}
\begin{array}{c}
\infer[\m{FInp}]
{\rename{ \tinp{x_i}{B}{T}}{F}{\tilde x}{\tilde y} = \tinp{z}{B}{T'} }
{\rename{ {T}}{F}{\tilde x}{\tilde y} = T' \qquad x_i \in \tilde x \qquad F = \fbinder z{x_i}, F' }
\\[4mm]
\infer[\m{FOut}]
{\rename{ \tout{y_i}{B}{T}}{F}{\tilde x}{\tilde y} = \tout{z}{B}{T'} }
{\rename{ {T}}{F}{\tilde x}{\tilde y} = T' \qquad y_i \in \tilde y \qquad F = \fbinder {y_i}z, F' }
\\[4mm]
{
\infer[\m{FBranch}]
{\rename{ \tbranch{x_i}{T_1}{T_2}}{F}{\tilde x}{\tilde y} = \tbranch{z}{T_1'}{T_2'}}
{\rename{ {T_1}}{F}{\tilde x}{\tilde y} = T_1' \qquad \rename{ {T_2}}{F}{\tilde x}{\tilde y} = T_2' 
\qquad x_i \in \tilde x \qquad F = \fbinder z{x_i}, F' }
}
\\[4mm]
{
\infer[\m{FChoice}]
{\rename{ \tchoice{y_i}{T_1}{T_2}}{F}{\tilde x}{\tilde y} = \tchoice{z}{T_1'}{T_2'}}
{\rename{ {T_1}}{F}{\tilde x}{\tilde y} = T_1' \qquad \rename{ {T_2}}{F}{\tilde x}{\tilde y} = T_2'
 \qquad y_i \in \tilde y \qquad F = \fbinder {y_i}z, F' }
 }
\\[4mm]
{
\infer[\m{FRec}]
{\rename{ \mu \recvar X. T}{F}{\tilde x}{\tilde y} = \mu \recvar X. T'}
{\rename{ T}{F}{\tilde x}{\tilde y} = T'}
\qquad
\infer[\m{FVar}]
{\rename{ \recvar X}{F}{\tilde x}{\tilde y} = \recvar X}
{}
\qquad
\infer[\m{FEnd}]
{\rename{\tend}{F}{\tilde x}{\tilde y} }
{}
}
\end{array}
\end{displaymath}
\caption{Forwarder Renaming}
\label{fig:frename}
\end{figure*}

\begin{figure*}
\begin{displaymath}
\begin{array}{cc}
\infer[\m{IInp}]
{\tilde x;\tilde y \vdash \tinp{z}{B}{T} }
{
\tilde x;\tilde y \vdash {T} \qquad z \in \tilde x
}
&
\infer[\m{IOut}]
{\tilde x;\tilde y \vdash  \tout{z}{B}{T} }
{
\tilde x;\tilde y \vdash T \qquad z \in \tilde y
}
\\[4mm]
\infer[\m{IBra}]
{\tilde x;\tilde y \vdash \tbranch{z}{T_1}{T_2}}
{
\tilde x;\tilde y \vdash T_1
\qquad
\tilde x;\tilde y \vdash T_2
\qquad
z \in \tilde x
}
\qquad &
\infer[\m{ICho}]
{\tilde x;\tilde y \vdash \tchoice{z}{T_1}{T_2}}
{
\tilde x;\tilde y \vdash T_1
\qquad
\tilde x;\tilde y \vdash T_2
\qquad
z \in \tilde y
}
\\[4mm]
\multicolumn{2}{c}
{
\infer[\m{IRec}]
{\tilde x;\tilde y \vdash \mu \recvar X. T}
{\tilde x;\tilde y \vdash T }
\qquad
\infer[\m{IVar}]
{\tilde x;\tilde y \vdash \recvar X}
{}
\qquad
\infer[\m{IEnd}]
{\tilde x;\tilde y \vdash \tend}
{}
}
\end{array}
\end{displaymath}
\caption{Interface}
\label{fig:interface}
\end{figure*}

\section{Auxiliary Results for Encoding}
\label{app:encres}

\begin{lemma}
\label{lem:compout}
We have that:
\begin{itemize}
\item{If $C \lto{y!v} C'$ then $\projb{C} \lto{y!v} \equiv \projb{C'}$.}
\item{If $C \lto{x?v} C'$ then $\projb{C} \lto{x?v} \equiv \projb{C'}$.}
\end{itemize}
\end{lemma}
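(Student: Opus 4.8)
The plan is to prove both statements simultaneously, by induction on the derivation of the transition. The underlying intuition is that a non-internal transition of a component is always produced either by a local binder (in a base component) or by an externally-forwarded port of the distinguished subcomponent (in a composite component); in either case the encoding merely relays the same observable action through its stack of forwarders and restrictions, leaving every other part of the encoded term in place. Throughout I would rely on $\equiv$ being a congruence, so that a derivation of the form $P \lto{\lambda} \equiv Q$ lifts through an evaluation context without disturbing the conclusion.

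First I would handle a base component $C = \chorboxb{\tilde x}{\tilde y}{L}$. A transition $C \lto{y!v} C'$ (resp.\ $C \lto{x?v} C'$) can only be derived by rule $\rname{OutBase}$ (resp.\ $\rname{InpBase}$), hence $L \lto{y!v} L'$ with $y \in \tilde y$ and $C' = \chorboxb{\tilde x}{\tilde y}{L'}$ (resp.\ the symmetric input situation). Unfolding the encoding, the renamed binders give $\lbox{\mapping(\role r, L)} \lto{\mapping(\role r, y)!v} \lbox{\mapping(\role r, L')}$ by $\rname{PAct}$; the matching output forwarder $(\fbinderl{y}{\mapping(\role r, y)})$ rewrites this label to $y!v$ via $\rname{PFwdOut}$, and it then passes unchanged through the remaining forwarders (rules $\rname{PFwd1}$ and $\rname{PFwd2}$) and the restriction $(\nub \tilde z)$ (rule $\rname{PRes}$). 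The side conditions of these rules hold because the distinguished set $\tilde z = \mathit{co}(\mapping)$ is disjoint from the interface ports $\tilde x, \tilde y$. I would also note that the forwarder rules used keep the label on the queue side (a plain $\lambda$, not an $\overline{\lambda}$), so no premature synchronisation is possible, and the resulting term is syntactically $\projb{C'}$. The input case is entirely similar, using $\rname{PFwdInp}$ for the matching input forwarder.

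Next I would treat a composite component $C = \chorbox{\tilde x}{\tilde y}{D}{\role{r}[F]}{R}{G}$. A transition $C \lto{y!v} C'$ must come from rule $\rname{OutComp}$: some $\roleas{r}{C_r}$ occurs in $R$ with $C_r \lto{z!v} C_r'$, the forwarder $\fbinder{y}{z}$ occurs in $F$, $y \in \tilde y$, and $C'$ is $C$ with $\roleas{r}{C_r}$ replaced by $\roleas{r}{C_r'}$. By the induction hypothesis (for a base $C_r$ this is the base-component case just established), $\projb{C_r} \lto{z!v} \equiv \projb{C_r'}$. Inside $\projb{\roleas{r}{C_r}}_\mapping$ the forwarder $(\fbinderl{\mapping(\role r, z)}{z})$ relabels this to $\mapping(\role r, z)!v$ via $\rname{PFwdOut}$; the label then traverses the surrounding restriction and forwarders, is carried past the parallel composition with the per-role choreography encodings and the monitor queues by $\rname{PPar}$ (leaving those unchanged), and is finally rewritten to $y!v$ by the outer forwarder $(\fbinderl{y}{\mapping(\role r, z)})$ before escaping the remaining forwarders and $(\nub \tilde z)$ as above. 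Since only the subterm for role $\role r$ changes, the resulting term is $\equiv \projb{C'}$. The input sub-case is dual, driven by rule $\rname{InpComp}$ together with the input forwarders and the same transport rules.

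The step I expect to be the main obstacle is the bookkeeping rather than anything conceptual: at every layer of the encoding one must check that the subject of the action differs from every restricted and forwarded name (so that the transport rules apply) and that the queue-side status of the label survives all the way to the interface; both of these hinge on the disjointness guarantees built into the encoding's side conditions (the distinguished set $\mathit{co}(\mapping)$) together with the well-formedness assumptions on port usage. A secondary point of care is that the match with $\projb{C'}$ holds only up to the forwarder-swapping extension of $\equiv$, since the encoding's choice of name mapping and of forwarder ordering is not canonical --- which is precisely why the statement is phrased with $\equiv$.
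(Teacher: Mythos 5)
Your proposal is correct and follows the same route as the paper, which proves the lemma by induction on the derivation of the transition (base components via \rname{OutBase}/\rname{InpBase} and the local-binder/forwarder relaying, composites via \rname{OutComp}/\rname{InpComp} using the inductive hypothesis on the distinguished subcomponent); the paper states only the induction scheme, and your case analysis and bookkeeping about forwarders, restrictions, and the disjointness side conditions are exactly the details that proof elides.
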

\begin{proof}
By induction on the derivation of $C \lto{y!v} C'$ and of $C \lto{x?v} C'$.
\end{proof}

\begin{lemma}
\label{lem:chorout}
We have that:
\begin{itemize}
\item{If $G \lto{\labout{\role p}{\lab}{v}} G'$ then 
$\projb{G}_{\role p, D, \mapping} 
\lto{z'?v} P \lto{u!v} \equiv \projb{G'}_{\role p, D, \mapping}$,
given $D = \dbinder {\lab}{ q}{ w}{\roleport {p}{z}}, D' $
and $\mapping(\role{p}, z) = z'$ and $\mapping(\role{p}, \lab) = u$.}
\item{If $G \lto{\labinp{\role q}{\lab}{v}} G'$ then $\projb{G}_{\role q, D, \mapping} 
\lto{u?v} P \lto{w'!v} \equiv \projb{G'}_{\role q, D, \mapping}$
given 
$D = \dbinder {\lab}{ q}{ w}{\roleport {p}{z}}, D' $
and $\mapping(\role{q}, w) = w'$ and $\mapping(\role{q}, \lab) = u$.}
\end{itemize}
\end{lemma}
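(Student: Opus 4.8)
The plan is to prove both items simultaneously, by induction on the derivation of the protocol transition, that is, by cases on the last rule of Figure~\ref{fig:g_semantics} applied. The structural fact I would isolate up front is that, for a fixed role, the encoding $\projb{G}_{\role r, D, \mapping}$ of Figure~\ref{fig:chorenc} is always a \emph{sequential} process, built only from output/input prefixes, branchings, recursion and $\zero$; hence if it can input an arbitrary value on a port $z'$ it must have the form $z'\msgin{a}.P$ with $P$ uniquely determined, and likewise for the output that follows. This determinism lets the two prescribed steps be read off the encoding unambiguously, and it is what I will need in the choice-concurrency case below.

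\textbf{Base cases.} A transition $G \lto{\labout{\role p}{\lab}{v}} G'$ can only be derived by \rname{GSVal} or \rname{GSChoice}. In the first, $G = \gcom p\lab{\tilde q};G_0$ and $G' = \gcom {}{\lab,v}{\tilde q};G_0$, and the side conditions of the lemma on $D$ and $\mapping$ are exactly those of the first encoding clause, so $\projb{G}_{\role p,D,\mapping} = z'\msgin{a}.u\msgout{a}.\projb{G_0}_{\role p,D,\mapping} \lto{z'?v} u\msgout{v}.\projb{G_0}_{\role p,D,\mapping} \lto{u!v} \projb{G_0}_{\role p,D,\mapping}$ by \rname{PInp} and \rname{POut}; since by well-formedness the sender is never among the receivers, $\projb{G'}_{\role p,D,\mapping} = \projb{G_0}_{\role p,D,\mapping}$ by the in-transit clause, closing the case (in fact with equality). \rname{GSChoice} is the same using the branching clause and \rname{PChoL}/\rname{PChoR}, with $v \in \{\inl,\inr\}$ selecting the continuation. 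The base cases of the second item, \rname{GRVal}, \rname{GRVal2}, \rname{GRChoice} and \rname{GRChoice2}, are symmetric: they match the encoding clauses for the receiver role (which receive on $u$ and re-emit on $w'$), and distinctness of receivers ensures that $\projb{G'}_{\role q,D,\mapping}$ drops exactly the handled prefix.

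\textbf{Recursion and concurrency.} For \rname{GRec}, $G = \mu\recvar X.G_0$ with $G_0\subst{\mu\recvar X.G_0}{\recvar X}\lto{\alpha}G'$; a routine auxiliary fact (if $G\lto{\alpha}G'$ then $\m{role}(\alpha)\in\mathit{roles}(G)$, a set preserved by unfolding) gives that the relevant role is in $\mathit{roles}(G_0)$, whence $\projb{G}_{\role r,D,\mapping} = \mu\recvar X.\projb{G_0}_{\role r,D,\mapping}$; I would then combine \rname{PRec}, the induction hypothesis on the premise, and an encoding/unfolding lemma $\projb{G_0\subst{\mu\recvar X.G_0}{\recvar X}}_{\role r,D,\mapping} = \projb{G_0}_{\role r,D,\mapping}\subst{\mu\recvar X.\projb{G_0}_{\role r,D,\mapping}}{\recvar X}$ (itself a structural induction on $G_0$, using $\projb{\recvar X}_{\role r,D,\mapping} = \recvar X$). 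For \rname{GConc1}, \rname{GConc3}, \rname{GConc4} and \rname{GConc5} the outer construct (a prefix, an in-transit value, or an in-transit choice) involves only roles distinct from the subject of $\alpha$, so the matching encoding clause gives $\projb{G}_{\role r,D,\mapping} = \projb{G_0}_{\role r,D,\mapping}$ and $\projb{G'}_{\role r,D,\mapping} = \projb{G_0'}_{\role r,D,\mapping}$ for the unique relevant continuation $G_0$, and the statement is just the induction hypothesis on $G_0\lto{\alpha}G_0'$. I expect the real work to be in \rname{GConc2}: there $G$ has a choice prefix with two continuations $G_1,G_2$ that both step (to $G_1',G_2'$), and the encoding of $G$ for the non-participating role $\role r$ is defined only when $\projb{G_1}_{\role r,D,\mapping} = \projb{G_2}_{\role r,D,\mapping}$, in which case it equals that common process; applying the induction hypothesis to both premises and invoking the determinism observation, both derivations traverse the same two-step path, so $\projb{G_1'}_{\role r,D,\mapping}\equiv P\equiv\projb{G_2'}_{\role r,D,\mapping}$, and one must then argue, up to $\equiv$ and using that the side conditions and the in-transit fill are stable under the step, that $\projb{G'}_{\role r,D,\mapping}$ is itself defined and $\equiv P$. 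More broadly, the delicate point throughout is precisely this bookkeeping: keeping the \emph{partial} encoding defined along the reduction and carrying the mapping and connection-binder side conditions through, rather than exhibiting the transitions, which is mechanical.
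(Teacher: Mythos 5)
Your proposal follows exactly the paper's stated strategy (induction on the derivation of the protocol transition), and the case analysis you give — reading the two prescribed steps off the sender/receiver encoding clauses in the base cases, commuting encoding with unfolding for \rname{GRec}, and collapsing the \rname{GConc} cases to the induction hypothesis via the non-participant clauses — is a correct elaboration of the one-line proof the paper actually provides. The subtlety you flag in \rname{GConc2} (keeping the partial encoding defined, i.e.\ the syntactic equality of the two branch projections, stable under the two-step reduction) is genuine, and your determinism observation is the right way to discharge it.
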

\begin{proof}
By induction on the derivation of $G \lto{\labout{\role p}{\lab}{v}} G'$
and of $G \lto{\labinp{\role q}{\lab}{v}} G'$.
\end{proof}

\section{Auxiliary Results for Typing}
\label{app:typeres}

\begin{lemma}[Swap Composite]
\label{lem:swapcomp}
If $C:T$ and $T = T_1 \join T_2$ and $C \lto{\lambda} C'$ 
and $T_1 \lto{\lambda} T_1'$ then there is $T'$ such that 
$T' = T_1' \join T_2$ and $C' : T'$.
\end{lemma}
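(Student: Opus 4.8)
The plan is to prove the statement by induction on the derivation of $C : T$, distinguishing the two rules, $\m{TBaseC}$ and $\m{TCompC}$, that can conclude it. Beforehand I would record a few routine structural facts about $\join$: that it is commutative and associative as a relation (one side defined iff the other, with equal result), that forwarder renaming $\rename{\cdot}{F}{\tilde x}{\tilde y}$ distributes over $\join$, and that if $T_1 \join T_2$ is defined and $T_1 \lto{\lambda} T_1'$ is a head reduction then $T_1' \join T_2$ is defined (so the $T'$ asserted by the lemma exists). Since local types have no $\tau$-transitions, $\lambda \neq \tau$. I also use the correspondence between the concrete local-binder semantics (Fig.~\ref{fig:gc_semantics_base}) and the abstract one (Fig.~\ref{fig:abssem}), turning each $L \lto{\lambda} L'$ into an abstract transition $\abs L \lto{\lambda'} \abs{L'}$ where $\lambda'$ erases values to their base types (leaving $\inl/\inr$), together with a \emph{diamond} property of abstract local binders: two transitions on disjoint ports commute to a common residual. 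This last fact is where the persistence of the reactive queues is exploited.

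\emph{Composite case} ($\m{TCompC}$). Here $C \lto{\lambda} C'$ must come from $\m{OutComp}$ or $\m{InpComp}$, so only the distinguished sub-component $C_{\role r}$ moves, say $C_{\role r} \lto{\lambda'} C_{\role r}'$, where $\lambda'$ is the image of $\lambda$ under the interface forwarder $F$. Inverting $\m{TCompC}$ (and $\m{TRole}/\m{TRoleL}$) gives $C_{\role r} : \projc{G}{\role r}{D}{\labenv} \join \rename{T}{F}{\tilde x}{\tilde y}$. Using $T = T_1 \join T_2$, distributivity of renaming over $\join$, and commutativity/associativity of $\join$, I rewrite this type as $\rename{T_1}{F}{\tilde x}{\tilde y} \join \bigl(\projc{G}{\role r}{D}{\labenv} \join \rename{T_2}{F}{\tilde x}{\tilde y}\bigr)$, whose left summand performs exactly $\lambda'$ (the renamed image of $T_1 \lto{\lambda} T_1'$). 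As $C_{\role r}$ is a sub-term of $C$ and its typing is a strict sub-derivation, the induction hypothesis yields $C_{\role r}' : \rename{T_1'}{F}{\tilde x}{\tilde y} \join \bigl(\projc{G}{\role r}{D}{\labenv} \join \rename{T_2}{F}{\tilde x}{\tilde y}\bigr)$; re-associating/commuting and folding the renaming back this is $\projc{G}{\role r}{D}{\labenv} \join \rename{(T_1' \join T_2)}{F}{\tilde x}{\tilde y}$. Since $G$, $D$, $F$ and the remaining sub-components are unchanged, $\m{TCompC}$ re-derives $C' : T_1' \join T_2$.

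\emph{Base case} ($\m{TBaseC}$). Then $C = \chorboxb{\tilde x}{\tilde y}{L}$, $\emptyset \vdash \abs L \conforms T$ and $\tilde x;\tilde y \vdash T$ with $T = T_1 \join T_2$. The interface check for $T_1' \join T_2$ follows from $\tilde x;\tilde y \vdash T$ (its ports are a subset of those of $T$), so everything reduces to a \emph{conformance-swap} sub-lemma: if $\recenv \vdash \absL \conforms T_1 \join T_2$, $\absL \lto{\lambda} \absL'$ agrees with $T_1 \lto{\lambda} T_1'$ (same port, and same branch for choices/branches), then $\recenv \vdash \absL' \conforms T_1' \join T_2$. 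I prove this by an inner induction on the derivation of $T_1 \join T_2 = T$. When the last merge rule exposes the $T_1$-headed action ($\m{MOutL}, \m{MInpL}, \m{MChoiceL}, \m{MBranchL}$), inverting the matching conformance rule directly pins down $\absL$'s move and the residual, and the conclusion follows by definition (re-using both sub-conformances for choices/branches). When it exposes a $T_2$-headed construct ($\m{MOutR}, \m{MInpR}, \m{MChoiceR}, \m{MBranchR}$), I invert the corresponding conformance rule, use the diamond property to commute $\absL$'s $\lambda$-move past that construct's move to a common residual, apply the inner induction hypothesis to the strictly smaller sub-merges (e.g.\ $T_1 \join \hat T_2$ and $T_1 \join \hat T_2'$ for a right choice/branch), and reassemble the conformance derivation using identities such as $\tchoice{z}{(T_1' \join \hat T_2)}{(T_1' \join \hat T_2')} = T_1' \join \tchoice{z}{\hat T_2}{\hat T_2'}$. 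The $\m{MRec}/\m{MVar}$ cases use a standard unfolding argument backed by substitution lemmas for $\join$ and for conformance, and $\m{MEndL}/\m{MEndR}$ are either vacuous or handled directly by the shape of the remaining type.

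I expect the conformance-swap sub-lemma to be the main obstacle. The delicate points are: keeping the concrete-versus-abstract bookkeeping consistent about whether a binder emits a value or a base type (the special handling of $\inl/\inr$ in $\abs{\cdot}$ and in rules $\m{AOutVal}/\m{AOutType}/\m{AConst}$ must agree with the branch that conformance selects); checking that the diamond property really holds in the presence of queues, updates ($\m{LInpUpd}$) and discards ($\m{LInpDisc}$); and verifying, in the choice/branch sub-cases, that commuting an independent action past a selection leaves the selected branch unchanged, so that both sub-conformances can be re-established. The composite case, by contrast, is essentially bookkeeping once commutativity/associativity of $\join$ and its interaction with forwarder renaming are in hand.
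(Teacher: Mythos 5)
Your proposal matches the paper's proof in all essentials: the paper likewise reduces the statement to a swap lemma for conformance of abstract local binders (its Lemma~\ref{lem:swapbase}), itself proved by induction on the merge derivation using the harmony between concrete and abstract binder semantics (Lemma~\ref{lem:harmony}) and the diamond property of abstract binders (Lemma~\ref{lem:diamond}), together with associativity and commutativity of $\join$ (Lemma~\ref{lem:mergeprops}). The only organizational difference is that your outermost induction is on the typing derivation rather than on the derivation of $T = T_1 \join T_2$, which does not change the substance of the argument.
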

\begin{proof}
By induction on the derivation of $T = T_1 \join T_2$, using Lemmas~\ref{lem:swapbase} and~\ref{lem:mergeprops}.
\end{proof}


\begin{lemma}[Swap Base]
\label{lem:swapbase}
If $\emptyset \vdash \abs{ L} \conforms T$ and 
$ L \lto{\lambda}  L'$
and $T = T_1 \join T_2$ and $T_1 \lto{\lambda} T_1'$ then
there is $T'$ such that $T' = T_1' \join T_2$ and 
$\emptyset \vdash \abs{ L'} \conforms T'$.
\end{lemma}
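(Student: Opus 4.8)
The plan is to argue by induction on the derivation of $T = T_1 \join T_2$, with a case analysis on the last merge rule. I would first pass to the abstract level: by the correspondence between the concrete and abstract local-binder transition systems, $L \lto{\lambda} L'$ yields $\abs{L} \lto{\hat\lambda} \abs{L'}$ for the abstracted label $\hat\lambda$ (carried values become their base types, while $\inl$/$\inr$ are either retained or coarsened to $\chot$ exactly as the abstract output rules dictate), and $\hat\lambda$ has the same port and input/output polarity as $\lambda$. It therefore suffices to prove the abstract statement: if $\recenv \vdash M \conforms T_1 \join T_2$, $M \lto{\hat\lambda} M'$ and $T_1 \lto{\lambda} T_1'$ with $\lambda$ and $\hat\lambda$ of the same port and polarity, then there is $T' = T_1' \join T_2$ with $\recenv \vdash M' \conforms T'$ — generalising over the recursion environment $\recenv$ so the induction survives passing under $\m{MRec}$. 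Two facts about abstract local binders are used repeatedly: the abstract LTS is deterministic once the transition label is fixed (a consequence of well-formedness constraint~(2): each output port is the left-hand side of at most one binder, and $\m{LInpList}$ forces every binder to react to an input), and it enjoys a diamond property, transitions on distinct ports commuting.

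When the exposed head of $T$ is contributed by $T_1$ — rules $\m{MOutL}$, $\m{MInpL}$, $\m{MChoiceL}$, $\m{MBranchL}$, and $\m{MEndR}$ (where $T = T_1$) with non-recursive $T_1$ — the transition $T_1 \lto{\lambda} T_1'$ consumes that head, so $T_1' \join T_2$ is exactly the continuation nested inside $T$ and I take $T' := T_1' \join T_2$. Inverting the conformance derivation of $M$ against $T$'s head (for choices, any of $\m{CCho}$, $\m{CChoL}$, $\m{CChoR}$) produces an abstract transition of $M$ on that same port whose target conforms to $T'$; determinism identifies this target with $M'$, so $\recenv \vdash M' \conforms T'$, without invoking the inductive hypothesis. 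The rules $\m{MEndL}$ ($T_1 = \tend$) and $\m{MVar}$ ($T_1$ a bare variable) are vacuous, since neither $\tend$ nor $\recvar X$ has a transition.

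When the head of $T$ instead comes from $T_2$ — rules $\m{MOutR}$, $\m{MInpR}$, $\m{MChoiceR}$, $\m{MBranchR}$ — the transition $\lambda$ acts on a port of $T_1$, which is disjoint from $T_2$'s head port (the merge side condition keeps the port sets of $T_1$ and $T_2$ disjoint), so $\lambda$ must be ``shuffled out''. Writing $S_2$ for the relevant continuation of $T_2$, inversion of conformance gives $M \lto{\nu} M_0$ on $T_2$'s head port with $M_0 \conforms T_1 \join S_2$; the diamond closes the square between $M \lto{\nu} M_0$ and $M \lto{\hat\lambda} M'$, giving $M_1$ with $M_0 \lto{\hat\lambda} M_1$ and $M' \lto{\nu} M_1$; the inductive hypothesis, applied to the strictly smaller merge $T_1 \join S_2$ with the data $M_0 \conforms T_1 \join S_2$, $M_0 \lto{\hat\lambda} M_1$, $T_1 \lto{\lambda} T_1'$, yields $M_1 \conforms T_1' \join S_2$ and the definedness of that merge; re-applying the right-head merge rule shows $T_1' \join T_2$ is defined with continuation $T_1' \join S_2$, and the conformance rule matching $T_2$'s head (fed $M' \lto{\nu} M_1$ and $M_1 \conforms T_1' \join S_2$) gives $M' \conforms T_1' \join T_2$, so $T' := T_1' \join T_2$. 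In the $\m{MChoiceR}$ and $\m{MBranchR}$ sub-cases the same $M_1$ must serve both continuations, which is precisely what the argument delivers, as the diamond fixes $M_1$ independently of the branch.

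I expect the $\m{MRec}$ case — where $T_1 = \mu \recvar X. S_1$, $T_2 = \mu \recvar X. S_2$, and $T_1 \lto{\lambda} T_1'$ unfolds the recursion on the left while $\mu \recvar X. S_2$ stays folded on the right — to be the main obstacle, together with the instances of $\m{MEndR}$ in which $T_1$ is recursive. One has to see that the unfolded $T_1'$ can still be merged with the folded $\mu \recvar X. S_2$ (it can: descending $T_1'$'s prefix through the left-head rules leaves the right operand untouched until the embedded copy of $\mu \recvar X. S_1$ is reached, where $\m{MRec}$ applies again), and then push conformance through, relying on the fact that $\m{CRec}$ and $\m{CVar}$ force the abstract binders to be invariant across one loop iteration. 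I would either strengthen the induction with an invariant relating each entry of $\recenv$ to the current abstract binders, or isolate a separate lemma about conformance under unfolding; in either case the bookkeeping is heaviest here because local types are identified only up to $\alpha$-equivalence, not up to unfolding, so the folded/unfolded distinction has to be threaded explicitly. A minor further nuisance is keeping the static-choice abstraction ($\chot$ versus a settled $\inl$/$\inr$) consistent between the conformance rules and the transition labels in the choice and branch cases, but that is dispatched by the determinism fact.
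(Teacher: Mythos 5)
Your proposal follows exactly the paper's strategy: induction on the derivation of $T = T_1 \join T_2$, using the Harmony lemma (Lemma~\ref{lem:harmony}) to move between concrete and abstract binder transitions and the Swap Abstract diamond (Lemma~\ref{lem:diamond}) to commute the binder's action past the head contributed by $T_2$ in the right-hand merge cases. The paper records no further detail, and your case analysis (including the identification of $\m{MRec}$ as the delicate case) is a sound elaboration of that same argument.
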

\begin{proof}
By induction on the derivation of $T = T_1 \join T_2$, using Lemmas~\ref{lem:harmony} and~\ref{lem:diamond}.
\end{proof}


\begin{lemma}[Swap Abstract]
\label{lem:diamond}
If $ \absL \lto{\lambda_1}  \absL_1$ and $ \absL \lto{\lambda_2}  \absL_2$ then
there is $ \absL_3$ such that $ \absL_1 \lto{\lambda_2}  \absL_3$ and
$ \absL_2 \lto{\lambda_1}  \absL_3$.
\end{lemma}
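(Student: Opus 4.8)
Lemma~\ref{lem:diamond} is a one-step diamond (local confluence) property for the abstract-local-binder transition system of Fig.~\ref{fig:abssem}. My plan is to prove it first for a single binder $\lbinder{y}{\til\sigma}{f(\til x)}$ by case analysis on the two rule derivations, and then lift it to a list $\absL = L_1,\dots,L_n$ by induction on $n$, using that outputs are produced by a unique binder (rule $\m{LOutLift}$ together with structural constraint (2): a port occurs as a left-hand side at most once, so a given output port $y$ belongs to exactly one binder), whereas inputs are dispatched pointwise to \emph{every} binder (rules $\m{LInpList}$/$\m{LInpDisc}$). The crux of the whole argument is that two abstract transitions on \emph{distinct} ports never interfere; this is exactly what is needed downstream, since in Lemma~\ref{lem:swapbase} the two actions $\lambda_1,\lambda_2$ are drawn from the disjoint port sets of the two merged types (so in particular $\lambda_1 \neq \lambda_2$), and coincident input labels are trivial because an input transition is always enabled.

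The two structural facts I would isolate first are: (i) an output transition $\absL \lto{y!\absv} \absL'$ always consumes the \emph{first} store (or no store, for $\m{AConst}$), its label is uniquely determined by the current state, and it is enabled precisely when the domain of the first store covers $\til x$ (or $\til x$ is empty); and (ii) an input transition $\absL \lto{x?\absv} \absL'$ either appends a fresh store $\{x \mapsto \absv\}$ at the tail ($\m{LInpNew}$), writes $x \mapsto \absv$ into the first store whose domain omits $x$ ($\m{LInpUpd}$), or leaves $\absL$ unchanged when $x \notin \til x$ ($\m{LInpDisc}$) — and in every case it never rewrites a store that already contains $x$, hence never one covering all of $\til x$. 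Fact (i) gives determinism of outputs per port; fact (ii) gives that an input on $x$ neither disables a pending output nor is disabled by one, and that the store-position it selects is insensitive to transitions that do not mention $x$.

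The single-binder diamond then splits into the expected cases (focusing on $\lambda_1,\lambda_2$ naming distinct ports). If both are outputs they name the same port and so, by (i), coincide — excluded. If $\lambda_1 = y!\absv$ and $\lambda_2 = x?\absv'$: by (ii) the input does not touch the first store, so the same output remains enabled afterwards; conversely dropping the first store (which contains $x$ whenever $x \in \til x$, the only non-trivial sub-case) does not change which store $\m{LInpUpd}$ selects and can only weaken the $\m{LInpNew}$ side condition $x \in \bigcap_{\sigma_i}\dom(\sigma_i)$; the two orders yield the same $\absL_3$. If $\lambda_1 = x_1?\absv_1$ and $\lambda_2 = x_2?\absv_2$ with $x_1 \neq x_2$: the two writes land on independent positions, the only subtlety being that both may be routed to the same freshly appended tail store, in which case the result is $\{x_1 \mapsto \absv_1, x_2 \mapsto \absv_2\}$ regardless of order; and a prior append for $x_2$ never breaks the $\m{LInpNew}$ side condition for $x_1$, it only forces a subsequent $\m{LInpUpd}$ on the new store with the same net effect. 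Lifting to lists is then componentwise: output/output on distinct ports touch disjoint binders; an input keeps every binder's pending outputs enabled by (ii); two inputs commute binder-by-binder.

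The main obstacle is precisely this bookkeeping in the last two paragraphs: checking that the ``first store omitting $x$'' picked by $\m{LInpUpd}$ is stable under a transition on another port, that $\m{LInpNew}$'s global side condition is only ever weakened (never broken) by such a transition, and handling the corner case where two inputs on different ports are both dispatched to the same newly created tail store. Once those are settled, the rest is routine rule chasing and a straightforward induction on the list of binders.
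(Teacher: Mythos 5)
Your proposal is correct and takes essentially the same route as the paper, whose entire proof is the one-line assertion that the diamond ``follows directly from the semantics of abstract local binders'' --- your case analysis (unique first-store consumption for outputs, input routing via $\m{LInpNew}$/$\m{LInpUpd}$/$\m{LInpDisc}$ never touching a store that already contains the received port, and the componentwise lift through $\m{LOutLift}$/$\m{LInpList}$) is precisely the direct verification that claim presupposes. Your explicit restriction to actions on distinct ports, justified by the disjointness of the port sets of the merged types in Lemma~\ref{lem:swapbase}, is in fact necessary rather than cosmetic: for $\lambda_1=\lambda_2$ a same-port output need not be re-enabled after it has consumed the only complete store, a degenerate case the paper's statement and one-line proof silently gloss over.
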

\begin{proof}
The proof follows directly from the semantics of abstract local binders.
\end{proof}


\begin{lemma}[Harmony of the Abstract Semantics]
\label{lem:harmony}
\begin{itemize}
\item{
$ L \lto{\lambda}  L'$ iff $\abs{ L} \lto{\abs{\lambda}} \abs{ L'}$ when $ y!\inl \neq \lambda \neq  y!\inr$.}

\item{
$ L \lto{y!\inl}  L'$
iff ($\abs{ L} \lto{y!\inl} \abs{ L'}$ or $\abs{ L} \lto{y!\chot} \abs{ L'}$).
}
\item{If
$ L \lto{y!\inr}  L'$ iff ($\abs{ L} \lto{y!\inr} \abs{ L'}$ or $\abs{ L} \lto{y!\chot} \abs{ L'}$).
}
\end{itemize}
\end{lemma}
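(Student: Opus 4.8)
The plan is to prove the three bi-implications by a single simultaneous induction on the derivation of the transition: for the left-to-right directions, induction on the derivation of $L \lto{\lambda} L'$ using the rules of Fig.~\ref{fig:gc_semantics_base} (the $\m{L}$-rules, since $\m{OutBase}$ and $\m{InpBase}$ concern base components, not bare local binders), and for the right-to-left directions, induction on the derivation of the abstract transition using Fig.~\ref{fig:abssem}. The three statements must be proved together because $\m{LOutLift}$ and $\m{LInpList}$ (and their abstract twins) refer to sub-binders, so the induction hypothesis on a premise may land in a different bullet than the conclusion. Here $\abs{\cdot}$ is read on labels in the obvious way: $\abs{x?v} = x?\abs{v}$ and $\abs{y!v} = y!\abs{v}$, where $\abs{v}$ is the base type of $v$ when $v \notin \{\inl,\inr\}$ and is $v$ itself otherwise. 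The case analysis is then essentially a rule-by-rule match, since the concrete and abstract rule sets are nearly in bijection: $\m{LConst}$ with $\m{AConst}$; $\m{LOut}$ with $\m{AOutVal}$ or $\m{AOutType}$; and $\m{LInpNew}$, $\m{LInpUpd}$, $\m{LInpDisc}$, $\m{LOutLift}$, $\m{LInpList}$ with their identically-named abstract counterparts, which differ only in that the transferred datum ranges over $\absv$ (a base type, $\chot$, or an $\inl$/$\inr$ constant) instead of over an arbitrary value.

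The input and structural cases are routine. For $\m{LInpNew}$, and likewise $\m{LInpUpd}$ and $\m{LInpDisc}$, a concrete step $L \lto{x?v} L'$ is matched by the abstract step instantiating $\absv$ with $\abs{v}$; one checks directly from the definition of $\abs{\cdot}$ that the resulting queue is exactly $\abs{L'}$, and since an input label is never an output of $\inl$ or $\inr$, this stays inside the first bullet. Conversely an abstract input step fixes the shape of $L'$ up to the concretely received datum, and every concrete value of the relevant base type (resp. the constant $\inl$/$\inr$) gives the same $\abs{L'}$. For $\m{LOutLift}$ and $\m{LInpList}$ one applies the induction hypothesis to the premise(s) and re-applies the matching abstract rule, using $\abs{L_1, L_2} = \abs{L_1}, \abs{L_2}$ and the fact that $\abs{\cdot}$ commutes with removing/updating a store.

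The heart of the proof, and the reason for the three-bullet shape, is $\m{LOut}$ against $\m{AOutVal}$/$\m{AOutType}$ (with $\m{LConst}$ vs.\ $\m{AConst}$ as the degenerate nullary instance). A step $\lbinder y{\sigma,\til\sigma}{f(\til x)} \lto{y!v} \lbinder y{\til\sigma}{f(\til x)}$ requires $\{\til x\} = \dom(\sigma)$ and $f(\sigma(\til x)) \eval v$. Abstracting the binder turns every non-$\inl$/$\inr$ entry of $\sigma$ into its base type and leaves $\inl$/$\inr$ entries intact, so two situations occur. If all entries of $\sigma$ relevant to $\til x$ are $\inl$/$\inr$ constants, $f$ still evaluates on the abstract store to $v$ and $\m{AOutVal}$ gives $\abs{L} \lto{y!v} \abs{L'}$; this is the left disjunct of the second or third bullet. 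Otherwise some relevant entry became a base type, $f$ no longer evaluates statically, $\m{AOutType}$ gives $\abs{L} \lto{y!B} \abs{L'}$ with $B$ the declared return type of $f$, and — relying on the natural assumption that function evaluation respects function typing — $v$ has type $B$, so $\abs{y!v} = y!B$ precisely matches the first bullet when $B$ is a base type and matches the $\chot$-disjunct of the second or third bullet when $v \in \{\inl,\inr\}$. The right-to-left directions invert this: from $\m{AOutType}$ (resp. $\m{AConst}$) the concrete store's relevant entries are values of the declared argument types, and since functions are assumed to compute whenever supplied with their parameters, $f$ evaluates to some $v$ of its return type, yielding the matching $\m{LOut}$ (resp. $\m{LConst}$) step; from $\m{AOutVal}$ the concrete $\m{LOut}$ step fires on the very same $\inl$/$\inr$ store entries.

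I expect the main obstacle to be exactly this output case: keeping the bookkeeping aligned between (i) the abstraction of stores, (ii) whether $f$ evaluates statically under that abstraction, and (iii) the abstraction of the emitted label, and in particular making $\chot$ behave so that the two disjuncts in the second and third bullets cover precisely the concrete $\inl$/$\inr$ outputs while the first bullet covers everything else. This is also where the type-soundness assumption on function evaluation is genuinely needed; the remaining cases are a mechanical rule-by-rule correspondence.
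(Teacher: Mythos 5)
Your proposal is correct and is essentially the paper's own argument spelled out: the paper's entire proof is the one-liner ``follows directly by the definition of $\abs{}$'', and your rule-by-rule induction --- with the near-bijection between the concrete and abstract rule sets, the $\m{LOut}$-versus-$\m{AOutVal}$/$\m{AOutType}$ split driven by whether the abstracted store still lets $f$ evaluate, and the explicitly flagged assumption that evaluation respects the declared return type of $f$ --- is exactly the content that one-liner stands in for. The only spot worth double-checking is the $\m{AOutVal}$ sub-case where every relevant store entry is $\inl$/$\inr$ yet $f$ returns a non-choice value (your text silently routes that case to the second/third bullet); the paper's definition of $\abs{}$ on labels is equally silent there, so this is an ambiguity inherited from the statement rather than a gap introduced by your proof.
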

\begin{proof}
The proof follows directly by the definition of $\abs{}$.
\end{proof}


\begin{lemma}[Merge is Associative and Commutative]
If $T = T_1 \join T_2$:
\label{lem:mergeprops}
\begin{itemize}
\item{
and $T_2 = T_2' \join T_2''$ then
there is $T_3$ such that $T_3 = T_1 \join T_2'$ and
$T = T_3 \join T_2''$.}
\item{then $T = T_2 \join T_1$.}
\end{itemize}
\end{lemma}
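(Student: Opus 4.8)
The plan is to prove both statements by induction on the derivation of $T = T_1 \join T_2$, reading off the last rule applied (the rules are those of Figure~\ref{fig:merge}). For commutativity this is almost mechanical: each left-rule $\m{MOutL}$, $\m{MInpL}$, $\m{MChoiceL}$, $\m{MBranchL}$ has a mirror-image right-rule $\m{MOutR}$, $\m{MInpR}$, $\m{MChoiceR}$, $\m{MBranchR}$, and vice versa; $\m{MEndL}$ mirrors $\m{MEndR}$; $\m{MRec}$ and $\m{MVar}$ are already symmetric. So in each case I apply the induction hypothesis to the premise(s) — turning $T_i \join T_j = T_k$ into $T_j \join T_i = T_k$ — and then close with the mirror rule. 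The side conditions ($\mathit{fv}(T) = \emptyset$ in the $\m{MEnd}$ cases, the disjointness of port sets which is assumed globally for the operation) are symmetric in the two arguments, so they transfer directly.

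For associativity the core work is a case analysis on the \emph{pair} of last rules: the one deriving $T = T_1 \join T_2$ and the one deriving $T_2 = T_2' \join T_2''$. The clean cases are where the outermost action of $T$ comes from $T_1$ (rules $\m{MOutL}$, $\m{MInpL}$, $\m{MChoiceL}$, $\m{MBranchL}$ on the first derivation): here $T_2$, $T_2'$, $T_2''$ are untouched at the top level, I apply the induction hypothesis to the continuation(s) of $T_1$ merged with $T_2 = T_2' \join T_2''$ to obtain the witness $T_3$'s continuation, and reassemble. The symmetric family is where the outermost action comes from $T_2$ (rules $\m{MOutR}$ etc.\ on the first derivation): then that same top-level constructor of $T_2$ must itself have been produced by the second derivation — either as an $\m{...L}$ rule (the action came from $T_2'$) or an $\m{...R}$ rule (it came from $T_2''$) — and I recurse into the matching premises, in the first sub-case producing a $T_3$ that still has the action as an $\m{...R}$, in the second sub-case a $T_3$ with no new top action and $T = T_3 \join T_2''$ keeping it. The base cases $\m{MEndL}/\m{MEndR}$, $\m{MVar}$, and the inductive $\m{MRec}$ case are handled by matching up the $\tend$ / variable / $\mu$-binder on both sides and using the freshness side conditions; $\m{MRec}$ just pushes the induction through the recursion body.

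The main obstacle is the bookkeeping in the choice/branch cases of associativity, because rules $\m{MChoiceL}$ and $\m{MBranchL}$ duplicate the ``other'' argument into \emph{both} continuations. When the outermost constructor of $T$ is, say, a branch coming from $T_2$ via $\m{MBranchR}$ ($T_1 \join \tbranch{z}{A}{B} = \tbranch{z}{A'}{B'}$ with $T_1 \join A = A'$, $T_1 \join B = B'$), and $T_2 = \tbranch{z}{A}{B}$ in turn arose from $T_2' \join T_2''$, I must be careful to apply the induction hypothesis \emph{separately} in the two branches and check that the two witnesses glue into a single well-formed $T_3$ with the correct branch structure — the side condition on $\m{SBra}$-style rules (equal port) is automatic here since it is the same $z$ throughout. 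A secondary subtlety: associativity must handle the situation where $T_2'$ and $T_2''$ contribute actions on disjoint ports interleaved arbitrarily, so the induction has to be on the \emph{size} of the derivation of $T_2 = T_2' \join T_2''$ as well, or (equivalently) a lexicographic measure on the two derivations; I would state the induction on the sum of the sizes of the two given derivations to make the decreasing measure explicit. Everything else is routine rule-matching.
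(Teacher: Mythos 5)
Your proposal follows exactly the paper's (one-line) proof, which is by induction on the two derivations $T = T_1 \join T_2$ and $T_2 = T_2' \join T_2''$; your case analysis on the pair of last rules, the mirror-rule argument for commutativity, and the lexicographic/size measure are just the expected elaboration of that induction. The subtleties you flag (duplication of the other argument in the choice/branch rules, gluing the branch-wise witnesses into one $T_3$) are real but resolvable, and the paper's proof glosses over them entirely, so your plan is at least as complete as the original.
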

\begin{proof}
By induction on the derivation of $T = T_1 \join T_2$ and $T_2 = T_2' \join T_2''$.
\end{proof}

\end{document}